  \providecommand\BibTeX{{%
    \normalfont B\kern-0.5em{\scshape i\kern-0.25em b}\kern-0.8em\TeX}}}
    \newtheorem{theorem}{Theorem}
 \newtheorem{proposition}{Proposition}
 \newtheorem{definition}{Definition}
\newtheorem{example}{Example} 
\newtheorem{problem}{Problem} 
\newtheorem{remark}{Remark}
	\definecolor{caribbeangreen}{rgb}{0.486 0.73 0}
	\definecolor{cblue}{rgb}{0,0.63,0.945}
	\definecolor{cred}{rgb}{0.964 0.325 0.078}
\renewcommand\footnotetextcopyrightpermission[1]{}
\newcommand{\oomit}[1]{}
\begin{document}

\title{Switching Controller Synthesis for Delay Hybrid Systems\\ under Perturbations}

%%
%% The "author" command and its associated commands are used to define
%% the authors and their affiliations.
%% Of note is the shared affiliation of the first two authors, and the
%% "authornote" and "authornotemark" commands
%% used to denote shared contribution to the research.

\author{Yunjun Bai}
\email{baiyj@ios.ac.cn}
\orcid{1234-5678-9012}
\affiliation{
  \institution{SKLCS, Institute of Software, CAS \\
   \& Univ. of CAS, Beijing, China}
  %\city{Beijing,China}
}

\author{Ting Gan}
\email{ganting@whu.edu.cn}
\affiliation{%
  \institution{Wuhan  University}
  \city{Wuhan, China}
}

\author{Li Jiao}
\email{ljiao@ios.ac.cn}
\affiliation{%
  \institution{SKLCS, Institute of Software, CAS \\
   \& Univ. of CAS, Beijing, China}
 % \city{Beijing,China}
}

\author{Bican Xia}
\email{xbc@math.pku.edu.cn}
\affiliation{
  \institution{Peking University}
  \city{Beijing,China}
}

\author{Bai Xue}
\email{xuebai@ios.ac.cn}
\affiliation{
 \institution{SKLCS, Institute of Software, CAS \\
  \&Univ. of CAS, Beijing, China}
 %\city{Beijing,China}
}

\author{Naijun Zhan} 
\email{znj@ios.ac.cn}
\affiliation{
  \institution{SKLCS, Institute of Software, CAS \\
   \&Univ. of CAS, Beijing, China}
  %\city{Beijing, China}
 }

%%
%% By default, the full list of authors will be used in the page
%% headers. Often, this list is too long, and will overlap
%% other information printed in the page headers. This command allows
%% the author to define a more concise list
%% of authors' names for this purpose.
%\renewcommand{\shortauthors}{Trovato and Tobin, et al.}

%%
%% The abstract is a short summary of the work to be presented in the
%% article.

\begin{abstract}
    Delays are ubiquitous in modern hybrid systems, which exhibit both continuous and discrete dynamical behaviors. Induced by signal transmission, conversion, the nature of plants, and so on, delays may appear either in the continuous evolution of a hybrid system such that the evolution depends not only on the present state but also on its execution history, or in the discrete switching between its different control modes. In this paper we come up with a new model of hybrid systems, called \emph{delay hybrid automata}, to capture the dynamics of systems with the aforementioned two kinds of delays. Furthermore, based upon this model we study the robust switching controller synthesis problem such that the controlled delay system is able to satisfy the specified safety properties regardless of perturbations. To the end,  a novel method is proposed to synthesize switching controllers based on the computation of differential invariants for continuous evolution and backward reachable sets of discrete jumps with delays. Finally, we implement a prototypical tool of our approach and demonstrate it on some case studies.
\end{abstract}

%%
%% The code below is generated by the tool at http://dl.acm.org/ccs.cfm.
%% Please copy and paste the code instead of the example below.
%%
\begin{CCSXML}
<ccs2012>
   <concept>
       <concept_id>10002978.10002986.10002989</concept_id>
       <concept_desc>Security and privacy~Formal security models</concept_desc>
       <concept_significance>500</concept_significance>
       </concept>
   <concept>
       <concept_id>10002978.10002986.10002990</concept_id>
       <concept_desc>Security and privacy~Logic and verification</concept_desc>
       <concept_significance>500</concept_significance>
       </concept>
 </ccs2012>
\end{CCSXML}

\ccsdesc[500]{Security and privacy~Formal security models}
\ccsdesc[500]{Security and privacy~Logic and verification}

%% Keywords. The author(s) should pick words that accurately describe
%% the work being presented. Separate the keywords with commas.
\keywords{
    Delay hybrid systems, delay differential equations, differential invariants, switching controllers, safety}

%%
%% This command processes the author and affiliation and title
%% information and builds the first part of the formatted document.
\maketitle
\pagestyle{plain}

\section{Introduction}
\oomit{
\begin{quote}
 ``You suddenly notice a ball flying toward your head. Your first reaction happens after
a delay. To avoid the ball, you must consider where your head will be after its delayed
response in relation to where the ball will be. '' 
\end{quote}
%\\[5pt]
\rightline{{\rm --- \emph{Steven A. Frank, Chapter 13 of Control Theory Tutorials.}}}
\vspace{1.5mm}}

With the broad applications of cyber-physical systems (CPS) in our daily life, 
the correct design of reliable CPS is getting increasingly important, especially in safety-critical domains such as automotive, med\-icine, etc. Due to the bidirectional 
conversion between analog and digital signals, the periodicity of collecting data by sensors, and executing the commands by actuators, and the data transmission through networks with different bandwidths, etc., 
time delay is becoming ubiquitous and inevitable in CPS, giving rise to the difficulty of CPS design, 
as delays may invalidate the certificates of stability and safety obtained with abstracting them away, even well annihilate control performance.

Generally, two kinds of delays appear commonly in CPS. One is in continuous evolution of systems, resulting in that the evolution not only depends on the current state, but also on the historical states. 
\oomit{For example, the dynamics of the density of certain blood cells are affected by the process of cellular production in the bone marrow, which has been  modelled by the famous Mackey-Glass equation \cite{glass1988clocks}, a nonlinear delay differential equation (DDE).} 
As an appropriate generalization of ordinary differential equations (ODEs), %within the differential equations framework, 
delay differential equations (DDEs) are widely used to capture time-delay continuous  dynamical systems. The other one occurs at discrete jumps between different control modes of the underlying systems. 

\oomit{Recently, more and more research has been studied on the verification and design of delay systems.
In \cite{pola2010symbolic} and \cite{pola2015symbolic},  symbolic abstraction-based methods are introduced for time-invariant and time-varying delay systems by approximate functional space using spline analysis,
however, (in)stability of the concrete systems may be hidden.
Considering the stability of systems, the barrier certificate method has been used to verify delay systems in \cite{prajna2005methods}, with the limitation of given 
specified polynomial templates. 
In \cite{huang2017}, the authors propose a bounded verification method for nonlinear networks with discrete delays. 
However, the dynamics of each subsystem  modelled by ODEs and the analysis is done over a finite time horizon. 
In addition, the well-developed and commonly used methods in industry, numerical simulation, fails to guarantee the satisfaction of desired properties such as safety and reachability. 
Due to the fact that the systems with delays have states of infinite dimension, the methods developed for ODEs or hybrid automata without delays are not able to handle DDEs or delay hybrid systems directly.
 In all these existing works, only one kind of delay is considered, either continuous or discrete delay.
Furthermore, there is still a lack of  appropriate formal models to handle both situations in one system.  }

\oomit{Appropriate switching guard conditions  play an important role in correctness guarantees regarding the dynamics of undergoing safety-critical hybrid systems.}

In this paper, we propose a new model of hybrid systems, called \emph{delay hybrid automata} (dHA), which is an extension of classical hybrid automata (HA)  \cite{henzinger1995s}, in order to capture the dynamics of systems involving the aforementioned two kinds of delays. Based on the proposed dHA, we investigate the safe switching controller synthesis problem for delay hybrid systems, i.e., given a dHA $\mathcal{H}$ and a safety property $\mathcal{S}$, to synthesize a refined dHA 
  $\mathcal{H}^*$ by strengthening the invariant in each mode and the guard condition for each discrete jump 
  such that $\mathcal{H}^*$ satisfies $\mathcal{S}$ robustly, with additional condition that $\mathcal{H}^*$ is \emph{non-blocking} if $\mathcal{H}$ is non-blocking. Our approach is invariant-based, %\todo{BX: What is the difference between this invariant and the invariant in "strengthening the invariant in each mode"?} 
 which is  %\todo{BX: is the computational method similar?} 
 a classical approach to synthesizing safe switching controllers   
   for HA  \cite{871306, zhao2013synthesizing}. However,  the computation of differential invariants (the definition will be given in Section~\ref{sec:invariant})
    %\todo{Please give the definition of differential invariants} 
    for the DDE in each mode as well as a global invariant (the definition will be given in Section~\ref{sec:framework}) among these modes is 
    much involved than the counterparts in HA when the two kinds of delays are considered. 
    %, which are formulated on the computed differential invariants and guards. 
   To compute differential invariants for DDEs, we propose 
   a two-step approach: the first step is to reduce 
   differential invariant generation problem to $T$-differential invariant generation problem 
   using global ball-convergence condition derived in terms of Metzler matrix for a class of linear DDEs, where $T$ is a bounded time horizon; the second step is to obtain an over-approximation of  the $T$-bounded reachable set  
    based on the \emph{growth bound} adapted from \cite{Reissig2016}. Non-linear DDEs can be reduced to the linear case 
    by means of the linearization technique, in case that global ball-convergence is replaced 
    by local ball-convergence. 
    A global  invariant is generated based on fixed  point iteration, and 
    the computation of differential invariants for continuous evolution in each mode 
    and backward reachable sets for discrete jumps by taking delays into account, 
    which  is similar to compute reachable sets of HA, e.g., with  
     dReach \cite{Kong15}. Our approach is finally illustrated on some interesting case studies. 

The main contributions of this work are summarized below:
\begin{itemize}
    \item[(1)]  a new model language, called dHA, is proposed to model delay hybrid systems, which exhibit delays in both continuous- and discrete-time dynamics. 
    \item[(2)] in this new model dHA, a novel approach based on the computation of differential invariants is proposed to address the switching controller synthesis problem for delay hybrid  systems, such that the controlled delay hybrid system is able to satisfy the specified safety property. 
\end{itemize}

\subsection{Related Work}
%synthesis for network /switched system/etc.
%synthsis for hybrid automata
 Controller synthesis through correct-by-construction manner 
 provides mathematical guarantees to the correctness and reliablity of  (hybrid) 
systems. 
In the literature, this problem has been extensively studied and 
various approaches have been proposed, which can be 
categorized into abstraction based, e.g., 
\cite{TabuadaBook,BeltaBook,Girard12,Reissig2016,DBLP:journals/deds/NilssonOL17,HSCC2018}, 
and constraint solving based, e.g., \cite{zhao2013synthesizing,TalyT10}. 
The basic idea of abstraction based approaches is to  abstract the original system under consideration to a finite-state two-players  game, and  then solve   reactive  synthesis  using automata-theoretic  algorithms  with  respect  to temporal  control  objectives. In contrast, 
the basic idea of  constrains solving based approaches is to  reduce the synthesis problem to 
an invariant generation problem, which can be further reduced to a constraint solving problem. 
As a generalization of \cite{TalyT10},  an optimal switching controller synthesis 
is investigated  in \cite{6064517} by solving an unconstrained numerical optimization problem.
Based on reachable set computation and fixed point iteration, a general framework of controller synthesis for HA is proposed in \cite{871306, 871303}.
However, all these existing 
works focus on ODEs, therefore cannot be applied to 
 DDEs, let alone delay hybrid systems directly. This is because   
  ODEs  are Markovian,  
 but DDEs are non-Markovian, whose states are functionals  with infinite dimension.
In \cite{Chen18,Chen20}, a controller synthesis problem for time-delay discrete dynamical systems  was first investigated by reduction to solving  
imperfect two-player safety game, but 
it is unclear whether their approach can be extended to time-delay continuous 
dynamical systems and delay hybrid systems. 

%invariant// reachable set for DDE
Recently, verification and synthesis for time-delay systems attract 
increasing attention, we just name a few below.  
Prajna and Jadbabaie extended the notion of \emph{barrier certificate} to 
time-delay systems  \cite{PJ05}. In \cite{Zou15}, Zou et al. 
first proposed interval Taylor model for DDEs, and then  discussed automatic stability analysis and 
safety verification  based on interval Taylor model  and stability analysis of discrete dynamical systems. However, their approach can only be applied to specific DDEs, 
  whose right sides are independent of current states. Following this line, more efficient algorithms for analyzing Taylor models to inner and outer approximate reachable sets of more general DDEs in finite time horizon were 
given  \cite{goubault2019inner}. In \cite{feng2019taming}, 
Feng et al. further considered how to utilize stability analysis of linear delay dynamical 
systems and linearization  to reduce the unbounded verification  
to the bounded verification for a class of general DDEs. 
Based on  \cite{feng2019taming}, \cite{Bai2021} investigated switching controller synthesis problem 
of delay hybrid systems, in which time-delay in discrete jumps is not taken into account.
In contrast, the approach proposed in this paper 
can compute differential invariants for DDEs using ball-convergence based on Metzler matrix analysis,  growth bound and linearization, it could be more powerful and applied to 
verify more DDEs (see Example~\ref{ex:heating}). %typically, two kinds of delays are considered uniformlly. 
In \cite{Chen16}, a simulation-based approach to approximate reachable sets of ODEs was 
extended to DDEs.  Meanwhile,  a topological homeomorphism-based approach was proposed to over- and under-approximate reachable sets of a class of DDEs  \cite{xue2017safe}. 
Later, this approach was further extended to deal with perturbed DDEs in \cite{xue2020}.
Like \cite{goubault2019inner}, these approaches can only be applied to compute 
reachable sets in finite time horizon. 
 In addition, in \cite{pola2010symbolic,pola2015symbolic}, Pola et al. proposed approaches how to construct symbolic abstractions  for time-invariant and time-varying delay systems by approximating  functional space using spline analysis.
In \cite{huang2017}, Huang et al.  proposed a bounded verification method for nonlinear networks with discrete delays. 
Nonetheless, the dynamics of each subsystem  modelled by ODEs and the analysis is done over a finite time horizon. 
Evidently, only one kind of delays is considered in all these existing works,  either continuous or discrete.
There is indeed a lack of  appropriate formal models to handle both situations uniformly. 
% In addition, the well-developed and commonly used methods in industry, numerical simulation, fails to guarantee the satisfaction of desired properties such as safety and reachability. 

\subsection{A Motivating Example}
\begin{figure}[t]
    \centering
    \includegraphics[width=4.5cm, height=4cm]{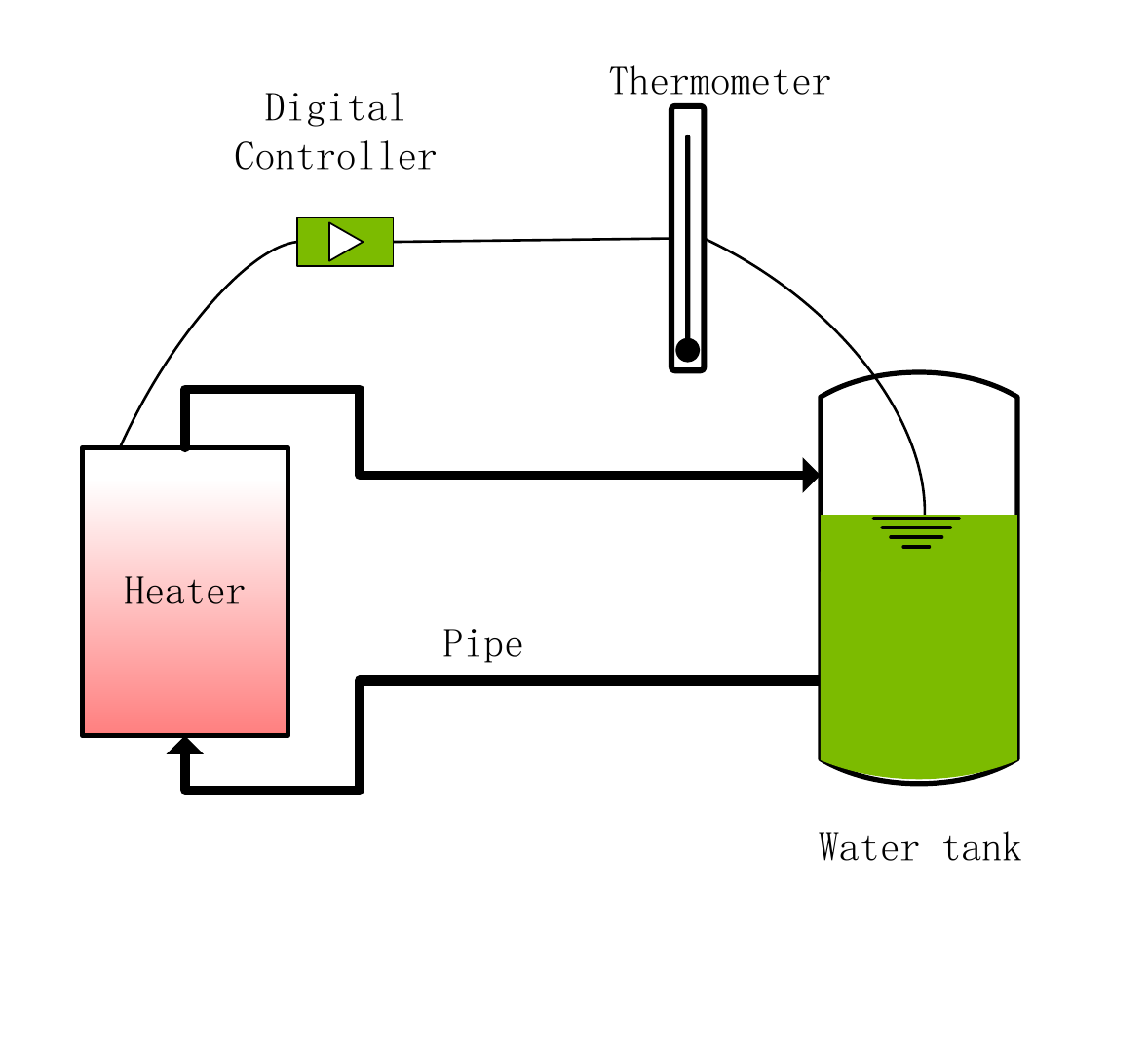}
    \vspace*{-7mm} 
    \caption{A heating system}
     \label{fig:heating}
\end{figure}
To illustrate the main idea of our approach, we use a  heating system
as a motivating example, as depicted in Fig.~\ref{fig:heating}, consisting of  the following four components:
\begin{itemize}
    \item[(1)] a water tank with water,
    \item[(2)] a heater with on and off two states,
    \item[(3)] a thermometer monitoring the temperature of the water in the tank, and 
echoing  warning signals whenever the temperature of the water is above or below certain thresholds,
    \item[(4)] pipes connecting the heater and the tank.
\end{itemize}
Additionally, we add a controller that observes the signals 
produced in the thermometer, and computes a command to the heater in order to
maintain the temperature of the water within a given range.
The temperature of water in the tank is desired to stay between $20$ and $90$ degrees through switching the heating on and  off.
The behavior of the  temperature of water  in the tank is mixed continuous evolution 
with discrete switches, which can be modelled  by a hybrid automaton \cite{ALUR19953}.
However, the delay impact of pipes and thermometer monitoring are both neglected in
these models. 
In \cite{kicsinya2012heating}, it was pointed out that 
energy efficiency can be  increased by $5-10\%$ if the delay impact of pipes
is considered. 
Moreover, due to the delay possibly caused by 
measuring the thermometer, sending the signals, 
executing the control commands and so on, the temperature of 
water in the tank could be beyond the thresholds, which is definitely unsafe. 
Therefore, the delay impacts  of the pipes and the thermometer have to be
 taken into account  when 
we model  the temperature of water in the tank.
%So, we propose  a new model called \emph{delay hybrid automata}, 
%which is able to model the delays in both continuous and discrete events of given systems.
%We introduce the details in the following.

\subsection{Basic Notations and Definitions}\label{sec:pre}
\textbf{Notations.}
Let $\mathbb{N}$, $\mathbb{R}$ and $\mathbb{C}$ be the set of natural, real and complex numbers, 
$\mathbb{R}_+$ be  the set of positive real numbers.
For $z=a+ib\in \mathbb{C}$ with $a,b\in \mathbb{R}$, $\Re(z)=a$ and $\Im(z)=b$, respectively, denote the real and imaginary parts of $z$.
$\mathbb{R}^n$ is the set of $n$-dimensional real vectors, denoted by boldface letters.
Given a vector $\mathbf{x}\in \mathbb{R}^n$, $x_i$ denotes the $i$-th coordinate of $\mathbf{x}$
for $i\in \{1,2,\ldots,n \}$, and its maximal norm is $\|\mathbf{x}\|_{\infty}=\max_{1\leq i \leq n} |x_i|$.
For a vector $\mathbf{y} \in \mathbb{R}^n_+$, let $(\mathbf{y})_{\min}=\min_{1\leq i \leq n} y_i$.
Given two vectors $\mathbf{x} , \mathbf{y} \in \mathbb{R}^n$, we define $\mathbf{x}\geq \mathbf{y}$ iff  $x_i \geq y_i$ for all $1\leq i \leq n$,
and   $\mathbf{x} > \mathbf{y}$ iff  $x_i > y_i$ for all $1\leq i \leq n$.
Given $\epsilon >0$, we define $\mathfrak{B}(\epsilon) = \{ \mathbf{x}\in \mathbb{R}^n
\mid \|\mathbf{x}\|_{\infty} \leq \epsilon \}$ as the $\epsilon$-closed ball around $\mathbf{0}$.
Let $\mathbb{R}^{n\times m}$ be the set of real $n \times m$ matrices. 
The entry in the $i$-th row and $j$-th column of a matrix $M\in \mathbb{R}^{n\times m}$ is denoted as $m_{ij}$ with $1\leq i \leq n$ and $1\leq j\leq m$.
\oomit{For any matrix $M \in \mathbb{R}^{n\times n}$, the spectrum of $M$ is denoted by 
$\delta(M) = \{\lambda \in \mathbb{C} \mid \det(\lambda I -M)=0 \}$,
where $I$ is the $n\times n $ identity matrix.
And the spectral abscissa  of $M$ is defined by $\mu(M) =  \max\{ \Re(\lambda) \mid\lambda \in \delta(M)\}$.}
For $t_1 < t_2$, $\mathcal{C}\{ [t_1, t_2], \mathbb{R}^n\}$ is the space of continuous functions from $[t_1, t_2]$ to $\mathbb{R}^n$.
For a set $\mathbb{A} \subseteq \mathbb{R}_+^{n}$, $\mathbf{a} = \sup \mathbb{A}$ iff for all $\mathbf{x} \in \mathbb{A}$, $\mathbf{x} \leq \mathbf{a}$, and for any  upper bound $\mathbf{y} \in \mathbb{R}_+^n$, then $\mathbf{y}\geq \mathbf{a} $.
Finally, we denote $(x)^+ = \max(0,x)$ for any real number $x\in \mathbb{R}$. 

In this paper, we consider a class of time-delay systems under perturbations described as follows:
    \begin{equation}\label{dde:DDE} 
      \hspace*{-3mm}   \left\{  
                     \begin{array}{lr}  
                     \dot{\mathbf{x}}(t)=\bm{f}(\mathbf{x}(t), \mathbf{x}(t-r_1), \ldots, \mathbf{x}(t-r_k), \mathbf{w}(t)), \hspace*{-2mm}  & \hspace*{-2mm}  t\in [0,\infty) \\  
                     \mathbf{x}(t)= \bm{\phi}(t) , & t\in [-r_k,0]
                     \end{array}  
        \right.  
    \end{equation}
where $\mathbf{x} \in \mathbb{R}^n$ is the state vector, $t\in \mathbb{R}$ models time, 
The discrete delays are assumed to satisfy $0 < r_1 < r_2 < \cdots < r_k$.
$\mathbf{w}(\cdot): [0,\infty)\mapsto \mathbb{R}^m$ is external disturbance vector, which is unknown but assumed to be bounded by 
a given constant $w_{max}$, i.e., $\| \mathbf{w}(t) \|_{\infty} \leq w_{max}$ for all  $t\geq 0$.
 $\bm{\phi}(\cdot)\in \mathcal{C}\{ [-r_k, 0], \mathbb{R}^n\}$ is the initial condition. 
Suppose that $\bm{f}$ is continuous and satisfies the Lipschitz condition, 
then from a given initial condition $\bm{\phi}$ and $\mathbf{w}(t)$, there exists a unique solution $\bm{\xi_{\phi}}^{\mathbf{w}}(\cdot): [-r_k, \infty)\mapsto \mathbb{R}^n$.

\begin{definition}[Metzler matrix\cite{berman1994nonnegative}]\label{def:metzler}
A matrix $M \in \mathbb{R}^{n\times n}$ is called a Metzler matrix if all off-diagonal elements of $M$ are non-negative, 
i.e.,
$m_{ij} \geq 0$ whenever $i\neq j$. % and moreover $M$ satisfies one of 
\end{definition}

Regarding Metzler matrices, the following proposition holds, please refer to \cite{berman1994nonnegative} for the detail.
\begin{proposition}[\cite{berman1994nonnegative}] \label{pro:Metzler}
For any Metzler matrix $M$, 
the following two properties are equivalent % (if one of both holds, then the other holds as well): 
\begin{itemize}
\item[1.] $\mu(M) <0$, where $\mu(M)=\max\{ \Re(\alpha) \mid \alpha \in \mathbb{C}: \det(\alpha \mathcal{I}-M)=0)\}$, $\mathcal{I}$ is the $n\times n$ identity matrix.
%\item[1.] $M$ is invertible and $ M ^{-1} \leq 0$.
\item[2.] there exists $\bm{\zeta} \in \mathbb{R}^n$ and $\bm{\zeta} > \bm{0}$ such that $ M \bm{\zeta}< \bm{0}$.
\end{itemize}
\end{proposition}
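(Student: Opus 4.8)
The plan is to reduce this claim about the spectral abscissa of the Metzler matrix $M$ to a statement about the spectral radius of an associated non-negative matrix, and then invoke Perron--Frobenius theory. First I would fix a scalar $s > 0$ large enough that every diagonal entry of $M + s\mathcal{I}$ is non-negative; since $M$ is Metzler its off-diagonal entries are already non-negative, so $N := M + s\mathcal{I}$ is entrywise non-negative. The eigenvalues of $M$ are precisely those of $N$ shifted by $-s$, and by Perron--Frobenius the eigenvalue of $N$ of largest real part is its spectral radius $\rho(N)$, which is real and non-negative. Hence $\mu(M) = \rho(N) - s$, so property~1 is equivalent to $\rho(N) < s$. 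Rewriting property~2 in terms of $N$, the inequality $M\bm{\zeta} < \bm{0}$ for $\bm{\zeta} > \bm{0}$ is the same as $N\bm{\zeta} < s\bm{\zeta}$. Thus the whole proposition reduces to showing that, for a non-negative matrix $N$, one has $\rho(N) < s$ if and only if there exists a strictly positive vector $\bm{\zeta}$ with $N\bm{\zeta} < s\bm{\zeta}$.

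For the implication from property~2 to property~1 I would use the Collatz--Wielandt bound: if $\bm{\zeta} > \bm{0}$ satisfies $N\bm{\zeta} < s\bm{\zeta}$, then each ratio $(N\bm{\zeta})_i/\zeta_i$ is strictly below $s$, so $\rho(N) \le \max_i (N\bm{\zeta})_i/\zeta_i < s$, giving $\mu(M) < 0$ at once. For the converse, assuming $\rho(N) < s$ I would exploit $\rho(N/s) < 1$, so that the Neumann series $\sum_{k \ge 0}(N/s)^k$ converges to $s(s\mathcal{I} - N)^{-1}$. Since $N \ge \bm{0}$, every term of this series is non-negative, hence $(s\mathcal{I} - N)^{-1} \ge \bm{0}$, and the $k=0$ term is the identity. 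Choosing any $\bm{b} > \bm{0}$ and setting $\bm{\zeta} := (s\mathcal{I} - N)^{-1}\bm{b}$, the leading identity term forces $\bm{\zeta} \ge \tfrac{1}{s}\bm{b} > \bm{0}$, while $(s\mathcal{I} - N)\bm{\zeta} = \bm{b} > \bm{0}$ by construction, i.e. $N\bm{\zeta} < s\bm{\zeta}$, equivalently $M\bm{\zeta} < \bm{0}$. This produces the required witness.

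The step I expect to be the main obstacle is this converse direction: one must not merely produce some eigenvector of $N$, but a vector that is genuinely strictly positive in every coordinate, and a naive appeal to the Perron eigenvector only yields strict positivity when $N$ is irreducible. The Neumann-series construction above deliberately sidesteps any irreducibility hypothesis, which is why I would favour it over a direct eigenvector argument; the care required is in justifying both the convergence of the series (guaranteed precisely by $\rho(N) < s$) and the entrywise non-negativity of $(s\mathcal{I} - N)^{-1}$, together with the strict positivity extracted from the leading identity term. Once these standard facts about non-negative matrices are established, the equivalence follows cleanly in both directions.
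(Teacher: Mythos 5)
Your proof is correct and complete. The paper itself gives no proof of this proposition --- it is quoted verbatim from the cited reference on nonnegative matrices --- so there is nothing in the text to compare against; but your argument (shift $M$ to a nonnegative matrix $N=M+s\mathcal{I}$, identify $\mu(M)=\rho(N)-s$ via Perron--Frobenius, get one direction from the Collatz--Wielandt bound and the other from the entrywise nonnegativity of $(s\mathcal{I}-N)^{-1}$ supplied by the Neumann series) is exactly the standard route through the equivalent characterizations of nonsingular M-matrices in that reference, and your care in avoiding any irreducibility assumption is precisely what makes the converse direction go through for general Metzler $M$.
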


The structure of this paper is organized as: 
%In Section \ref{sec:pre}, the basic notions and delay dynamic systems are introduced.
the notion of delay hybrid automata and the safe switching controller synthesis problem
of interest are defined in Section \ref{sec:da}.
After presenting an approach for invariant generation of delay hybrid systems in Section \ref{sec:invariant}, Section \ref{sec:synthesis} concentrates on the controller synthesis framework based on the global invariants generation  for  delay hybrid systems. We demonstrate our approach with  two examples in Section \ref{sec:expri}. Finally Section \ref{sec:conclu} concludes this paper.

\section{Delay Hybrid Automata and Problem Statement}\label{sec:da}
Hybrid automata (HA) \cite{henzinger1995s} are popular models for dynamical systems with complex mixed continuous-discrete behaviors. In order to characterize  behaviors of 
 hybrid systems with 
the two type of time delays aforementioned, we introduce an extension of HA,  called \emph{delay hybrid automata} (dHA),  
formally defined  as follows: 
\oomit{ continuous evolution of a delay hybrid system apparently can be modeled by DDE \eqref{dde:DDE}. 
 %, corresponding to the vector field  of each mode in a hybrid automaton. 
 Meanwhile,
 discrete jumps between different control modes of the system may take certain time,  which can be simply described  by weights associated with each edge in a hybrid automaton. 
Along with both adjustments, other potential impact resulting from two kinds of delays is studied in more details and summarized  in Definition \ref{def:ad}.}
%delayed hybrid automata definition
\begin{definition}[Delay hybrid automaton, \mbox{dHA}]
    A dHA is a tuple $\mathcal{H}=(Q,X, U, I, \Xi,F,E, D, G, R)$, where,
    \begin{itemize}
        \item $Q=\{ q_1,\ldots,q_m \}$ is a finite set of modes;
        \item $X$ is a set of state variables; 
        % \item $V =\{ v_1, \ldots, v_n \}$ is state variables set,
        %   $\mathbf{x}=(v_1, \ldots, v_n) \in \mathbb{R}^n$;
        \item $U \subseteq \mathcal{C}\{[t_1, t_2], \mathbb{R}^n\}$, where $t_1 < t_2$, is a set of continuous functionals;
        \item  $I: Q\mapsto 2^{ \mathbb{R}^n} $ gives each mode $q\in Q$ an invariant $I(q)\subseteq \mathbb{R}^n$;
        \item $\Xi: Q \mapsto 2^U$ gives each mode $q\in Q$ its initial states set  $\Xi(q)\subseteq  U$;
        \item $F=\{\bm{f}_{q_1}, \ldots, \bm{f}_{q_m}  \}$ is the set of vector fields, each mode $q\in Q$ has unique vector field $\bm{f}_q$, which  is used to 
        form a delayed differential equation \eqref{dde:DDE} to model the continuous evolution, i.e., $$\dot{\mathbf{x}}(t)=\bm{f}_q(\mathbf{x}(t), \mathbf{x}(t-r^q_{1}), \ldots ,\mathbf{x}(t-r^q_{k}),\mathbf{w}(t) );$$ %\todo{BX: Please account for $\bm{w}_q(t)$.}
        \item $E\subseteq Q \times Q$ is the set of discrete transition relations between modes;
        \item  $D: E\mapsto \mathbb{R}_+ $ gives each  discrete transition $e\in E$ a delay time $D(e)\in \mathbb{R}_+$;
        \item $G: E\mapsto 2^{\mathbb{R}^n}$ denotes guard conditions; 
        \item $R: E \times X_D \mapsto U$ denotes reset functions. %\todo{BX: $X_D\subseteq \mathbb{R}^n$?}. 
        \end{itemize}
        \label{def:ad}
\end{definition}
 Compared with the definition of HA, 
there are several notable changes  in Definition \ref{def:ad}:
a new item $U\subseteq \mathcal{C}\{ [-r_k^q, 0], \mathbb{R}^n\}$ is introduced to represent the set of all possible initial states. Note that the solution to a DDE is a functional, and correspondingly a state is a function standing 
the execution history up to the considered instant starting from the given initial state, rather than a point in $\mathbb{R}^n$ as for ODE.  
\oomit{continuous functionals, i.e., $U=\{ \mathbf{x}_{t}^{\bm{\phi}}(\cdot) \in 
	 \mathcal{C}\{ [-r_k^q, 0], \mathbb{R}^n \} \}$, where  $\mathbf{x}_{t}^{\bm{\phi}}(\theta)=\bm{\xi}_{\bm{\phi}}^{\mathbf{w}}(t+\theta)$ for $\theta \in [-r_k^q ,0]$ with respect to DDE \eqref{dde:DDE}.
That is because the vector field of each control mode
is a DDE and its initial condition 
is a continuous function  (corresponding to $\Xi$ and $F$).}
% \textcolor{red}{ $U=\{ \mathbf{x}_{t}^{\bm{\phi}}(\cdot) \in 
% 	 \mathcal{C}\{ [-r_k^q, 0], \mathbb{R}^n \} \}$, where  $\mathbf{x}_{t}^{\bm{\phi}}(\theta)=\bm{\xi}_{\bm{\phi}}^{\mathbf{w}}(t+\theta)$ for $\theta \in [-r_k^q ,0]$ with respect to DDE \eqref{dde:DDE}.} \todo{BX: Please check it again!}
Additionally, another new item $D$ is used to specify the delays in discrete transitions: for each $e=(q, q')\in E$, the delay is denoted by $D(e)\in \mathbb{R}_+$.
Moreover,  the reset function $R$ is changed to  $E \times X_D \mapsto U$ accordingly, where $X_D$ is the set of  reachable states  %after a jump delay  starting from the state 
satisfying the corresponding guard condition.
Intuitively, when a mode switching happens, e.g.,  a transition from $q$ to $q'$  at time $t$, there exists time $\theta\in [-r_k^q, 0]$, the system has to satisfy: 
$\mathbf{x}_{t}^{\bm{\phi}}(\theta) \in G(e)$, and the update state is $ \bm{\phi}' = R(e,\mathbf{x}_{t+D(e)}^{\bm{\phi}}(\cdot) )$. 

\oomit{So, 
 a reset map in classical hybrid automata can be seen as a special case of delay hybrid automata, just setting R(f1(t))=f2(t).
in dHA,  a reset function maps a functional state to a functional state, can be seen as a conservative extension of the concept of reset functions in classical HA}

\oomit{ For a delay hybrid system, in each mode, its vector field is a DDE. Mathematically, a time-dependent solution of a DDE cannot be uniquely determined by its initial state at a given moment alone unlike ODEs, instead, the solution profile (initial function) on an interval of length equal to the maximal delay prior to the starting time t (normally taking t = 0) has to be prescribed. That is, we need to define continuous initial conditions over [–r,t] and hence DDEs are infinite-dimensional systems. So, any reset function should meet this respect conceptually. Actually, we can understand a reset map R to map f1(t) to a function f2 over [-r_q,t], where f1(t) stands for the solution of the DDE in the pre-location at the jump instant t, and f2 is the initial condition over [-r_q,t] of the DDE in the post-location with starting instant t, and r_q is the maximal delay in the DDE of the post-location. So, a reset map in classical hybrid automata can be seen as a special case of delay hybrid automata, just setting R(f1(t))=f2(t).}

\begin{example}\label{exm:heating_automaton}
    For the heating system  shown in the motivating example, it is straightforward to present  its dHA textually as follows:  
\begin{itemize}
    \item $Q=\{ q_1,q_2 \}$;  (two modes of discrete states, heater on and off);
    \item $X =\{x\}$; (the temperature of water in the tank);
    % \item $V=\{ x \}$; (one variable to model the continuous state)
    \item $U =\mathcal{C}$; (all continuous functionals);
    \item $I(q_1)=\{x\in \mathbb{R}\mid 20\leq x\leq 90\}$ and $I(q_2)=\{x\in \mathbb{R}\mid 20\leq x\leq 90\}$;
          %(heat the water in the tank as long as the temperature of water is between 50 and 80)
    \item $\Xi(q_1) = \{ x(t) \mid x(t)=50-10\sin{t}, t\in [-1, 0]\}$ and $\Xi(q_2) = \{ x(t) \mid x(t)=85-5\sin{t}, t\in [-1,0]\}$; 
          %(start with two  functions in two modes)
              %\todo{BX: because of the existence of time delays, i am not sure whether the condition on $x(t) \equiv 50$ for $t\in [-2,0]$ is suitable or not.} 
    \item $F=\{f_{q_1}, f_{q_2}\}$, where $f_{q_1} =K_1(h - x(t)) +K_2 x(t-1) +w_1$ and $f_{q_2} = -K_1 x(t) + K_2 x(t-1) + w_2$, $K_1$, $K_2$, $h$, $w_1$ and $w_2$ are real constants. That is, 
           the temperature rises and decreases following the respective  DDE in 
            $q_1$ and $q_2$, respectively; 
    \item $E=\{e_1=(q_1, q_2), e_2=(q_2, q_1)\}$; %(two switches between the two modes);
    \item $D(e_1)= 2$ and $D(e_2)=2$;
    \item $G(e_1)=\mathbb{R}$ and $G(e_2)=\mathbb{R}$; %\todo{BX: $G(e_1)=\{x\mid x=80\}$ and $G(e_2)=\{x\mid x=50\}$?};
    %\todo{BX: empty, or $G(e_1)=\{x\in \mathbb{R}\mid x=80\}$ and $G(e_2)=\{x\in \mathbb{R}\mid x=50\}$};
    \item $R(e_1,x_{t+D(e_1)}(\cdot))=x(\theta),\theta\in [t+D(e_1)-1, t+D(e_1)] $ with $x(t)\in G(e_1)$ and $R(e_2,x_{t+D(e_1)}(\cdot))= x(\theta),\theta\in [t+D(e_2)-1, t+D(e_2)]$ with $x(t)\in G(e_2)$.  
    %\todo{BX: $R(e_1,x(t+D(e_1))\in \Xi(q_2))$ with $x(t)\in G(e_1)$ and $R(e_2,x(t+D(e_2))\in \Xi(q_1))$ with $x(t)\in G(e_2)$?}.
    %\YB{Reset function be identity mapping?}
\end{itemize}
%where  $x$ denotes the temperature 
%of water in the tank and $t$ denotes  time. 
Pictorially, the dHA is shown in Fig.~\ref{fig:heating_automaton}.
\begin{figure}[t]
  \centering
  \usetikzlibrary{positioning,automata}
\begin{tikzpicture}[shorten >=1pt,node distance=4cm,on grid,line width=0.3mm]
  \tikzstyle{every state}=[fill=caribbeangreen!70]
  \node[state,initial]   (q_0)               
  {\begin{minipage}{0.04\textwidth}  
                    \begin{align*}       
                      \dot{x}= f_{q_1} \\
                      x\leq 90
                    \end{align*}
                \end{minipage}};
  
  \node[state] (q_2) [right=of q_0] 
    {\begin{minipage}{0.04\textwidth}  
    \begin{align*}       
         \dot{x}= f_{q_2}  \\
         x\geq 20
    \end{align*}
    \end{minipage}};
  \path[->] (q_0) edge [bend left=30] node [above] {off} node[below]{ $D(e_1) = 2$} (q_2)
            (q_2) edge [bend left=30]  node [above] {on} node[below]{ $D(e_2) = 2$} (q_0);    
\end{tikzpicture}
\caption{The dHA for the heating system}
\label{fig:heating_automaton}
\end{figure}
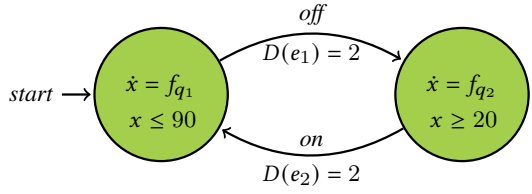
\end{example}

%\subsection{Semantics}

% trajectory definition
\begin{definition}[Hybrid execution]
   For a dHA $\mathcal{H}$, given an initial hybrid state $(q_0, \bm{\xi}^{\mathbf{w}}_{\bm{\phi}_{0}}(0))$ and $\mathbf{w}(\cdot):[0,\infty)\mapsto \mathbb{R}^m$, an execution $\pi$ of the delay hybrid automaton $\mathcal{H}$
is a sequence of $\langle t_i, q_i, \bm{\xi}_{\bm{\phi}_{i}}^{\mathbf{w}}(t_i) \rangle$, for $i\in \mathbb{N}$ and $q_i\in Q$,
% \begin{center}
%     $\pi = (q_0, \bm{\xi}_{\bm{\phi}_0}),  (q_1, \bm{\xi}_{\bm{\phi}_1}),\ldots $.
% \end{center}
satisfying  that 
%\begin{itemize}
   % \item \emph{the initial condition}: for $(t_0, q_0, \bm{\xi}^{\mathbf{w}}_{\bm{\phi}_{0}}(0)) $,  
     %      $\bm{\xi}^{\mathbf{w}}_{\bm{\phi}_{0}}(0) \in \Xi(q)$. 
    %\item 
    any transition $ \langle t_i, q_i, \bm{\xi}^{\mathbf{w}}_{\bm{\phi}_{i}} (t_i)\rangle \mapsto  \langle t_{i+1}, q_{i+1}, \bm{\xi}_{\bm{\phi}_{i+1}}^{\mathbf{w}}(t_{i+1}) \rangle $ is either :
     \begin{itemize}
        \item \emph{the continuous evolution}: 
         $q_i = q_{i+1}$, $\bm{\phi}_{i}=\bm{\phi}_{i+1}$, $t_i < t_{i+1}$, and for all $t\in [t_i, t_{i+1}]$, the solution of DDE $\dot{\mathbf{x}}=\bm{f}_{q_i}$ is $\bm{\xi}_{\bm{\phi}_{i}}^{\mathbf{w}}(\cdot):
        [t_i, t_{i+1}] \mapsto \mathbb{R}^n$, and  $\bm{\xi}_{\bm{\phi}_{i}}^{\mathbf{w}}(t) \in I(q_i)$;
      \item \emph{the discrete transition}: 
        $e=(q_i,q_{i+1})\in E$, $t_i=t_{i+1}$, and there exists $t$ such that $t_{i+1}=t+D(e)$ and $\bm{\xi}_{\bm{\phi}_{i}}^{\mathbf{w}}(t) \in G(e)$ and $\bm{\phi}_{i+1} = R(e,\mathbf{x}_{t+D(e)}^{\bm{\phi}_{i}}(\cdot))$.
     \end{itemize}
    %for all $i$ with $\delta_i \geq \delta'_i \geq 0$, the solution of DDE $\dot{\mathbf{x}}=\bm{f}_{q}$ is $\bm{\xi}_{\bm{\phi}}:
    % [\delta'_i, \delta_i] \rightarrow \mathbb{R}^n$, and holds: $\bm{\xi}_{\bm{\phi}}(t) \in I(q)$ for all $t \in [\delta'_i, \delta_i]$.
    %\item for $i\leq l-1$, $\bm{\xi}_{\bm{\phi}_i}(\delta_i) = \mathbf{x}_i(\delta_i)$.
    
    % let $e = (q, q')$, then $e\in E$ is an edge in the automaton, and the discrete transition from $q$ to $q'$ happens at time $\delta_i$, the system satisfies: 
    % \begin{itemize}
    %     \item $D(e)$ is constant and represents delay time on discrete transition $e$, and $D(e) = \delta'_{i+1} -\delta_i$. 
    %     \item $\mathbf{x}_i(\delta_i ) \in G(e)$ and $\bm{\phi}' = R(e,\mathbf{x}_i(\delta_i+ D(e)))$.
    %     \item for all $t \in [\delta_i, \delta'_{i+1}]$, $\bm{\xi}_{\bm{\phi}}$ is the solution of DDE $\dot{\mathbf{x}}=\bm{f}_{q}$ and holds $\bm{\xi}_{\bm{\phi}}(t) \in I(q)$ .
    % \end{itemize}
 
    %\item[4.] $T = \sum_{i=1}^{l} \delta_i$.
   % \YB{$q_i$ here may be confusing, because in automaton, the number of mode $q$ is $m$, $q_i,\dots, q_m$}
%\end{itemize} 
\end{definition}

An execution $\pi$ is called \emph{finite} if it is a finite sequence ending with a closed time interval.
Otherwise, the execution $\pi$ is called \emph{infinite} if it is an infinite sequence or if $\sum_{i=0}^{N}(
t_{i+1}-t_{i})=\infty$, where $N\in \mathbb{N}$.
A dHA $\mathcal{H}$ is called \emph{non-blocking} if there exists at least one infinite execution starting from any initial state.

%reachable set definition
\begin{definition}[Reachable set]
    Given a dHA $\mathcal{H}$, the reachable set $\mathrm{R}_{\mathcal{H}}(t)$ for the delay hybrid system  
    within  $[-r^{q_0}_{k}, t]$ is 
    \begin{equation*}
     \mathrm{R}_{\mathcal{H}}(t) =
     \left\{
          \mathbf{x}(t)  \,\middle |\,
          \begin{array}{cc}
          \forall \, q_0\in Q,\, \forall \, \bm{\xi}^{\mathbf{w}}_{\bm{\phi}_{0}}(0)\in \Xi(q_0), \\
          \exists \,  \pi =\langle t_0, q_0, \bm{\xi}^{\mathbf{w}}_{\bm{\phi}_{0}}(0)\rangle, \cdots, \langle t,q_i,\bm{\xi}_{\bm{\phi}_i}^{\mathbf{w}}(t)\rangle \\
        s.t.\,  \mathbf{x}(t)= \bm{\xi}_{\bm{\phi}_i}^{\mathbf{w}}(t)
         \end{array}
     \right\}.
    \end{equation*}
    %  \todo{BX: please check this definition further. $\bm{x}(t)\in \mathbb{R}^n$? If $\bm{\phi}_0$ denotes a function,  should $\bm{\xi}_{\bm{\phi}_i}^{\mathbf{w}}(t)$ be also?}
    % \todo{BX: I feel this is not the commonly used definition for reachable states. Why not use the common one?}
\end{definition}

\begin{example}
    An execution  for the heating system in the motivating example is given below.
\begin{figure}[h]
    \centering
     \begin{tikzpicture}[node distance=2cm]
        % nodes
        \node[rectangle,rounded corners=0.15cm,fill=gray!30, draw=gray!50] (A) at (0, 0) {$\langle 0,q_1, x=50.00 \rangle$};
        \node[rectangle,rounded corners=0.15cm,fill=caribbeangreen!70, draw=caribbeangreen] (B) at (0, -1) {$\langle 10, q_1, x=68.86\rangle$};
        \node[rectangle,rounded corners=0.15cm,fill=cred!70, draw=cred] (C) at (0, -2) {$\langle 12,q_1, x=71.12\rangle$};
        \node[rectangle,rounded corners=0.15cm,fill=gray!30, draw=gray!50] (D) at (3, -2) {$\langle 12,q_2, x=71.12\rangle$};
        \node[rectangle,rounded corners=0.15cm,fill=caribbeangreen!70, draw=caribbeangreen] (E) at (3, -1) {$\langle 17, q_2, x=45.37\rangle$};
        \node[rectangle,rounded corners=0.15cm,fill=cred!70, draw=cred] (F) at (3, 0) {$\langle 19, q_2, x=43.59\rangle$}; 
        \node[rectangle,rounded corners=0.15cm,fill=gray!30, draw=gray!50] (G) at (6, 0) {$\langle 19, q_1, x=43.59\rangle$};
        \node[rectangle,rounded corners=0.15cm,fill=gray!30, draw=gray!50] (H) at (6, -1) {$\langle 23, q_1, x=54.82\rangle $};
        \node[rectangle,rounded corners=0.15cm,fill=gray!30, draw=gray!50] (I) at (6, -2) {$\cdots$};
        % arrows
      \draw[->, line width=0.7pt](A)--node[anchor=center,text=black, left] {\small \textbf{(1)}} (B); 
      \draw[->, line width=0.7pt](B)--node[anchor=center,text=black, left] {\small \textbf{(2)}} node[anchor=center,text=black, right] {\small $D(e_1)=2$}(C) ;
      \draw[->,  line width=0.7pt] (C)--node[anchor=center,cblue, above] {\small \textbf{(3)}}(D) ;
      \draw[->, line width=0.7pt](D)--node[anchor=center,text=black, left] {\small \textbf{(4)}}(E); 
      \draw[->, line width=0.7pt](E)--node[anchor=center,text=black, left] {\small \textbf{(5)}} node[anchor=center,text=black, right] {\small $D(e_2)=2$}(F); 
      \draw[->, line width=0.7pt](F)--node[anchor=center,text=cblue, above] {\small \textbf{(6)}} (G); 
      \draw[->, line width=0.7pt](G)--node[anchor=center,text=black, left] {\small \textbf{(7)}}(H);
      \draw[->, line width=0.7pt](H)--(I);
    \end{tikzpicture}
    \label{fig:excution_heating}
\end{figure}
    % \begin{figure}[h]
    %     \centering
    %      \begin{equation*}
    %     \begin{split}
    %   &\langle 0,q_1, x=20.67 \rangle \xrightarrow{\textbf{(1)}} \textcolor{orange}{\langle 10, q_1, x=38.16\rangle}
    %   \xrightarrow[D(e_1)=2]{\textbf{(2)}} \textcolor{red}{\langle 12,q_1, x=40.18\rangle} \\
    %   \xrightarrow{\color{caribbeangreen}{\textbf{(3)}}} & \langle 12, q_2, x=40.18\rangle 
    %   \xrightarrow{\textbf{(4)}} \textcolor{orange}{\langle 17, q_2, x=25.21\rangle}
    %     \xrightarrow[D(e_2)=2]{\textbf{(5)}}\textcolor{red}{\langle 19, q_2, x=19.24\rangle} \\
    %   \xrightarrow{\color{caribbeangreen}{\textbf{(4)}}} &\langle 19, q_1, x=19.24\rangle 
    %   \xrightarrow{\textbf{(6)}} \langle 23, q_1, x=28.82\rangle 
    %     \xrightarrow{}\cdots
    %     \end{split}
    % \end{equation*}
    %     \label{fig:excution_heating}
    % \end{figure}

From the initial state $(q_1, x=50.00)$, the system reaches the state $(q_1, x=68.86)$ in green after $10s$, which is indicated by transition $(1)$. 
Assume that the state $(q_1, x=68.86)$ in green satisfies the guard condition, the system chooses to  jump from mode $q_1$ to mode $q_2$. However, there is a delay $D(e_1) =2$ incurred by the edge $e_1$. The 
system keeps evolving in mode $q_1$ until hitting the state $(q_1, x=71.12)$ revealed by transition $(2)$, and completes the switching by reaching the state $(q_2, x=71.12)$ displayed by transition $(3)$ in blue. Continue this execution as above.
\end{example}

%\subsection{Problem Statement}
\begin{definition}[Safety]
Given a dHA $\mathcal{H}$ with a safe set $\mathcal{S} = \cup_{q\in Q}  \mathcal{S}_q$, where  $\mathcal{S}_q\subseteq \mathbb{R}^n$, the automaton $\mathcal{H}$ is \emph{$T$-safe} with respect to $\mathcal{S}$ in time $T$, if for any time $t\in [-r^q_{k},T]$, all reachable states $\mathrm{R}_{\mathcal{H}}(t)$ of the system  starting from any initial states  are contained in $\mathcal{S}$, i.e., \[ \mathrm{R}_{\mathcal{H}}(t) \subseteq \mathcal{S},   \forall t\in[-r^q_{k}, T].\]
If $T$ is infinite, then the dHA is \emph{safe} over the infinite-time horizon.
\end{definition}

%Otherwise, we call the delay hybrid automaton  \emph{$T$-unsafe}. 

%Furthermore, we assume that  the given  guard condition $G$ in the original  delay hybrid automaton $\mathcal{H}$ does not  take  delays into consideration.
Now, the problem of interest can be formally formulated as follows: 
%We go on to pose formally the  problem of interest:
\begin{problem}[\textbf{Safe Switching Controller Synthesis Problem}] \label{prob:synthesis}
Given a dHA $\mathcal{H}=(Q,X, U, I, \Xi,F,E,D, G, R)$
and a safety property $\mathcal{S}$, the switching controller problem is to synthesize
a new dHA $\mathcal{H}^*=(Q,X, U^*, I^*, 
\Xi^*,F,E,D, G^*, R)$ such that $\mathcal{H}^*$
satisfies:
    \begin{enumerate}
        \item[(r1)]  $\mathcal{H}^*$ is safe, i.e. in $[-r^q_{k}, \infty)$, the reachable set $\mathrm{R}_{\mathcal{H}^*} \subseteq \mathcal{S}$.
        \item[(r2)] $\mathcal{H}^*$ is a refinement of $\mathcal{H}$, i.e., 
        it holds:  $\Xi^*\subseteq \Xi \cap \mathcal{S}$, $I^*\subseteq I$, $U^*\subseteq U$, 
        and for any $e\in E$, it holds: $\forall \mathbf{x}(t)\in G^*(e)$, $\mathbf{x}(t+D(e)) \in G(e)\cap I^*(q)$.
        %$R^*(e, \cdot) \subseteq R(e,\cdot)$.
        \item[(r3)]  if $\mathcal{H}$ is non-blocking in the safe set $\mathcal{S}$, then $\mathcal{H}^*$ is non-blocking. %Moreover, the invariant set is nonempty, i.e., $I^*\neq \emptyset$.
    \end{enumerate}
$SC=\{ G^*(e)\subseteq \mathbb{R}^n \mid e\in E \}$ is called a \emph{safe switching controller} of $\mathcal{H}$,
if $\mathcal{H}^*$ satisfies above three requirements. 
We call $SC$ is a \emph{trivial switching controller} of $\mathcal{H}$, if there exists one mode $q\in Q$ or one edge $e\in E$ with $I^*(q) = \emptyset$ or $G^*(e) =\emptyset$. 
\end{problem} 
\section{Differential Invariant Generation}\label{sec:invariant}
\emph{Differential invariant} generation plays a central role 
in our  framework to synthesize switching controllers 
for delay hybrid systems with perturbations. 
In this section, inspired by the work in \cite{feng2019taming}, we 
present  a two-step procedure to synthesize differential invariants 
for a delay dynamical system. %for each mode $q$ of the delayed hybrid system $\mathcal{H}$.
The first step is to calculate a bounded horizon $T$ using ball convergence analysis, which reduces the differential invariant generation problem to the $T$-differential invariant generation problem. The second step is to compute an over-approximation of the 
reachable set in time $T$, which is a $T$-differential invariant. 

We first develop the aforementioned two-step method for linear delay dynamical systems, and then generalize it to nonlinear delay dynamical systems.
%Before describing the procedure, the formal definition of differential invariant is given.
%Let us take a look at the first step of the differential invariant generation.

\begin{definition}[Differential invariant]\label{def:diff_invariant}
Given a mode $q\in Q$ of a delay hybrid automaton $\mathcal{H}$: $(\Xi(q), \bm{f}_q, I(q))$ and time $T$, a set $I^*(q)$ is called a $T$- invariant 
if for any trajectory starting from a given initial function $\bm{\phi}(t) \in \Xi(q)$, $t\in [-r^q_{k},0]$, the following condition holds for $\mathbf{w}(\cdot): [-r^q_{k}, T] \mapsto \mathbb{R}^m$:
\begin{equation*}
    \forall t\in [-r^q_{k}, T], \  \bm{\xi}_{\bm{\phi}}^{\mathbf{w}}(t) \in I(q) \implies  \forall t\in [-r^q_{k}, T], \bm{\xi}_{\bm{\phi}}^{\mathbf{w}}(t)\in I^*(q). 
\end{equation*}
If $T$ is infinite, then $I^*(q)$ is a differential invariant of mode $q$.
\end{definition}

$T$- invariant $I^*(q)$ requires that every trajectory starting from initial set $\Xi(q)$
in time $T$ remains inside the differential invariant $I^*(q)$ if it remains in the domain $I(q)$.
A \emph{safe} differential invariant requires $I^*(q)\subseteq \mathcal{S}_q$.

%Moreover, a  differential invariant $I^*(q)$ is \emph{safe} iff $I^*(q) \subseteq \mathcal{S}_q$.

\subsection{Linear Systems}\label{subsec:linear}

We consider linear  DDEs with the form  \eqref{dde:DDE} first, i.e., 
\begin{equation}\label{dde:linearDDE} 
    \left\{  
                 \begin{array}{lr}  
                 \dot{\mathbf{x}}(t)= A\mathbf{x}(t) + B \mathbf{x}(t-r) + C \mathbf{w}(t), & t\in [0,\infty) \\  
                 \mathbf{x}(t)= \bm{\phi}(t) , & t\in [-r,0]
                 \end{array}  
    \right. \, ,  
\end{equation}
 where $A, B\in \mathbb{R}^{n\times n}$ and $C \in \mathbb{R}^{n\times m}$ are real matrices with  appropriate dimensions.

\begin{definition}[Global ball-convergence]\label{def:ball_convergence}
    Given %a linear delay dynamical system of \eqref{dde:linearDDE}, %assume that $\mathbf{x}_e$ is an equilibrium \todo{BX: Please double check this assumption when $\mathbf{w}(t)$ appears in (2)}
     a  $\mathfrak{r} >0$, 
     \eqref{dde:linearDDE} is called \emph{globally exponentially convergent} within the ball
    $\mathfrak{B}( \mathfrak{r})$, if there exist a constant $\gamma > 0$ and a non-decreasing function $\kappa(\cdot)$ 
    such that 
    \begin{equation*}
        \| \bm{\xi}_{\bm{\phi}}^{\mathbf{w}}(t) \|_{\infty} \leq \mathfrak{r} + \kappa(\| \bm{\phi}\|_{\infty})\mathrm{e}^{-\gamma t}, \ \forall t\geq 0 
    \end{equation*}
    holds  for all $\bm{\phi}\in \mathcal{C}\{ [-r, 0], \mathbb{R}^n\}$ and $\| \mathbf{w}(t) \|_{\infty} \leq w_{max}, \forall t\geq 0$.
    %$\todo{BX: Where is it reflected?}.
\end{definition}
In Definition \ref{def:ball_convergence}, $\gamma$ represents the rate of decay, i.e., an estimate of how 
quickly the solution of \eqref{dde:linearDDE} converges to the ball $\mathfrak{B}(\mathfrak{r})$.
Especially, when the radius $\mathfrak{r}=\mathbf{0}$, the definition of ball convergence is consistent with 
Lyapunov exponential stability \cite{Lyapunov_stability}. Moreover, 
in \cite{trinh2014new}, it  was proved that 
\oomit{a class of linear delay systems was  studied  
under perturbation.  there exists a ball to bound all state trajectories from some region of initial conditions,
and for the other part of initial conditions, all state trajectories from it converge 
exponentially within the ball, that is }
\begin{theorem}[\cite{trinh2014new}]
\label{thm:expenential}
Suppose in 
 \eqref{dde:linearDDE} $M=A+B$ is a Metzler matrix satisfying one of the properties in Proposition~ \ref{pro:Metzler}. Then, there exist positive constants 
$\beta$, $\gamma$, $\delta$, $\eta$ such that for all initial functions $\bm{\phi}$ and $\|\mathbf{w}(t) \|_{\infty} \leq w_{max},  \forall t\geq 0$  
\begin{equation*}
    \| \bm{\xi}_{\bm{\phi}}^{\mathbf{w}}(t)\|_{\infty} \leq \frac{C_{max} w_{max}}{\eta} + \beta ( \|\bm{\phi} \|_{\infty} -\frac{C_{max} w_{max}}{\delta})^+ \mathrm{e}^{-\gamma t}, \ \forall t\geq 0 
\end{equation*}
holds, where $C_{max} = \max_{i\in n} \sum_{j=1}^{m} C_{ij} $.
\end{theorem}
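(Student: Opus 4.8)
The plan is to exploit the cooperative (Metzler) structure of \eqref{dde:linearDDE} to dominate the sign-indefinite solution componentwise by a positive comparison system, and then collapse that vector system to a single scalar Halanay-type differential inequality using the weighted maximum norm induced by the Perron-type vector supplied by Proposition~\ref{pro:Metzler}. Concretely, since $M=A+B$ is Metzler with $\mu(M)<0$, Proposition~\ref{pro:Metzler} yields a vector $\bm{\zeta}>\bm{0}$ with $M\bm{\zeta}<\bm{0}$; because the inequality is strict and $\bm{\zeta}>\bm{0}$, there is $\alpha>0$ with $(A+B)\bm{\zeta}\le -\alpha\bm{\zeta}$ componentwise. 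This $\bm{\zeta}$ is the device that converts the spectral condition into a pointwise inequality I can test against.

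First I would derive a componentwise bound on the upper Dini derivative. For each $i$, wherever $x_i(t)\ne 0$,
\[
 D^{+}|x_i(t)| \le A_{ii}|x_i(t)| + \sum_{j\ne i}|A_{ij}|\,|x_j(t)| + \sum_{j}|B_{ij}|\,|x_j(t-r)| + C_{max}\, w_{max},
\]
so that $y_i(t):=|x_i(t)|$ is dominated by the cooperative delay system built from the off-diagonal entries of $A$ and the entries of $B$ with constant forcing $C_{max}w_{max}\mathbf{1}$. Here the Metzler/nonnegativity hypothesis is exactly what makes the comparison system cooperative and its fundamental solution nonnegative, so the comparison principle for positive delay systems gives $|x_i(t)|\le y_i(t)$ for all $t\ge 0$.

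Next, set $v(t):=\max_i |x_i(t)|/\zeta_i$. Using $(A+B)\bm{\zeta}\le-\alpha\bm{\zeta}$ together with the nonnegativity of the off-diagonal and delay entries, I would show that $v$ obeys a scalar Halanay inequality
\[
 D^{+} v(t) \le -a\, v(t) + b \sup_{t-r\le s\le t} v(s) + c,
\]
with $a>b\ge 0$ and $c$ proportional to $C_{max}w_{max}$, where the gap $a-b$ is controlled by $\alpha$. The Halanay comparison lemma then gives, with $v^{*}=c/(a-b)$ and $\gamma$ the unique positive root of $\gamma=a-b\,e^{\gamma r}$,
\[
 v(t)\le v^{*} + \Big(\sup_{-r\le s\le 0} v(s)-v^{*}\Big)^{+} e^{-\gamma t}.
\]
Translating back through $\|\mathbf{x}(t)\|_\infty\le (\max_i\zeta_i)\,v(t)$ and $\sup_{[-r,0]}v\le \|\bm{\phi}\|_\infty/(\min_i\zeta_i)$ yields the claimed bound: the ultimate term $(\max_i\zeta_i)v^{*}$ becomes $C_{max}w_{max}/\eta$, the threshold inside $(\cdot)^{+}$ becomes $C_{max}w_{max}/\delta$ (the mismatch between $\eta$ and $\delta$ being exactly the weight ratio $\max_i\zeta_i/\min_i\zeta_i$), and $\beta$ collects these weight ratios.

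I expect the main obstacle to be two-fold: justifying the comparison step rigorously in the delayed, nonsmooth setting — that is, handling $D^{+}|x_i|$ at zeros of $x_i$ and controlling the delayed supremum term so the reduction to a cooperative system is valid — and then extracting $a,b,c$ so that $a>b$ (hence $\gamma>0$) follows cleanly from $M\bm{\zeta}<\bm{0}$, which is precisely what upgrades mere boundedness to genuine exponential decay into the ball. An attractive alternative that sidesteps the Halanay lemma is a first-crossing argument: posit the componentwise bound $|x_i(t)|\le \zeta_i(\sigma+\rho e^{-\gamma t})$, assume it is first violated at some $t_0$ in coordinate $i_0$, and use $(A+B)\bm{\zeta}\le-\alpha\bm{\zeta}$ to show $D^{+}|x_{i_0}(t_0)|$ lies strictly below the derivative of the bound — a contradiction. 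This is likely the cleanest route, and I would develop it in parallel with the Halanay reduction.
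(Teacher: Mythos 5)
The paper itself offers no proof of this statement: Theorem~\ref{thm:expenential} is imported verbatim from \cite{trinh2014new}, and the text only records the recipe for the constants ($\beta=(\bm{\zeta})^{-1}_{\min}$, $\eta=(-M\bm{\zeta})_{\min}$, $\delta=\eta(\bm{\zeta})^{-1}_{\min}$, and $\gamma=\min_i\gamma_i$ with $\gamma_i$ solving $\gamma\zeta_i+\sum_j\zeta_jB_{ij}(e^{\gamma r}-1)-\eta=0$). So there is nothing in-paper to compare against; what you have reconstructed is, in essence, the argument of the cited source. Your ``first-crossing'' variant is in fact the proof used there: postulate $|x_i(t)|\le\zeta_i(\sigma+\rho e^{-\gamma t})$, take a first violation time $t_0$ and coordinate $i_0$, and contradict it using $M\bm{\zeta}<\bm{0}$; the delayed term contributes the factor $e^{\gamma r}$, which is exactly why $\gamma$ is pinned down by the equation $H_i(\gamma)=0$ recorded in the paper, and the identifications $v^*\mapsto C_{max}w_{max}/\eta$, $\beta\mapsto(\bm{\zeta})^{-1}_{\min}$ (under the normalization $\|\bm{\zeta}\|_\infty=1$) come out as you predict. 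The Halanay reduction is a serviceable alternative but yields slightly different (generally more conservative) constants.

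One genuine gap in the proposal as written against the theorem \emph{as stated}: your Dini-derivative domination produces the comparison matrix $\mathcal{M}(A)+|B|$, where $\mathcal{M}(A)$ keeps the diagonal of $A$ and takes absolute values off the diagonal. The hypothesis here only says $M=A+B$ is Metzler with $M\bm{\zeta}<\bm{0}$, which controls $A+B$ but not $\mathcal{M}(A)+|B|$; without additionally assuming $A$ Metzler and $B\ge 0$ (so that $\mathcal{M}(A)+|B|=A+B$), the key inequality $(\mathcal{M}(A)+|B|)\bm{\zeta}\le-\alpha\bm{\zeta}$ --- needed both for the Halanay gap $a>b$ and for the contradiction at the first crossing --- simply does not follow, and the argument stalls. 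The cited source works with a positive system (note the paper's own formula for $H_i$ uses $B_{ij}$, not $|B_{ij}|$), so your proof is correct under that reading; but you should either add the positivity hypothesis explicitly or restate the condition on the comparison matrix $\mathcal{M}(A)+|B|$. A second, minor imprecision of the same flavor: bounding the forcing by $C_{max}w_{max}$ with $C_{max}=\max_i\sum_j C_{ij}$ requires $\sum_j|C_{ij}|$ unless $C\ge 0$.
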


In Theorem \ref{thm:expenential}, based on the notion of Metzler matrix,  \eqref{dde:linearDDE}
is globally exponentially convergent to the ball $\mathfrak{B}(\frac{C_{max} w_{max}}{\eta})$
for all perturbations  $\| \mathbf{w}(t)\|_{\infty} \leq w_{max},  \forall t\geq 0$. Moreover, the size of the ball  increases as the perturbation bound
increases. 
Particularly, without perturbation by letting $\mathbf{w}(t) =0$ for all $t\in [-r, \infty)$, the  equilibrium $\mathbf{0}$
is exponentially stable. \cite{trinh2014new} also provides the way to 
obtain the constants $\beta,\gamma,\delta,\eta$ in Theorem \ref{thm:expenential}, which can be sketched as: let  $\bm{\zeta} > \mathbf{0}$ with 
$\| \bm{\zeta} \|_{\infty}=1$ and $ M \bm{\zeta} < \mathbf{0}$,
then 
	 $\beta=(\bm{\zeta})^{-1}_{min}$,
	 $\eta=(-M\bm{\zeta})_{min}$, 
	 $\delta=\eta (\bm{\zeta})^{-1}_{min}$,
	 $\gamma=\min_{i\in n} \gamma_i$, where $\gamma_i$ is the solution of the equation
	 \begin{equation*}
		 H_i(\gamma) = \gamma \zeta_i +\sum_{j=1}^n \zeta_j B_{ij}(e^{\gamma r - 1}) - \eta = 0  . 
	 \end{equation*}

\begin{algorithm}[t]
	\caption{Safe Differential Invariant Synthesis}
\begin{algorithmic}[1]
%\Require~~
%	  $\Xi(q)$, $\bm{f}_q$,  $T^*$, $\tau$,  $\bm{\rho}$, $\mathcal{S}_q$, $\mathfrak{r}_1$,  $\epsilon$ 
\Procedure{DInvariant}{ $\Xi(q)$, $\bm{f}_q$,  $T^*$, $\tau$,  $\bm{\rho}$, $\mathcal{S}_q$, $\mathfrak{r}_1$,  $\epsilon$}
  \State $P_0(q) \gets \Xi(q)\cap \mathcal{S}_q$; $i\gets 0$; $t\gets 0$
  \While{$t\leq  T^*$}
     \State $\mathrm{R}_{P_i(q)} \gets \emptyset$
     \State $\widehat{P_i}(q) \gets$ select a $C \in \mathrm{C}(P_i(q), \bm{\rho})$ \label{line:cover}
       \For{each $\hat{\mathbf{x}} \in \widehat{P_i}(q)$}
             \State $\mathrm{R}_{\hat{\mathbf{x}}} \gets$ SafeR($\bm{\rho}$, $\hat{x}$, $\tau$, $\mathcal{S}_q$)
             \If{$\mathrm{R}_{\hat{\mathbf{x}}}\neq \emptyset$}
             \State $\mathrm{R}_{P_i(q)} \gets \mathrm{R}_{P_i(q)} \cup \mathrm{R}_{\hat{\mathbf{x}}}$
             \EndIf
        \EndFor
        \If { $\mathrm{R}_{P_i(q)}\neq \emptyset$}
            \If{$\mathrm{R}_{P_i(q)} \subseteq P_{i}(q)\cup \mathfrak{B}(\mathfrak{r}_1+\epsilon) $} \label{line:fixpoint}
             \State\Return $P_{i}(q)\cup (\mathfrak{B}(\mathfrak{r}_1 +\epsilon) \cap \mathcal{S}_q)$
            \Else \State $P_{i+1}(q) \gets P_i(q)\cup \mathrm{R}_{P_i(q)}$
               \State $i\gets i+1$; $t\gets t+\tau$
            \EndIf
         %\ELSIF{$\tau \geq \tau_{th} \wedge \bm{\rho} \geq \bm{\rho}_{th} $ }  
           % \STATE $\tau\gets \tau/2$; $\bm{\rho}\gets \bm{\rho}/2$
            %\ELSE \RETURN $P_{i}(q)\cup (\mathfrak{B}(\mathfrak{r}_1 +\epsilon)\cap \mathcal{S}_q)$
        \Else \State Break; 
        \EndIf
  \EndWhile
   \State\Return $P_{i}(q)\cup (\mathfrak{B}(\mathfrak{r}_1 +\epsilon)\cap \mathcal{S}_q)$ \label{line:tend}
 \EndProcedure
 
 \Procedure{SafeR}{$\bm{\rho}$, $\hat{x}$, $\tau$, $\mathcal{S}_q$}
        \State compute $\mathrm{R}_{\hat{\mathbf{x}}}$ over $t\in [0, \tau]$
        \If{$\mathrm{R}_{\hat{\mathbf{x}}}\subseteq \mathcal{S}_q $}
            \State \Return $\mathrm{R}_{\hat{\mathbf{x}}}$
        \ElsIf{$\mathrm{R}_{\hat{\mathbf{x}}}\cap \mathcal{S}_q \neq \emptyset 
        \wedge \bm{\rho}/2 \geq \bm{\rho}_{th}$ }
            \State  $\hat{Y} \gets$ $\mathrm{C}(\hat{\mathbf{x}}, \bm{\rho}/2)$ 
            \State $\mathrm{R}_{\hat{\mathbf{x}}} \gets \emptyset$
            \For{each $\hat{y} \in \hat{Y}$}
                \State  $R_y \gets$ SafeR($\bm{\rho}/2$, $\hat{y}$, $\tau$, $\mathcal{S}_q$ )
                \State $ \mathrm{R}_{\hat{\mathbf{x}}} \gets \mathrm{R}_{\hat{\mathbf{x}}} \cup R_y$
            \EndFor
        \Else  \State \Return $\emptyset$
        \EndIf
       
        \State\Return $\mathrm{R}_{\hat{\mathbf{x}}}$
     \label{line:forend}
 \EndProcedure
 
\end{algorithmic}
\label{alg:Invariant}
\end{algorithm}

\paragraph{\textbf{Reducing to $T$-differential invariant generation problem:}} 
According to Theorem~\ref{dde:linearDDE},  the first step of differential invariant generation can be achieved by the following theorem:
\begin{theorem}
\label{thm:reduceT}
Suppose $M=A+B$ is a Metzler matrix in 
 \eqref{dde:linearDDE} satisfying  one of the properties in Proposition~\ref{pro:Metzler}. 
  Given an initial function $\bm{\phi}$ and a disturbance $\mathbf{w}$ with 
$\| \mathbf{w}(t)\|_{\infty}\leq w_{max}, \forall t\geq 0$,  %and the equilibrium $\mathbf{x_e}=\mathbf{0}$,
 let $\mathfrak{r}_1=\frac{C_{max} w_{max}}{\eta} $ and $\mathfrak{r}_2 = \beta ( \|\bm{\phi} \|_{\infty} -\frac{C_{max} w_{max}}{\delta})$,  for any $\epsilon >0$,
 let 
\begin{equation*}
    T^* = \max\{ 0, \inf \{ T \mid \forall t\geq T : \mathfrak{r}_2^+ \mathrm{e}^{-\gamma t}<\epsilon \}\} \, ,
\end{equation*}
then 
$\|\bm{\xi}_{\bm{\phi}}^{\mathbf{w}}(T)\|_{\infty}- \mathfrak{r}_1 < \epsilon$ 
for any $T \geq T^*$, where  $\beta$, $\gamma$, $\delta$ and $\eta$ satisfy the condition in Theorem \ref{thm:expenential}. 
\end{theorem}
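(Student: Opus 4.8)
The plan is to read the result straight off Theorem~\ref{thm:expenential} after rewriting its right-hand side in the abbreviations $\mathfrak{r}_1,\mathfrak{r}_2$, and then to check that the prescribed horizon $T^*$ pushes the decaying term below $\epsilon$. Since $\beta>0$, we have $\beta\bigl(\|\bm{\phi}\|_{\infty}-\tfrac{C_{max}w_{max}}{\delta}\bigr)^+=\mathfrak{r}_2^+$, so Theorem~\ref{thm:expenential} reads $\|\bm{\xi}_{\bm{\phi}}^{\mathbf{w}}(t)\|_{\infty}\le \mathfrak{r}_1+\mathfrak{r}_2^+\mathrm{e}^{-\gamma t}$ for all $t\ge 0$. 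Subtracting $\mathfrak{r}_1$ gives $\|\bm{\xi}_{\bm{\phi}}^{\mathbf{w}}(t)\|_{\infty}-\mathfrak{r}_1\le \mathfrak{r}_2^+\mathrm{e}^{-\gamma t}$, which reduces the claim to establishing $\mathfrak{r}_2^+\mathrm{e}^{-\gamma T}<\epsilon$ whenever $T\ge T^*$.

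Next I would analyse the set $\mathcal{T}:=\{T\mid \forall t\ge T:\ \mathfrak{r}_2^+\mathrm{e}^{-\gamma t}<\epsilon\}$, whose infimum (capped at $0$) defines $T^*$, splitting on the sign of $\mathfrak{r}_2$. If $\mathfrak{r}_2\le 0$ then $\mathfrak{r}_2^+=0$, the inequality $0<\epsilon$ holds for every $t$, so $\mathcal{T}=\mathbb{R}$, $\inf\mathcal{T}=-\infty$, and $T^*=\max\{0,-\infty\}=0$; the conclusion is then immediate, the decaying term being identically $0$. If $\mathfrak{r}_2>0$, then because $\gamma>0$ the map $t\mapsto \mathfrak{r}_2^+\mathrm{e}^{-\gamma t}$ is strictly decreasing, so the universally quantified condition over $t\ge T$ collapses to its value at the left endpoint, i.e. $\mathfrak{r}_2^+\mathrm{e}^{-\gamma T}<\epsilon$, equivalently $T>\tfrac{1}{\gamma}\ln(\mathfrak{r}_2^+/\epsilon)$. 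Hence $T^*=\max\{0,\tfrac{1}{\gamma}\ln(\mathfrak{r}_2^+/\epsilon)\}$, and for every $T>T^*$ monotonicity of the exponential yields $\mathfrak{r}_2^+\mathrm{e}^{-\gamma T}<\epsilon$, so that combining with the rearranged bound gives $\|\bm{\xi}_{\bm{\phi}}^{\mathbf{w}}(T)\|_{\infty}-\mathfrak{r}_1<\epsilon$.

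The only delicate point, which I expect to be the main obstacle to a literally strict inequality, is the boundary $T=T^*$: since $\mathcal{T}$ is an open half-line its infimum is not attained, so in the sub-case $\mathfrak{r}_2^+\ge\epsilon$ one has exactly $\mathfrak{r}_2^+\mathrm{e}^{-\gamma T^*}=\epsilon$ at $T=T^*$ rather than a strict inequality. I would dispose of it in one of two routine ways. In the complementary sub-case $\mathfrak{r}_2^+<\epsilon$ one has $T^*=0$ and the strict bound $\mathfrak{r}_2^+\mathrm{e}^{-\gamma\cdot 0}=\mathfrak{r}_2^+<\epsilon$ already holds at the endpoint, so no issue arises there. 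For the remaining sub-case I would either state the conclusion for $T>T^*$, or enlarge $T^*$ by an arbitrarily small margin; either adjustment is harmless because $T^*$ is used downstream only as a finite truncation horizon beyond which the trajectory is known to lie within $\mathfrak{B}(\mathfrak{r}_1+\epsilon)$, feeding the $T$-differential invariant computation. With the endpoint thus handled, the estimate $\|\bm{\xi}_{\bm{\phi}}^{\mathbf{w}}(T)\|_{\infty}-\mathfrak{r}_1<\epsilon$ holds for all $T\ge T^*$, as required.
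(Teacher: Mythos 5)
Your proof is correct and follows essentially the same route as the paper's: invoke Theorem~\ref{thm:expenential} to obtain $\|\bm{\xi}_{\bm{\phi}}^{\mathbf{w}}(t)\|_{\infty}\le \mathfrak{r}_1+\mathfrak{r}_2^{+}\mathrm{e}^{-\gamma t}$ and then use the strict monotonic decay of $\mathfrak{r}_2^{+}\mathrm{e}^{-\gamma t}$ to conclude it falls below $\epsilon$ past $T^*$. Your extra observation about the endpoint $T=T^*$ (where, in the sub-case $\mathfrak{r}_2^{+}\ge\epsilon$, the bound degenerates to equality because the infimum of the open half-line is not attained) is a minor but genuine sharpening that the paper's proof glosses over.
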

\begin{proof}
	The proof for the \emph{necessity} part is straightforward.
	%,  because a differential invariant is certainly a $T$-differential invariant by Definition \ref{def:diff_invariant}.
	For the \emph{sufficiency} part, by Theorem \ref{thm:expenential},
	$\| \bm{\xi}_{\bm{\phi}}^{\mathbf{w}}(t) \|_{\infty} \leq \mathfrak{r}_1 + \mathfrak{r}_2^+ \mathrm{e}^{-\gamma t} $
	for any $t\geq 0$, $\bm{\phi}$ and $\mathbf{w}(t)$. Moreover, $\mathfrak{r}_2^+ \mathrm{e}^{-\gamma t}$  is strictly monotonically decreasing w.r.t $t$,
	hence there exists an upper bound $T^*$ such that for any $t\geq T^*$, $\mathfrak{r}_2^+ \mathrm{e}^{-\gamma t}$ is exponentially close to the ball $\mathfrak{B}( \mathfrak{r}_1)$ within a 
	prescribed precision $\epsilon$. 
	%within a accuracy $\epsilon$. 
	Therefore, for the given precision $\epsilon$, for any $t\geq T^*$, all trajectories starting from $\bm{\phi}$ are exponentially convergent to the ball $\mathfrak{B}(\mathfrak{r}_1 )$.
\end{proof}

\oomit{
From Theorem \ref{thm:reduceT}, we have that any differential invariant of \eqref{dde:linearDDE} corresponds  to  one of its $T$-differential invariants, if $M=A+B$ is a Metzler matrix in 
 \eqref{dde:linearDDE}.}

\begin{lemma}
\label{lem:attractor}
Suppose in 
 \eqref{dde:linearDDE} $M=A+B$ is a Metzler matrix with one of the properties in Proposition~\ref{pro:Metzler}. Given $\epsilon>0$, the ball $\mathfrak{B}(\mathfrak{r}_1+\epsilon)$ is an attractor, i.e., any trajectory originating from a state in $\mathfrak{B}(\mathfrak{r}_1+\epsilon)$ is guaranteed to evolve into $\mathfrak{B}(\mathfrak{r}_1+\epsilon)$.
\end{lemma}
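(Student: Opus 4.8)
The plan is to prove positive invariance \emph{directly} by a Nagumo-type boundary argument that exploits the Metzler structure through the positive vector $\bm{\zeta}$ supplied by Proposition~\ref{pro:Metzler}, rather than trying to read invariance off the exponential estimate of Theorem~\ref{thm:expenential}. First I would fix an initial history $\bm{\phi}$ with $\|\bm{\phi}\|_{\infty}\le \mathfrak{r}_1+\epsilon$ and a disturbance with $\|\mathbf{w}(t)\|_{\infty}\le w_{max}$, and argue by contradiction: if the trajectory ever left the ball, let $t^*$ be the first instant at which the weighted norm $\max_i |x_i(t)|/\zeta_i$ reaches its boundary value $\rho$. The point of using a \emph{first-exit} time is that on $[t^*-r,t^*]$ the whole history segment still lies in the ball, i.e. $|x_i(s)|\le \rho\zeta_i$, which is exactly what is needed to control the delayed coupling term $B\mathbf{x}(t-r)$ that distinguishes the DDE case from the ODE case.

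The core is then the sign of the one-sided (Dini) derivative at $t^*$. Picking a coordinate $j$ realizing the weighted maximum, say with $x_j(t^*)=\rho\zeta_j>0$, and bounding each summand of $\dot{x}_j(t^*)=\sum_i A_{ji}x_i(t^*)+\sum_i B_{ji}x_i(t^*-r)+\sum_l C_{jl}w_l(t^*)$ using the sign pattern behind the Metzler assumption (nonnegative off-diagonal entries of $A$ and nonnegative $B$) together with $|x_i(\cdot)|\le \rho\zeta_i$, collapses the first two sums into $\rho\,(M\bm{\zeta})_j$. Since $(M\bm{\zeta})_j\le -\eta$ and the disturbance contributes at most $C_{max}w_{max}$, this yields $\dot{x}_j(t^*)\le -\rho\eta+C_{max}w_{max}$, which is $\le 0$ \emph{precisely because} the radius satisfies $\rho\ge \mathfrak{r}_1=\frac{C_{max}w_{max}}{\eta}$. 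The case $x_j(t^*)<0$ is symmetric, so the field points inward on the boundary and the trajectory cannot cross it, giving invariance.

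I expect the main obstacle to be reconciling the norm in which invariance is \emph{natural} with the one in which the lemma is \emph{stated}. The computation above makes the weighted ball $\mathfrak{B}_{\bm{\zeta}}(\rho)=\{\mathbf{x}:|x_i|\le \rho\zeta_i\}$ invariant for every $\rho\ge \mathfrak{r}_1$, whereas $\mathfrak{B}(\mathfrak{r}_1+\epsilon)$ is the \emph{unweighted} $\infty$-ball, and the two coincide only when $\bm{\zeta}=\mathbf{1}$. Because $\|\bm{\zeta}\|_{\infty}=1$ forces $\zeta_i\le 1$, one only has $\mathfrak{B}_{\bm{\zeta}}(\rho)\subseteq \mathfrak{B}(\rho)$, so some care is required: I would either phrase the trapping region in the $\bm{\zeta}$-norm, or choose $\rho$ large enough that $\mathfrak{B}(\mathfrak{r}_1+\epsilon)\subseteq \mathfrak{B}_{\bm{\zeta}}(\rho)$ and accept the resulting over-approximation when transferring back to $\mathfrak{B}(\mathfrak{r}_1+\epsilon)$.

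A lighter alternative, closer to the preceding results, is to invoke Theorem~\ref{thm:expenential} directly: the bound $\|\bm{\xi}_{\bm{\phi}}^{\mathbf{w}}(t)\|_{\infty}\le \mathfrak{r}_1+\mathfrak{r}_2^+\mathrm{e}^{-\gamma t}$ is monotonically decreasing and tends to $\mathfrak{r}_1<\mathfrak{r}_1+\epsilon$, so every trajectory is eventually trapped in $\mathfrak{B}(\mathfrak{r}_1+\epsilon)$ and thereafter cannot escape. This establishes the attractor property in the asymptotic sense, but I would flag that, since the prefactor $\beta=(\bm{\zeta})^{-1}_{\min}\ge 1$ can make $\mathfrak{r}_2^+$ exceed $\epsilon$ at $t=0$, it does \emph{not} by itself certify that a trajectory starting inside the ball stays inside from the outset; that genuinely-invariant conclusion is what the boundary argument above is needed for.
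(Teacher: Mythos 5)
The paper states Lemma~\ref{lem:attractor} without any proof, so there is nothing to compare your argument against; what follows is an assessment of your proposal on its own terms. Your first-exit/Nagumo argument is the right tool and its core computation is sound: at the first time the weighted maximum $\max_i |x_i(t)|/\zeta_i$ reaches $\rho$, the whole history segment is still controlled, the delayed term is absorbed, and $\dot{x}_j(t^*)\leq \rho\,(M\bm{\zeta})_j + C_{max}w_{max}\leq -\rho\eta + C_{max}w_{max}<0$ for $\rho>\mathfrak{r}_1$. Two caveats. First, collapsing the sums into $\rho\,(M\bm{\zeta})_j$ uses that the off-diagonal entries of $A$ and all entries of $B$ are individually nonnegative; the lemma's stated hypothesis is only that $M=A+B$ is Metzler, which does not imply this. (The stronger sign assumption is in fact what the cited source \cite{trinh2014new} requires, so this is a gap in the paper's hypotheses as much as in your proof, but it should be stated.) Second, your ``lighter alternative'' via Theorem~\ref{thm:expenential} is correctly discounted: since $\beta=(\bm{\zeta})^{-1}_{\min}\geq 1$, the bound at $t=0$ can exceed $\mathfrak{r}_1+\epsilon$, so it yields only eventual absorption, not invariance.

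The norm mismatch you flag is not a technicality to be smoothed over --- it is fatal to the lemma as literally stated. Your argument proves invariance of the weighted box $\{\mathbf{x}: |x_i|\leq\rho\zeta_i\}$, and the unweighted ball $\mathfrak{B}(\mathfrak{r}_1+\epsilon)$ is genuinely \emph{not} positively invariant when $\bm{\zeta}\neq\mathbf{1}$. A concrete counterexample within the lemma's hypotheses: take $n=2$, $B=0$, $C=0$, and $A=M=\left(\begin{smallmatrix}-1 & 10\\ 0 & -100\end{smallmatrix}\right)$, which is Metzler with $\mu(M)<0$ and $\mathfrak{r}_1=0$; starting at $(\epsilon,\epsilon)\in\mathfrak{B}(\epsilon)$ one gets $\dot{x}_1=9\epsilon>0$, so the trajectory leaves $\mathfrak{B}(\epsilon)$ immediately. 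Hence the correct conclusion is exactly the one your boundary argument delivers: the trapping region must be phrased in the $\bm{\zeta}$-weighted norm (or one must enlarge to a weighted box containing $\mathfrak{B}(\mathfrak{r}_1+\epsilon)$ and accept the over-approximation), and the paper's use of $\mathfrak{B}(\mathfrak{r}_1+\epsilon)$ as a terminal invariant in Algorithm~\ref{alg:Invariant} inherits this issue. In short: your proof is correct for the repaired statement, and your diagnosis of why the stated version resists proof is accurate.
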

%\todo{What role does Lemma \ref{lem:attractor} play here? Is it necessary?}

%\subsection{Safe reachable set computation in T}\label{reachsec}
Theorem \ref{thm:reduceT} and Lemma \ref{lem:attractor} set up a sound guarantee that   synthesizing 
 differential invariant problem can be reduced to  synthesizing 
 $T$-differential invariant problem.
Now we are ready to introduce the second step of synthesizing differential invariants.

\paragraph{\textbf{Computing an over-approximation of reachable set within $T^*$:}} we adapt the  method in \cite{Reissig2016} for ODEs to compute an over-approximation of the reachable set for \eqref{dde:linearDDE} with  a \emph{growth bound} defined below.

\begin{definition}[Growth bound]
	Given  $t>0$, $\bm{\rho} \in \mathbb{R}^n_+$ and a compact set $K \subseteq I(q)$,  a growth bound is a map 
	$\Lambda: \mathbb{R}^n_{+} \times \mathbb{R}_{+} \mapsto \mathbb{R}^n_{+}$ 
	satisfying the following conditions:
	\begin{itemize}
		\item $\Lambda (\bm{\rho}, t) \geq \Lambda (\bm{\rho}' , t)$ whenever $\bm{\rho} \geq \bm{\rho}'$,
		\item given  $\bm{\phi}(t) \in  \mathcal{C}\{ [-r, 0], K \}$, then 
		 \begin{equation*}
		\sup_{\theta_1,\theta_2\in [-r,0]}	|\mathbf{x}_{t}^{\bm{\phi}}(\theta_1) - \mathbf{x}_{t}^{\bm{\phi}}(\theta_2) | \leq \Lambda (\sup_{\theta_1,\theta_2\in [-r,0]} |\bm{\phi}(\theta_1) - \bm{\phi}(\theta_2)|, t) \, ,
		 \end{equation*}
	where  $|\cdot |$ represents the element-wise absolute value.
	\end{itemize}
\end{definition}

Theorem \ref{thm:computeGrowth} below tells how to construct a specific growth bound $\Lambda(\cdot, \cdot)$.
\begin{theorem} %computation of growth
 \label{thm:computeGrowth}
	Given a  $\bm{\rho} \in \mathbb{R}^n_+$,  let $t >0$, the map $\Lambda(\bm{\rho}, t)$,  defined by
	\begin{equation*}
		\Lambda (\bm{\rho}, t) =   \mathrm{e}^{L t} \bm{\rho} + \int_{0}^{t} \mathrm{e}^{L(t-s)} |B|\Lambda ( \bm{\rho}, s-r)  \ \mathrm{d}s \, ,
	\end{equation*}
	is a \emph{growth bound} of \eqref{dde:linearDDE}, 
	where  $L$ satisfies 
	\begin{equation*}
		L_{ij} \geq 
		\left\{
             \begin{array}{lr}
             A_{ij}, &  i=j  \\
             |A_{ij}|, & \text{otherwise} 
             \end{array}
\right..
	\end{equation*}
\end{theorem}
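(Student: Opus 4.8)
The plan is to recognise $\Lambda(\bm{\rho},\cdot)$ as the unique solution of a \emph{cooperative} majorant delay equation that dominates the spread of the true solution window, and then to read off the two defining properties of a growth bound from a comparison argument whose engine is the Metzler structure of $L$.

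First I would fix $\theta_1,\theta_2\in[-r,0]$ and study $\mathbf{v}(t)=\mathbf{x}_t^{\bm{\phi}}(\theta_1)-\mathbf{x}_t^{\bm{\phi}}(\theta_2)=\bm{\xi}_{\bm{\phi}}^{\mathbf{w}}(t+\theta_1)-\bm{\xi}_{\bm{\phi}}^{\mathbf{w}}(t+\theta_2)$. Differentiating \eqref{dde:linearDDE} at the two shifted arguments and subtracting gives $\dot{\mathbf{v}}(t)=A\mathbf{v}(t)+B\mathbf{v}(t-r)+C[\mathbf{w}(t+\theta_1)-\mathbf{w}(t+\theta_2)]$; since a growth bound quantifies the spreading induced by the initial data, I would work with the homogeneous variational equation $\dot{\mathbf{v}}(t)=A\mathbf{v}(t)+B\mathbf{v}(t-r)$. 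Passing to componentwise absolute values and using that $L_{ii}\ge A_{ii}$, $L_{ij}\ge|A_{ij}|$ for $i\ne j$, together with $\mathrm{sign}(v_i)\,A_{ij}v_j\le L_{ij}|v_j|$ and $\mathrm{sign}(v_i)\,B_{ij}v_j(t-r)\le|B|_{ij}\,|v_j(t-r)|$, I obtain the componentwise upper Dini inequality $D^{+}|\mathbf{v}(t)|\le L\,|\mathbf{v}(t)|+|B|\,|\mathbf{v}(t-r)|$. Because the right-hand side is independent of $(\theta_1,\theta_2)$ and monotone in $|\mathbf{v}|$, a standard Dini-derivative-of-supremum (Danskin-type) argument lets me take the supremum over the pair, so the spread $\mathbf{p}(t):=\sup_{\theta_1,\theta_2\in[-r,0]}|\mathbf{x}_t^{\bm{\phi}}(\theta_1)-\mathbf{x}_t^{\bm{\phi}}(\theta_2)|$ satisfies $D^{+}\mathbf{p}(t)\le L\mathbf{p}(t)+|B|\mathbf{p}(t-r)$ with the initial bound $\mathbf{p}\equiv\bm{\rho}$ on $[-r,0]$, where $\bm{\rho}=\sup_{\theta_1,\theta_2}|\bm{\phi}(\theta_1)-\bm{\phi}(\theta_2)|$.

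Next I would introduce the majorant equation $\dot{\mathbf{q}}(t)=L\mathbf{q}(t)+|B|\mathbf{q}(t-r)$ with constant history $\mathbf{q}\equiv\bm{\rho}$ on $[-r,0]$ and prove $\mathbf{p}(t)\le\mathbf{q}(t)$ by the method of steps. On each block $[kr,(k+1)r]$ the delayed term is already fixed, the difference $\mathbf{d}=\mathbf{q}-\mathbf{p}$ obeys $D^{+}\mathbf{d}\ge L\mathbf{d}+|B|(\mathbf{q}(t-r)-\mathbf{p}(t-r))\ge L\mathbf{d}$ once $\mathbf{q}\ge\mathbf{p}$ is known on the previous block (using $|B|\ge\mathbf{0}$), and since $L$ is Metzler the semigroup $\mathrm{e}^{Lt}$ is entrywise nonnegative, so $\mathbf{d}\ge\mathbf{0}$ propagates forward and the induction closes. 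Solving the majorant by variation of constants yields
\[
  \mathbf{q}(t)=\mathrm{e}^{Lt}\bm{\rho}+\int_{0}^{t}\mathrm{e}^{L(t-s)}|B|\,\mathbf{q}(s-r)\,\mathrm{d}s,
\]
and by uniqueness of the solution of this integral equation $\mathbf{q}$ coincides with $\Lambda(\bm{\rho},\cdot)$; combined with $\mathbf{p}\le\mathbf{q}$ this is exactly the second growth-bound property. The monotonicity property follows from the same positivity: $\Lambda(\bm{\rho},t)-\Lambda(\bm{\rho}',t)$ solves the identical cooperative linear delay equation with nonnegative initial data $\bm{\rho}-\bm{\rho}'$, hence stays nonnegative, giving $\Lambda(\bm{\rho},t)\ge\Lambda(\bm{\rho}',t)$.

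The step I expect to be the main obstacle is the comparison for the \emph{delayed} inequality: one must justify differentiating the supremum and, more importantly, make precise how the Metzler/cooperative structure of $L$ together with $|B|\ge\mathbf{0}$ preserves the ordering through the delayed integral term, block by block. A secondary care point is the disturbance: the term $C[\mathbf{w}(t+\theta_1)-\mathbf{w}(t+\theta_2)]$ does not vanish for a single trajectory evaluated at two times, so the purely homogeneous form of $\Lambda$ must be justified either by restricting to trajectories sharing the same disturbance realization (the cover-cell spread in the sense of \cite{Reissig2016}) or by absorbing this bounded contribution into the additive ball term $\mathfrak{B}(\mathfrak{r}_1+\epsilon)$ used in Algorithm~\ref{alg:Invariant}; I would state this hypothesis explicitly.
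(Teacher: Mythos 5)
Your argument is correct and follows the same skeleton as the paper's proof --- form the difference of two solution values, observe that it satisfies the homogeneous variational equation $\dot{\mathbf{z}}=A\mathbf{z}+B\mathbf{z}(t-r)$, and compare against the cooperative majorant system governed by $L$ and $|B|$ --- but where the paper disposes of the comparison step in one line by citing Lemma~6 of \cite{Reissig2016}, you prove it from scratch (componentwise Dini inequality, method of steps, entrywise nonnegativity of $\mathrm{e}^{Lt}$ for Metzler $L$, variation of constants). This buys three things the paper's proof does not deliver: a self-contained justification that the cited ODE lemma really extends to the delayed term $|B|\Lambda(\bm{\rho},s-r)$ (the paper's final display even leaves a bare $\mathbf{z}(s-r)$ in the integrand where $\Lambda(\bm{\rho},s-r)$ is needed to close the bound); an explicit verification of the monotonicity clause $\Lambda(\bm{\rho},t)\geq\Lambda(\bm{\rho}',t)$, which the paper's proof never addresses; and, most substantively, your observation that for a single trajectory evaluated at two shifted times the disturbance contribution $C[\mathbf{w}(t+\theta_1)-\mathbf{w}(t+\theta_2)]$ does \emph{not} cancel. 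That last point is a genuine mismatch between the paper's definition of growth bound (window spread of one trajectory) and its proof (difference of two trajectories under the same disturbance realization, the reading in \cite{Reissig2016}); your proposal to either adopt the cell-spread reading or absorb the bounded residual into the additive ball term is the right fix, and making that hypothesis explicit is an improvement over the published argument. The only detail I would still press you on is that for $t<r$ part of the window $\mathbf{x}_t^{\bm{\phi}}(\theta)$ is the (merely continuous) initial function, so the variational ODE for $\mathbf{v}$ only holds once $t+\theta\geq 0$; on the initial block the bound must instead come directly from $\sup_{\theta_1,\theta_2}|\bm{\phi}(\theta_1)-\bm{\phi}(\theta_2)|\leq\bm{\rho}$, which your method-of-steps induction accommodates but should state.
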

\begin{proof}
Given any states $\mathbf{x}(t),\mathbf{y}(t) \in I(q)$, 
let $\mathbf{z}(t) = \mathbf{y}(t)-\mathbf{x}(t) $.
From \eqref{dde:linearDDE}, 
%\begin{equation*}
    $ \dot{\mathbf{z}}(t) = \dot{\mathbf{y}}(t) - \dot{\mathbf{x}}(t)
             =  A \mathbf{z}(t) + B \mathbf{z}(t-r)$.
%\end{equation*}
Hence, by Lemma 6 in \cite{Reissig2016}, we get
$$ |\mathbf{z}(t)| \leq  \mathrm{e}^{L t} \bm{\rho} + \int_{0}^{t} \mathrm{e}^{L(t-s)} |B|\mathbf{z}(s-r)  \mathrm{d}s.$$ 
\end{proof}

%We consider the discretization of the  state set defined by hyper-rectangle covers.
A \emph{hyper-rectangle} $[\![\mathbf{a},\mathbf{b}]\!]$ with $\mathbf{a},\mathbf{b}\in (\mathbb{R}\cup\{\pm\infty\})^n$ 
defines the set $\{x \in\mathbb{R}^n\mid a_i \leq x_i\leq b_i  \ \text{for} \ i\in\{1,\ldots,n\}\}$;
it is non-empty if $\mathbf{a} \leq  \mathbf{b}$ (element-wise).
For $\bm{\rho} \in \mathbb{R}^n_{+}$, we say that a hyper-rectangle $[\![ \mathbf{a},\mathbf{b}]\!]$
has the diameter $\bm{\rho}$ if $\frac{| \mathbf{b} - \mathbf{a}|}{2}  = \bm{\rho}$.
Given a set $K \in \mathbb{R}^n$, we denote by $\mathrm{C}(K,\bm{\rho})$ the set of covers of 
$K$, each of which is a cover of $K$, and consists of a set of hyper-rectangles with 
diameter $\bm{\rho}$. 

 Algorithm \ref{alg:Invariant} summarizes the second step  to construct a \emph{safe} 
  differential invariant: it repeats to compute the reachable set over time horizon $[0, T^*]$ in a forward way with step size $\tau$ (line 3-22); in each iteration, 
  it first finds a hyper-rectangle cover of the initial set, and any element in 
  the cover stands for an abstract state, that is a hyper-rectangle with diameter 
   $\bm{\rho}$ (line 5).  Then for each abstract state, \text{SafeR} is invoked to 
   compute the set of reachable states from the abstract state 
   within $\tau$ (line 6-11).
   If  the reachable set is not contained in the safe set, the abstract state will be refined, and \text{SafeR} is recursively invoked 
   until either the computed reachable set is contained in the safe set or the diameter of the abstract state is smaller 
   than the given threshold $\bm{\rho}_{th}$ (line 25-40); 
   this procedure terminates whenever  a fixed point is reached  (line \ref{line:fixpoint}) or the accumulated time is greater than $T^*$, and 
   returns the union of the computed reachable set before $T^*$ (i.e., $P_{i}(q)$) 
   and the over-approximation of the reachable set after $T^*$ (i.e., $\mathfrak{B}(\mathfrak{r}_1 +\epsilon)\cap \mathcal{S}_q$). 
%it returns the union of the conjunction of the 
%reachable set and the safe  in the ball $\mathfrak{B}(\mathfrak{r}_1+\epsilon)$, once 
%A forward reachable set over time horizon $[0, T^*]$ is computed, t

\oomit{
\begin{algorithm}
    \caption{SafeR}
\begin{algorithmic}
 \Procedure{SafeR}{$\bm{\rho}$, $\hat{x}$, $\tau$, $\mathcal{S}_q$}
        \State compute $\mathrm{R}_{\hat{\mathbf{x}}}$ over $t\in [0, \tau]$
        \If{$\mathrm{R}_{\hat{\mathbf{x}}}\subseteq \mathcal{S}_q $}
            \State \Return $\mathrm{R}_{\hat{\mathbf{x}}}$
        \ElsIf{$\mathrm{R}_{\hat{\mathbf{x}}}\cap \mathcal{S}_q \neq \emptyset 
        \wedge \bm{\rho}/2 \geq \bm{\rho}_{th}$ }
            \State  $\hat{Y} \gets$ $\mathrm{C}(\hat{\mathbf{x}}, \bm{\rho}/2)$ 
            \State $\mathrm{R}_{\hat{\mathbf{x}}} \gets \emptyset$
            \For{each $\hat{y} \in \hat{Y}$}
                \State  $R_y \gets$ SafeR($\bm{\rho}'$, $\hat{y}$, $\tau$, $\mathcal{S}_q$ )
                \State $ \mathrm{R}_{\hat{\mathbf{x}}} \gets \mathrm{R}_{\hat{\mathbf{x}}} \cup R_y$
            \EndFor
        \Else  \State \Return $\emptyset$
        \EndIf
       
        \State\Return $\mathrm{R}_{\hat{\mathbf{x}}}$
     \label{line:forend}
 \EndProcedure
\end{algorithmic}
\end{algorithm} }
% soundness and termination of Algorithm 1 
\begin{theorem}\label{thm:invariant_sound}
Given a delay dynamical system $(\Xi(q), \bm{f}_q, I(q))$ and a safety requirement $\mathcal{S}_q$, where $\bm{f}_q$ is with the form \eqref{dde:linearDDE} such that  $M=A+B$ is a Metzler matrix  satisfying one of the properties in Proposition~\ref{pro:Metzler}. Let $T^*$, $\epsilon$ and $\mathfrak{r}_1$ be defined by
Theorem \ref{thm:reduceT},  $\bm{\rho}$ and $\tau$ be the discretization parameter and  step size,  then Algorithm \ref{alg:Invariant}
terminates and returns a differential invariant  for \eqref{dde:linearDDE}.
\end{theorem}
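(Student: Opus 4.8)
The plan is to prove the two assertions of the theorem---that Algorithm \ref{alg:Invariant} halts and that its output is a differential invariant---separately, following the two nested procedures \textsc{DInvariant} and \textsc{SafeR}. For termination I would first show that every call to \textsc{SafeR} returns: each recursive call replaces the abstract state by a cover of diameter $\bm{\rho}/2$, and the branch is guarded by $\bm{\rho}/2 \geq \bm{\rho}_{th}$, so the recursion depth is bounded by $\log_2\frac{(\bm{\rho})_{\min}}{(\bm{\rho}_{th})_{\min}}$ and each level produces only finitely many sub-rectangles. The outer \texttt{while} loop of \textsc{DInvariant} then terminates because on every iteration that neither hits the fixed-point test on line \ref{line:fixpoint} nor breaks, the clock $t$ is incremented by the fixed step $\tau$ while being bounded above by $T^*$; since $P_i(q) \subseteq \mathcal{S}_q$ stays bounded, it admits a finite cover by hyper-rectangles of diameter $\bm{\rho}$, so each iteration does only finitely much work. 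Hence the whole procedure stops after finitely many steps.

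For soundness I would establish an over-approximation property and then an invariance property. The over-approximation rests on Theorem \ref{thm:computeGrowth}: the growth bound $\Lambda$ dominates the pointwise spread of the trajectory bundle issuing from an abstract state, so the one-step set $\mathrm{R}_{\hat{\mathbf{x}}}$ computed by \textsc{SafeR} contains every genuine state reachable within $\tau$ that stays in $\mathcal{S}_q$; because refinement merely replaces a rectangle by a finite cover of itself, this property is inherited through the recursion, and by induction on $i$ the set $P_i(q)$ over-approximates all $\mathcal{S}_q$-constrained reachable states on $[0,i\tau]$. I would then argue that the returned set $I^*(q) = P_i(q)\cup(\mathfrak{B}(\mathfrak{r}_1+\epsilon)\cap\mathcal{S}_q)$ is flow-invariant, treating the algorithm's exits in turn. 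At the fixed-point exit (line \ref{line:fixpoint}) the certified containment $\mathrm{R}_{P_i(q)}\subseteq P_i(q)\cup\mathfrak{B}(\mathfrak{r}_1+\epsilon)$ combined with the attractor property of Lemma \ref{lem:attractor} yields, by induction over the $\tau$-grid, that no constrained trajectory ever escapes $I^*(q)$. At the time-out exit (line \ref{line:tend}) the piece of the trajectory on $[0,T^*]$ lies in $P_i(q)$ by the over-approximation claim, while Theorem \ref{thm:reduceT} forces every state at $t\geq T^*$ into $\mathfrak{B}(\mathfrak{r}_1+\epsilon)$. At the break the constrained reachable set is empty, so no trajectory survives in $\mathcal{S}_q$ and the defining implication of Definition \ref{def:diff_invariant} holds vacuously. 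Safety, $I^*(q)\subseteq\mathcal{S}_q$, is immediate since \textsc{SafeR} only keeps subsets of $\mathcal{S}_q$ and the ball term is explicitly intersected with $\mathcal{S}_q$.

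I expect the main obstacle to be the invariance argument at the fixed-point boundary, where a finite-horizon over-approximation must be glued to the infinite-horizon dynamics: the fixed-point test certifies only a single $\tau$-step of containment, and promoting this to a genuine differential invariant (the $T=\infty$ case of Definition \ref{def:diff_invariant}) requires Lemma \ref{lem:attractor} to guarantee that a trajectory entering $\mathfrak{B}(\mathfrak{r}_1+\epsilon)$ never leaves it, together with an induction over the time grid that controls the growth-bound error so that it does not accumulate across steps and violate containment. A secondary subtlety, which I would address with a monotonicity argument on the growth bound $\Lambda$ from Theorem \ref{thm:computeGrowth}, is to confirm that the adaptively refined finite covers still return a true superset of the continuous reachable set---in particular that halting the refinement at $\bm{\rho}_{th}$ does not silently discard reachable safe states---and to reconcile the domain used inside the algorithm ($\mathcal{S}_q$) with the domain $I(q)$ appearing in the definition of a differential invariant, which is precisely where the requirement $I^*(q)\subseteq\mathcal{S}_q$ becomes essential.
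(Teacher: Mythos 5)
Your proposal follows essentially the same route as the paper's own proof: a case split on the algorithm's exit points, using the fixed-point containment $\mathrm{R}_{P_i(q)}\subseteq P_i(q)\cup\mathfrak{B}(\mathfrak{r}_1+\epsilon)$ plus induction for the line-\ref{line:fixpoint} exit, and Theorem~\ref{thm:reduceT} together with Lemma~\ref{lem:attractor} for the timeout exit. You are in fact more thorough than the paper, which dismisses termination as obvious and does not discuss the break exit or the preservation of the over-approximation property through \textsc{SafeR}'s adaptive refinement, so the subtleties you flag are real but not gaps relative to the published argument.
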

\begin{proof}
\textbf{Termination}:  Obviously. 
%, the algorithm must terminate as the loop is executed at most  $\lceil \frac{T^*}{\tau} \rceil$ times. 

\textbf{Soundness}: (i) If the algorithm returns the result at line 14, we have $\mathrm{R}_{P_i(q)} \subseteq P_i(q)\cup \mathfrak{B}(\mathfrak{r}_1 +\epsilon)$, then 
        \begin{align*}
        P_{i+1}(q)\cup \mathfrak{B}(\mathfrak{r}_1 +\epsilon) & =  P_{i}(q)\cup \mathrm{R}_{P_i(q)}\cup \mathfrak{B}(\mathfrak{r}_1 +\epsilon) \\
             &\subseteq  P_i(q)\cup \mathfrak{B}(\mathfrak{r}_1 +\epsilon) .
        \end{align*}
    By recursion,  $P_i(q)\cup \mathfrak{B}(\mathfrak{r}_1 +\epsilon)$ is an over-approximation of the reachable set over the infinite time horizon from the initial set for \eqref{dde:linearDDE}, i.e.,  $P_i(q)\cup (\mathfrak{B}(\mathfrak{r}_1 +\epsilon)\cap \mathcal{S}_q )$  is a safe differential invariant of  \eqref{dde:linearDDE}.  (ii) If the algorithm terminates at line 23,  evidently $P_i(q)$ is an over-approximation of the reachable set over time $[0, T^*]$ from the initial set of \eqref{dde:linearDDE}. By Theorem \ref{thm:reduceT} and Lemma \ref{lem:attractor}, $P_i(q)\cup (\mathfrak{B}(\mathfrak{r}_1 +\epsilon)\cap \mathcal{S}_q )$  is a  safe differential invariant for \eqref{dde:linearDDE}.
\end{proof}

%The proof of Theorem \ref{thm:invariant_sound} is presented in Appendix \ref{apx:A}.

\subsection{Nonlinear Systems}\label{subsec:nonlinear}
In this subsection, we generalize the two-step method in Section \ref{subsec:linear} for  nonlinear systems by means of linearization techniques. 
\oomit{and dynamics estimation. 
To the end, we need to change the first step of the differential invariant generation, which will be based on a new theory, while the second step to compute growth bound is the same as in the linear case. % nonlinear systems.
Therefore, in the following, we emphasize on how to reduce the $\infty$-invariant generation problem to the $T$-invariant generation problem for nonlinear systems. }

%linearization
For simplifying the presentation,  we first consider the form of DDE \eqref{dde:DDE} with one single delay, i.e., % term of the form 
\begin{equation}
    \label{dde:oneDDE}
    \left\{  
                 \begin{array}{lr}  
                 \begin{aligned}
                      \dot{\mathbf{x}}(t)= \bm{f}(\mathbf{x}(t), \mathbf{x}(t-r),\mathbf{w}(t)),
                 \end{aligned}
                 & t\in [0,\infty) \\ 
                 \mathbf{x}(t)= \bm{\phi}(t) , & t\in [-r,0]
                 \end{array}  
    \right..
\end{equation}
%is discussed first.
Let 
\begin{equation*}
    A= \left.\frac{\partial \bm{f}}{\partial \mathbf{x}(t)}\right|_{(\mathbf{0},\mathbf{0})} \text{and} \
    B = \left.\frac{\partial \bm{f}}{\partial \mathbf{x}(t-r)} \right|_{(\mathbf{0},\mathbf{0})}
\end{equation*}
 be the Jacobian matrices of DDE \eqref{dde:oneDDE} with respect to $\mathbf{x}(t)$ and $\mathbf{x}(t-r)$, evaluated at the origin $(\mathbf{0},\mathbf{0})$, respectively.
Thus, we can linearize DDE \eqref{dde:oneDDE} as 
\begin{equation}
\label{dde:linearization_higher}
 \left\{  
                 \begin{array}{lr}  
                 \begin{aligned}
                      \dot{\mathbf{x}}(t)= A\mathbf{x}(t) + B \mathbf{x}(t-r) + C \mathbf{w}(t) \\
                      + \mathbf{g}(\mathbf{x}(t),\mathbf{x}(t-r)),  
                 \end{aligned}
                 &  t\in [0,\infty) \\ 
                 \mathbf{x}(t)= \bm{\phi}(t) , &  t\in [-r,0]
                 \end{array}  
    \right. \, , 
\end{equation}
where $\mathbf{g}(\cdot,\cdot)$ is the higher-order term, which is  very closed to zero when  $\mathbf{x}$ is sufficiently close to the equilibrium. 
By dropping the higher-order term in \eqref{dde:linearization_higher}, we can obtain the approximation of \eqref{dde:oneDDE}, which is exactly the same linear system specified in  \eqref{dde:linearDDE}.

%local exponential ball convergence (theorem and proof)
\begin{definition}[Local ball-convergence]
 Given a $\mathfrak{r} >0$,  \eqref{dde:oneDDE} is called \emph{locally exponentially convergent} within the ball
    $\mathfrak{B}(\mathfrak{r})$, if there exist constant $\gamma > 0$, $\iota >0$ and a non-decreasing function $\kappa(\cdot)$
    such that for all  $\| \mathbf{w}(t) \|_{\infty} \leq w_{max}$
    \begin{equation*}
       \| \bm{\phi}(t)  \|_{\infty} \leq \iota \implies \| \bm{\xi}_{\bm{\phi}}^{\mathbf{w}}(t) \|_{\infty} \leq \mathfrak{r} + \kappa(\| \bm{\phi}\|_{\infty})\mathrm{e}^{-\gamma t},  \ \forall t\geq 0
    \end{equation*}
    holds. 
\end{definition}

\begin{theorem}
\label{thm:expenential_nonlinear}
%Given the system \eqref{dde:oneDDE}, let $A$, $B$, $C$ and $\mathbf{w}(t)$ denote the terms corresponding to DDE \eqref{dde:linearization_higher}.
Suppose that $M=A+B$ is a Metzler matrix in \eqref{dde:linearization_higher} satisfying  one of two properties in Proposition~\ref{pro:Metzler}, then there exist positive constants $\iota$,
$\beta$, $\gamma$, $\delta$ and $\eta$ such that  for all  $\|\mathbf{w}(t)\|_{\infty} \leq w_{max}$ 
\begin{align*}
          &\| \bm{\phi}(t) \|_{\infty} \leq \iota \implies 
      & \| \bm{\xi}_{\bm{\phi}}^{\mathbf{w}}(t) \|_{\infty} \leq \frac{\mathcal{G}}{\eta} + 
      \beta ( \|\bm{\phi} \|_{\infty} -\frac{\mathcal{G}}{\delta})^+ \mathrm{e}^{-\gamma t}, \ \forall t\geq 0
\end{align*}
holds, where $C_{max} = \max_{i\in n} \{ \sum_{j=1}^{m} C_{ij} \}  $. 
\end{theorem}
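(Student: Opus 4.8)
The plan is to treat the nonlinear system \eqref{dde:linearization_higher} as the linear system \eqref{dde:linearDDE} driven by an augmented forcing, and then to invoke Theorem~\ref{thm:expenential}. Concretely, I would rewrite the dynamics as $\dot{\mathbf{x}}(t) = A\mathbf{x}(t) + B\mathbf{x}(t-r) + \mathbf{p}(t)$ with the combined input $\mathbf{p}(t) := C\mathbf{w}(t) + \mathbf{g}(\mathbf{x}(t), \mathbf{x}(t-r))$. Since $M = A+B$ is Metzler with $\mu(M) < 0$, Proposition~\ref{pro:Metzler} supplies $\bm{\zeta} > \mathbf{0}$ with $M\bm{\zeta} < \mathbf{0}$, and the comparison argument underlying Theorem~\ref{thm:expenential} depends on the forcing only through its uniform $\infty$-norm bound. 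Hence, as soon as I can guarantee $\|\mathbf{p}(t)\|_{\infty} \le \mathcal{G}$ for all $t \ge 0$, the conclusion of Theorem~\ref{thm:expenential} yields exactly the claimed estimate, with $C_{max}w_{max}$ replaced by $\mathcal{G}$ and the same constants $\beta, \gamma, \delta, \eta$ read off from $\bm{\zeta}$.

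The next step is to control $\mathbf{g}$ locally. Because $\mathbf{g}$ collects the higher-order terms of the Taylor expansion of $\bm{f}$ at the origin, it is continuous with $\mathbf{g}(\mathbf{0}, \mathbf{0}) = \mathbf{0}$ and vanishing Jacobian there; consequently, for any radius $R$ there is a finite bound $G(R) := \sup\{ \|\mathbf{g}(\mathbf{x}, \mathbf{y})\|_{\infty} : \|\mathbf{x}\|_{\infty}, \|\mathbf{y}\|_{\infty} \le R \}$, and $G(R)/R \to 0$ as $R \to 0$. I would then set $\mathcal{G} := C_{max}w_{max} + G(R)$, so that whenever the trajectory stays inside $\mathfrak{B}(R)$ the combined input satisfies $\|\mathbf{p}(t)\|_{\infty} \le \mathcal{G}$ and the linear estimate is in force.

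The crux is to close the loop by an invariance (bootstrapping) argument ensuring the trajectory never leaves $\mathfrak{B}(R)$, so that the local bound on $\mathbf{g}$ stays valid for all $t \ge 0$. I would fix $R$ and then choose $\iota \le R$ together with a smallness bound on $w_{max}$ so that $\frac{\mathcal{G}}{\eta} + \beta(\iota - \frac{\mathcal{G}}{\delta})^+ < R$; this is feasible precisely because $G(R)$ is small relative to $R$ and $\beta$ is fixed. Formally, I would argue by contradiction using the maximal interval $T^{\dagger} = \sup\{ t \ge 0 : \|\bm{\xi}_{\bm{\phi}}^{\mathbf{w}}(s)\|_{\infty} \le R \text{ for all } s \in [-r, t] \}$: on $[0, T^{\dagger})$ the forcing obeys $\|\mathbf{p}(t)\|_{\infty} \le \mathcal{G}$, so the comparison estimate holds and forces $\|\bm{\xi}_{\bm{\phi}}^{\mathbf{w}}(t)\|_{\infty} < R$ whenever $\|\bm{\phi}\|_{\infty} \le \iota$; by continuity this strict inequality at $t = T^{\dagger}$ contradicts the definition of $T^{\dagger}$ unless $T^{\dagger} = \infty$. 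Thus the solution remains in $\mathfrak{B}(R)$ for all time and the bound holds for every $t \ge 0$.

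The main obstacle is exactly this self-consistency of the constants: the effective input bound $\mathcal{G}$ depends on the region $R$ through $G(R)$, while the region that the estimate certifies depends back on $\mathcal{G}$. Reconciling them is a genuine fixed-point-style constraint, and it is here that the \emph{local} (as opposed to global) character of the result and an implicit smallness requirement on $w_{max}$ enter unavoidably — if the disturbance is too large, the ball $\mathfrak{B}(\mathcal{G}/\eta)$ need not fit inside $\mathfrak{B}(R)$ and the bootstrapping breaks. A secondary technical point is that the comparison argument lives in the functional state space of the DDE, so the continuity step in the maximal-interval argument should be stated for the sup-norm of the history segment $\mathbf{x}_t^{\bm{\phi}}$ rather than for the pointwise value alone.
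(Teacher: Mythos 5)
Your proposal follows essentially the same route as the paper: absorb the higher-order term $\mathbf{g}$ into the disturbance, bound the combined forcing by $\mathcal{G} = C_{max} w_{max} + g_{max}$, and rerun the comparison estimate of Theorem~\ref{thm:expenential} with $C_{max} w_{max}$ replaced by $\mathcal{G}$. The paper's proof is a one-liner that simply posits a uniform bound $g_{max}$ on $\mathbf{g}$ together with a constraint on $\iota$ and says the rest is ``similar to Theorem~\ref{thm:expenential}''; your explicit bootstrapping argument --- keeping the trajectory inside the region where the local bound on $\mathbf{g}$ is valid, via the maximal-interval/continuity argument and the self-consistent choice of $R$, $\mathcal{G}$, and $\iota$ --- is exactly the detail the paper leaves implicit, and is the correct way to make that step rigorous.
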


\begin{proof}
Let $\mathcal{G} = C_{max} w_{max}+ g_{max}$, where $\|\mathbf{g}(\mathbf{x}(t),\mathbf{x}(t-r)) \|_{\infty} \leq g_{max}$,
and $\iota \leq  \frac{\mathcal{G}}{\eta} + 
      \beta ( \|\bm{\phi} \|_{\infty} -\frac{\mathcal{G}}{\delta})^+ $, 
      then it can be proved similar to that of Theorem \ref{thm:expenential}. 
\end{proof}

%The construction of the positive constants expect for $g_{max}$  can be done in the same way of Theorem \ref{thm:expenential}. 
\oomit{ To find $g_{max}$, we set the first function $H_1(g_{max})$ with  
\begin{equation}
\label{eq:1}
    \| \bm{\xi}_{\bm{\phi}}^{\mathbf{w}}(t) \|_{\infty} \leq H_1(g_{max}), 
\end{equation}
where $H_1(g_{max}) = \frac{\mathcal{G}}{\eta} + 
      \beta ( \|\bm{\phi} \|_{\infty} -\frac{\mathcal{G}}{\delta})^+ $.
For $\mathbf{g}(\mathbf{x}(t),\mathbf{x}(t-r))$, applying \eqref{eq:1}: 
\begin{equation*}
    \|\mathbf{x}(t)\|_{\infty} , \, \|\mathbf{x}(t-r)\|_{\infty} \leq H_1(g_{max}),
\end{equation*}
We have the second function $H_2(g_{max})$:
\begin{equation*}
\label{eq:2}
    \| \mathbf{g}(\mathbf{x}(t),\mathbf{x}(t-r)) \|_{\infty} \leq H_2(g_{max}).
\end{equation*}
$g_{max}$ can be obtained  by solving the inequality: $H_2(g_{max}) \leq g_{max}$.
}
%reduce to T
Similarly, Theorem \ref{thm:reduceT_nonlinear} says that 
%provides a witness that it is sufficient and necessary 
 the differential invariant generation problem for nonlinear DDEs can be 
 equivalently reduced to to the $T$-invariant generation problem.
\begin{theorem}
\label{thm:reduceT_nonlinear}
Given an initial function $\bm{\phi}$ and a disturbance $\mathbf{w}$ with $\|\mathbf{w}(t)\|_{\infty}\leq w_{max}, \forall t\geq 0$, 
for \eqref{dde:DDE}, 
% and the equilibrium $\mathbf{x_e}=\mathbf{0}$,
suppose that the positive constants $\iota$,  $\beta$, $\gamma$, $\delta$, $\eta$ and $g_{max}$ satisfy the condition in Theorem \ref{thm:expenential_nonlinear}, let $\mathfrak{r}_1=\frac{\mathcal{G}}{\eta} $ and $\mathfrak{r}_2 = \beta ( \|\bm{\phi} \|_{\infty} -\frac{\mathcal{G}}{\delta})$, and for any $\epsilon >0$, let 
%there exists a time bound $T^*$ determined by 
%\begin{equation*}
   $T^* = \max\{ 0, \inf \{ T \mid \forall t\geq T : \mathfrak{r}_2^+ \mathrm{e}^{-\gamma t}<\epsilon \}\}$, 
%\end{equation*}
%such that 
then for any $\| \bm{\phi}(t)  \|_{\infty} \leq \iota$ and any $T \geq T^*$ it follows 
$\|\bm{\xi}_{\bm{\phi}}^{\mathbf{w}}(T)\|_{\infty}- \mathfrak{r}_1 < \epsilon$. That is,  
a differential invariant of \eqref{dde:DDE} exactly corresponds to 
one of its $T$-differential invariant. 
\end{theorem}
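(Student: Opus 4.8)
The plan is to mirror the proof of Theorem~\ref{thm:reduceT} almost verbatim, the only changes being that the global exponential bound of Theorem~\ref{thm:expenential} is replaced by the local one of Theorem~\ref{thm:expenential_nonlinear}, and that the admissibility hypothesis $\|\bm{\phi}(t)\|_{\infty}\le\iota$ is carried through every step. First I would invoke Theorem~\ref{thm:expenential_nonlinear}: for any initial function with $\|\bm{\phi}(t)\|_{\infty}\le\iota$ and any disturbance with $\|\mathbf{w}(t)\|_{\infty}\le w_{max}$, it yields $\|\bm{\xi}_{\bm{\phi}}^{\mathbf{w}}(t)\|_{\infty}\le\mathfrak{r}_1+\mathfrak{r}_2^+\,\mathrm{e}^{-\gamma t}$ for all $t\ge 0$, with $\mathfrak{r}_1=\mathcal{G}/\eta$ and $\mathfrak{r}_2=\beta(\|\bm{\phi}\|_{\infty}-\mathcal{G}/\delta)$ exactly as in the theorem statement.

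Next I would analyse the decaying envelope. Since $\gamma>0$, the map $t\mapsto\mathfrak{r}_2^+\,\mathrm{e}^{-\gamma t}$ is nonnegative and, when $\mathfrak{r}_2^+>0$, strictly decreasing, so the set $\{T\mid\forall t\ge T:\mathfrak{r}_2^+\mathrm{e}^{-\gamma t}<\epsilon\}$ is a nonempty up-set and its infimum is well defined; the outer $\max\{0,\cdot\}$ merely caps $T^*$ at $0$ in the degenerate case where the decay term already falls below $\epsilon$ at $t=0$. Consequently, for every $T\ge T^*$ we have $\mathfrak{r}_2^+\mathrm{e}^{-\gamma T}<\epsilon$, whence $\|\bm{\xi}_{\bm{\phi}}^{\mathbf{w}}(T)\|_{\infty}-\mathfrak{r}_1\le\mathfrak{r}_2^+\mathrm{e}^{-\gamma T}<\epsilon$, which is precisely the claim. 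The concluding assertion of the theorem, that a differential invariant corresponds to a $T$-differential invariant, then follows in the same way as the analogous remark after Theorem~\ref{thm:reduceT}, since the trajectory is trapped within $\mathfrak{B}(\mathfrak{r}_1+\epsilon)$ for all $t\ge T^*$.

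The monotonicity argument above is entirely routine and identical to the linear case; the real obstacle is justifying that the premise of Theorem~\ref{thm:expenential_nonlinear} genuinely applies. In the nonlinear setting the constant $g_{max}$ bounding the higher-order term $\mathbf{g}(\mathbf{x}(t),\mathbf{x}(t-r))$ is only valid while the trajectory remains in a neighbourhood of the equilibrium, yet $g_{max}$ enters $\mathcal{G}=C_{max}w_{max}+g_{max}$ and hence the very envelope $\mathfrak{r}_1+\mathfrak{r}_2^+\mathrm{e}^{-\gamma t}$ that is supposed to confine $\mathbf{x}$. I would therefore need a self-consistency (fixed-point) check: $\iota$ must be small enough, and $g_{max}$ chosen compatibly, so that the envelope keeps both $\mathbf{x}(t)$ and $\mathbf{x}(t-r)$ inside the region where $\|\mathbf{g}\|_{\infty}\le g_{max}$ holds, i.e. solving an inequality of the form $H_2(g_{max})\le g_{max}$ for the induced bound $H_2$. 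Because Theorem~\ref{thm:expenential_nonlinear} is already established, this consistency is absorbed into the admissible range $\|\bm{\phi}\|_{\infty}\le\iota$ and may be treated as a black box; it is nonetheless worth flagging that, unlike the unconditional global linear statement, the present conclusion is inherently \emph{local}, holding only for initial functions lying in this basin.
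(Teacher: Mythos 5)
Your proposal is correct and follows essentially the same route as the paper, whose proof of Theorem~\ref{thm:reduceT_nonlinear} is literally ``Similar to the proof of Theorem~\ref{thm:reduceT}'': you substitute the local convergence bound of Theorem~\ref{thm:expenential_nonlinear} for the global one and rerun the monotone-decay argument, which is exactly what is intended. Your additional remarks on the self-consistency of $g_{max}$ and the inherently local nature of the conclusion go beyond what the paper writes down but match its surrounding Remark, and do not change the argument.
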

\begin{proof}
Similar to the proof of Theorem \ref{thm:reduceT}.
\end{proof}

% extend to multiple delay term in eq
\begin{remark}
 The fact that Theorem \ref{thm:reduceT_nonlinear} holds with the condition $\| \bm{\phi}(t)  \|_{\infty}  \leq \iota$ implies the locality of linearization. 
 Moreover, in order to alleviate  conservativeness of linearization, we need to 
 compute a tighter  parameter $g_{max}$, which is used to bound the high-order terms 
 discarded during linearization.
 \oomit{Nevertheless, for any $\bm{\phi'}$ such that $\| \bm{\phi'}(t)  \|_{\infty} > \iota$, if the reachable set from $\bm{\phi'}$ can be contained in $\mathfrak{B}(\iota)$ during  time $[T'-r,T']$, then it is sufficient to take $T'+T^*$ as the bounded time of invariant generation problem.}
 
 Note that the above discussion can be straightforwardly  extended to 
 DDEs \eqref{dde:DDE} with multiple delays by just letting  $M=A+ \sum_1^k B_i$.
\end{remark}

\oomit{
\begin{remark}
Note that our method  is based on ball-convergence analysis of linear DDEs using  Metzler matrices and linearization, different from the method proposed in \cite{feng2019taming} based on stability analysis and linearization for safety verification of DDEs. As we discussed before, our method is more powerful, and therefore could be applied to more DDEs, which will be  indicated in the case studies.  
\end{remark} }

\section{Switching Controller Synthesis with Delays and Perturbations}\label{sec:synthesis}
In this section we present our synthesis framework based on invariant generation 
for delay hybrid systems with perturbations modelled by dHA. 
\oomit{For each mode $q$, a differential invariant $I^*(q)$ can be computed by employing the results in Section \ref{sec:invariant}.
For each edge $e=(q, q')\in E$, 
we first strengthen its guard condition 
%a switching controller without jump delays, i.e., 
without considering  delay $D(e)$ based on the computed 
differential invariants in $q$, denote the strengthened guard condition 
$\widetilde{G}(e)$. 
Then,  
we reduce the problem to 
searching for a set $G^*(e)$ such that all trajectories  starting from it 
can reach to $\widetilde{G}(e)$ by taking delay $D(e)$ into consideration. }

\begin{algorithm}[t]
	\caption{Backward Reachable Set Computation}
\begin{algorithmic}[1]
%\Require~~
 \Procedure{BackReach}{$\widetilde{G}(e)$,   $D(e)$,   $I^*(q)$, $\bm{\rho}$, $\tau$}
  \State $G^*(e) \gets \emptyset$
  \State $\widehat{I}^*(q) \leftarrow  \mathrm{C}(I^*(q), \bm{\rho})$
  \State $\mathbf{d} \gets | \sup_{\mathbf{x}\in I^*(q), \|\mathbf{w}(t)\|_{\infty}\leq w_{max}} \bm{f} | \cdot D(e)$
  \For {each $\hat{\mathbf{x}} \in \widetilde{G}(e) \uplus \mathbf{d}$}
	 \State compute $\mathrm{R}_{\hat{\mathbf{x}}}(t) $ for $t\in [0, D(e)]$ with step size  $\tau$
	 \If{$\mathrm{R}_{\hat{\mathbf{x}}}(D(e))  \subseteq \widetilde{G}(e)\bigwedge \mathrm{R}_{\hat{\mathbf{x}}}(t)\subseteq I^*(q),$ \Statex\ $\qquad\qquad\qquad\qquad \quad\qquad  \forall t\in [0, D(e)]$}
		\State $G^*(e) \leftarrow G^*(e) \cup \hat{\mathbf{x}} $
		\EndIf
	 \If {$\mathrm{R}_{\hat{\mathbf{x}}}(D(e)) \cap \widetilde{G}(e) \neq \emptyset \bigwedge \mathrm{R}_{\hat{\mathbf{x}}}(t)\subseteq I^*(q),$ \Statex\ $\qquad\qquad\qquad\qquad \quad\qquad \forall t\in [0, D(e)]$ }
		 \State refine $\hat{\mathbf{x}}$ with $\bm{\rho}' \gets \bm{\rho}/2$, ($\bm{\rho}' \geq \bm{\rho}_{th}$)
		% \STATE
	 \EndIf
  \EndFor
  \State\Return $G^*(e)$
  \EndProcedure
\end{algorithmic}
\label{alg:backreachset}
\end{algorithm}

\subsection{Computing Guards of Discrete Jumps}\label{subsec:backsec}
In this subsection, by computing a reachable set 
from the set of states reachable to the edge 
without the jump delay backwards, we focus on how to synthesize a new guard $G^*(e)$ of each discrete 
jump $e$  in order to guarantee the safety  when taking the jump delay into consideration. 

\begin{definition}[Backward reachable set]
	\label{def:backreachset}
	For a mode $q$ of the dHA $\mathcal{H}$ : $(\Xi(q), \bm{f}_q, I^*(q))$, 
	given a target region $\widetilde{G}(e)$ and a finite time $t=D(e)$, the reachable set 
	$G^*(e)$ from  the target region $\widetilde{G}(e)$ backwards after  $t$ time units is defined as 
	\begin{equation*}
	    	G^*(e) = 
	    	\left\{ \mathbf{x}_0 %\in \mathbb{R}^n 
	    	   \,\middle|\,
	    	\begin{array}{cc}
	    	    &\forall \ t\in [0,D(e)] , \forall \ \mathbf{w}(t). \\
	    	    &\bm{\xi}_{\mathbf{x}_0}^{\mathbf{w}} (D(e)) \in \widetilde{G}(e) \wedge 
	    	    \bm{\xi}_{\mathbf{x}_0}^{\mathbf{w}} (t) \in I^*(q)  
	    	\end{array}
	    	\right\}.
	\end{equation*} 
\end{definition}

\oomit{
Definition \ref{def:backreachset} describes
a set $G^*(e)$  that all trajectories  starting from  it  will reach to the  target
set with $t=D(e)$ time units, while all possible reachable states stay 
within $ I^*(q)$ during the 
time interval $[0,D(e)]$, regardless of perturbations.

As the best of our knowledge, few approaches have been developed to calculate the backward reachable set for DDEs. One option is to seek for all states in the invariant 
but this may be wasteful if, e.g., only a small part of state space is able to reach the target region in the bounded time $D(e)$.
We come up with a local algorithm to search for the backward 
reachable set locally instead of a global search.}

\oomit{
\begin{remark}
In Definition \ref{def:backreachset}, we define the backward reachable set as a set of  states in the form of $n$-dimensions vectors rather than functionals
 because a guard condition represented in terms of functionals 
 is difficult to check its satisfiability; moreover, 
 computing reachable sets according to Theorem \ref{thm:computeGrowth} is much easier and more intuitive. 
 Actually, it is not hard to prove the equivalence between the two forms of the representations of states. 
 %to use the initial states set of an underlying system.
\end{remark} }

%discretization
Now, we present an algorithm, which is presented in Algorithm~\ref{alg:backreachset}, to under-approximate %\todo{BX: under-approximate?} 
the backward reachable set based on discretization in a symbolic way. The basic idea is: 
Given a discretization step size $\bm{\rho} \in \mathbb{R}^n_{+}$, let 
$\widehat{I}^*(q)$ be in  $\mathrm{C}(I^*(q),\bm{\rho})$, and 
%maximal boundary
%For parameters $\bm{\rho}$ and $D(e)$, 
%Let
$\mathbf{d} \in \mathbb{R}^n$ be % a $n$-dimensional vector defined by 
$| \sup_{\mathbf{x}\in I^*(q), \|\mathbf{w}(t)\|_{\infty}\leq w_{max}} \bm{f} |  \cdot D(e)$, 
standing for the maximal distance following the DDE from 
$I^*(q)$ within the time delay $D(e)$ subject to any disturbance.
So, a necessary condition that an abstract state in $\widehat{I}^*(q)$
  can reach $\widetilde{G}$ within $D(e)$ is that 
  the distance from the state to $\widetilde{G}$  is less than or equal to $\mathbf{d}$, i.e., in 
  the following set
  %is heavily 
%In time $D(e)$, whether there exits a trajectory from depends on the speed of underlying system dynamics. 
%We inflates the boundary of $\widetilde{G}$ with an extend of 
%$\mathbf{s}$ by an operator $\uplus$ : 
\begin{equation*}
    \widetilde{G}(e) \uplus \mathbf{d} = 
    \left\{ \hat{\mathbf{x}} \in \widehat{I}^*(q) \,\middle |\,
    \begin{array}{cc}
         \hat{\mathbf{x}} \in \widetilde{G}(e) \vee 
        \exists\ \hat{\mathbf{x}}'\in \widetilde{G}(e): \\
	    | ctr(\hat{\mathbf{x}}) - ctr(\hat{\mathbf{x}}') | \leq \mathbf{d}+4\bm{\rho} 
    \end{array}
	\right\}\, , 
\end{equation*}
where $ctr(\hat{\mathbf{x}})$ is the \emph{center} of the abstract  state 
$\hat{\mathbf{x}}$,  standing for the hyper-rectangle $[\![\mathbf{a},\mathbf{b}]\!]$, 
i.e., the point $(\frac{1}{2}(b_1-a_1),  \ldots, \frac{1}{2}(b_n - a_n))$.
Obviously,  all trajectories starting from  the set $\widehat{I}^*(q) \setminus (\widetilde{G}(e) \uplus \mathbf{d})$ are impossible to reach to $\widetilde{G}(e)$ within  $D(e)$.
Therefore, we only need to consider the set $\widetilde{G}(e) \uplus\mathbf{d}$.
For each abstract state $\hat{\mathbf{x}} \in \widetilde{G}(e) \uplus \mathbf{d}$, the over-approximation of 
the backward reachable set $\mathrm{R}_{\hat{\mathbf{x}}}$ is calculated by checking 
whether it keeps  
$I^*(q)$ satisfied over $[0, D(e)]$  and all elements of $\mathrm{R}_{\hat{\mathbf{x}}}(D(e))$ should satisfy 
$\widetilde{G}(e)$. If the answer is \emph{yes}, then it is done; otherwise, if  some of reachable states in $\mathrm{R}_{\hat{\mathbf{x}}}(D(e))$ satisfy $\widetilde{G}(e)$, then refine the abstract state $\hat{\mathbf{x}}$ with a smaller discretization parameter $\bm{\rho}'$, say  $\bm{\rho}'=\bm{\rho}/2$. 
   %, as long as $\bm{\rho}'$ is greater than the given threshold $\bm{\rho}_{th}$. 
Repeat the above procedure until  all abstract states in the set $\widetilde{G}(e) \uplus \mathbf{d}$ are done.
%Algorithm \ref{alg:backreachset}  summarizes this approach.

\subsection{Switching Controller Synthesis}\label{sec:framework}
\begin{algorithm}[t]
\caption{Switching Controller Synthesis}
 \begin{algorithmic}[1]
 \Require ~~ 
 $\mathcal{H}=(Q, X, U, I, \Xi , F , E, D, G, R )$, $\mathcal{S}$, $\bm{\rho}$, $\tau$, $\{T^*_{q}\mid  q\in Q \}$,  \ $ \{\mathfrak{r}^q_{1}\mid  q\in Q\}$, $\{\epsilon_{q} \mid   q\in Q \}$
%  \ENSURE~~
%  $\mathcal{H^*}$;
    \State $ K_0 \gets \Xi$;  $I_0\gets \emptyset$; $\textit{flag} \gets \textbf{true}$; $\widetilde{G}_0\gets \emptyset$; $n\gets 0$      \label{line:initial}
    \While{ $\textit{flag}$}
    	 \State $n\gets n+1$
    	\For{each $q\in Q$}
    	      % \If{$n\geq 1$}
    	    	\State $K_n(q) \gets K_{n-1}(q) \cup \{ 
    	                    \bm{\phi} \mid
    	                          \exists e=(q', q)\in E, \exists t>0,$
    	                      \Statex $\qquad \quad \exists \theta \in [-r_k^q,0].\ \bm{\phi}=R(e, \mathbf{x}_{t}^{\bm{\phi}}(\cdot)) \wedge  \mathbf{x}_{t}^{\bm{\phi}}(\theta)\in \widetilde{G}_{n-1}(e)
    	                     \} $ \label{line:kn}
             %   \EndIf
    		\State $I_n(q)\gets$ \textbf{DInvariant}$( K_n(q) , \bm{f}_q, T^*_q, \tau, \bm{\rho},\mathcal{S}_q,\mathfrak{r}^q_{1} , \epsilon_q)$ \label{line:In}
    	    \State $\Xi(q) \leftarrow \Xi(q) \cap \mathcal{S}_q$ \label{line:xi}
    	   % \State $I_n(q)\leftarrow I_n(q)\cap I(q) $ \label{line:I_refine}
    	    %\State $U_{n}^q \leftarrow \{\mathbf{x}_{t}^{\bm{\phi}}(\cdot)\in U \mid \mathbf{x}_{t}^{\bm{\phi}}(\theta)\in I_n(q),\forall \theta \in [-r^q_{k},0]\}$ \label{line:Un}
    	\EndFor
    	\For{ each $e=(q,q')\in E$}
    	    \State $\widetilde{G}_n(e) \leftarrow  G(e) \cap I_n(q) \cap  
    	           \{ \mathbf{x}_{t}^{\bm{\phi}}(\theta)\in I_n(q) \mid \exists t>0, $
    	            \Statex $\qquad\qquad\qquad\qquad\qquad \forall \theta \in [-r_k^q,0].\ R(e,\mathbf{x}_{t}^{\bm{\phi}}(\cdot))\in U_{q'} \}$ \label{line:g}
    		\State $G^*_n(e)\gets$ \textbf{BackReach}($\widetilde{G}_n(e)$,   $D(e)$,   $I_n(q)$, $\bm{\rho},\tau$) \label{line:G} %\COMMENT{\textcolor{blue}{Algorithm} \ref{alg:backreachset}}
    	\EndFor
    	\If{$I_n == I_{n-1}$} \label{line:fixed}
    	    \State $\textit{flag}\gets \textbf{false}$
    	 \EndIf
    	\EndWhile 
    	\State $\Xi^* \leftarrow \{ \Xi(q) \mid q\in Q\}$
    	\State $I^* \leftarrow \{I_n(q)\mid q\in Q\}$
    	\State $U^* \leftarrow \{ \mathbf{x}_{t}^{\bm{\phi}}(\cdot)\in U \mid \exists q\in Q,\ \mathbf{x}_{t}^{\bm{\phi}}(\theta)\in I^*(q), \forall \theta \in [-r_k^q,0] \}$ \label{line:u}
    	\State $G^* \leftarrow \{(e,G^*_n(e)) \mid e\in E \}$
    	%\STATE $R^* \leftarrow \{R^*(e,\cdot) \mid e\in E\}$
	\If {$\forall e\in E, G^*(e) \neq \emptyset$ }
	   \State \Return $\mathcal{H^*}\leftarrow (Q, X, U^*, I^*, \Xi^* , F , E,D, G^*, R )$ \label{line:return}
	\EndIf
 \end{algorithmic}
\label{alg:switch_controller}
\end{algorithm}

To present our approach on switching controller synthesis, 
we need to introduce the notion of \emph{global invariant}, which can be formally defined  as follows. \oomit{framework (Algorithm \ref{alg:switch_controller}) is based on the \emph{global invariant} on all modes $Q$, to obtain a non-trivial switching controller if it exists. We define the global invariant here:}
\begin{definition}[Global invariant]
    Given a dHA $\mathcal{H}$, $I^* = \cup_{q\in Q} I^*(q)$ is \emph{global invariant} of $\mathcal{H}$,
    if $I^*$ satisfies the following conditions:
    \begin{enumerate}
        \item[(c1)] for each $q\in Q$, the set $I^*(q)$ is a differential invariant of $(\Xi(q), \bm{f}_q, I(q))$,
        \item[(c2)\label{c2}] for each $e=(q, q')\in E$, if $\bm{\xi}^{\mathbf{w}}_{\bm{\phi}} (t) \in  G^*(e)$,
         then  
         % \begin{align*}
            % & \forall \theta\in [t'-r^{q'}_k, t'], \ \bm{\phi}'(\theta) \in I(q') \implies \\  
              \[ \forall \theta\in [t'-r^{q'}_k, t'],\ \bm{\phi}'(\theta)\in I^*(q'), \]
         %\end{align*} 
         where $\bm{\phi}'(\cdot) = R(e, \mathbf{x}_{t'}^{\bm{\phi}}(\cdot) )$ and $t'=t+D(e)$.
       \end{enumerate}
    \label{def:inductive_invariant}
\end{definition}

\begin{figure}[!htbp]
    \centering
    \includegraphics[width=4.5cm, height=3cm]{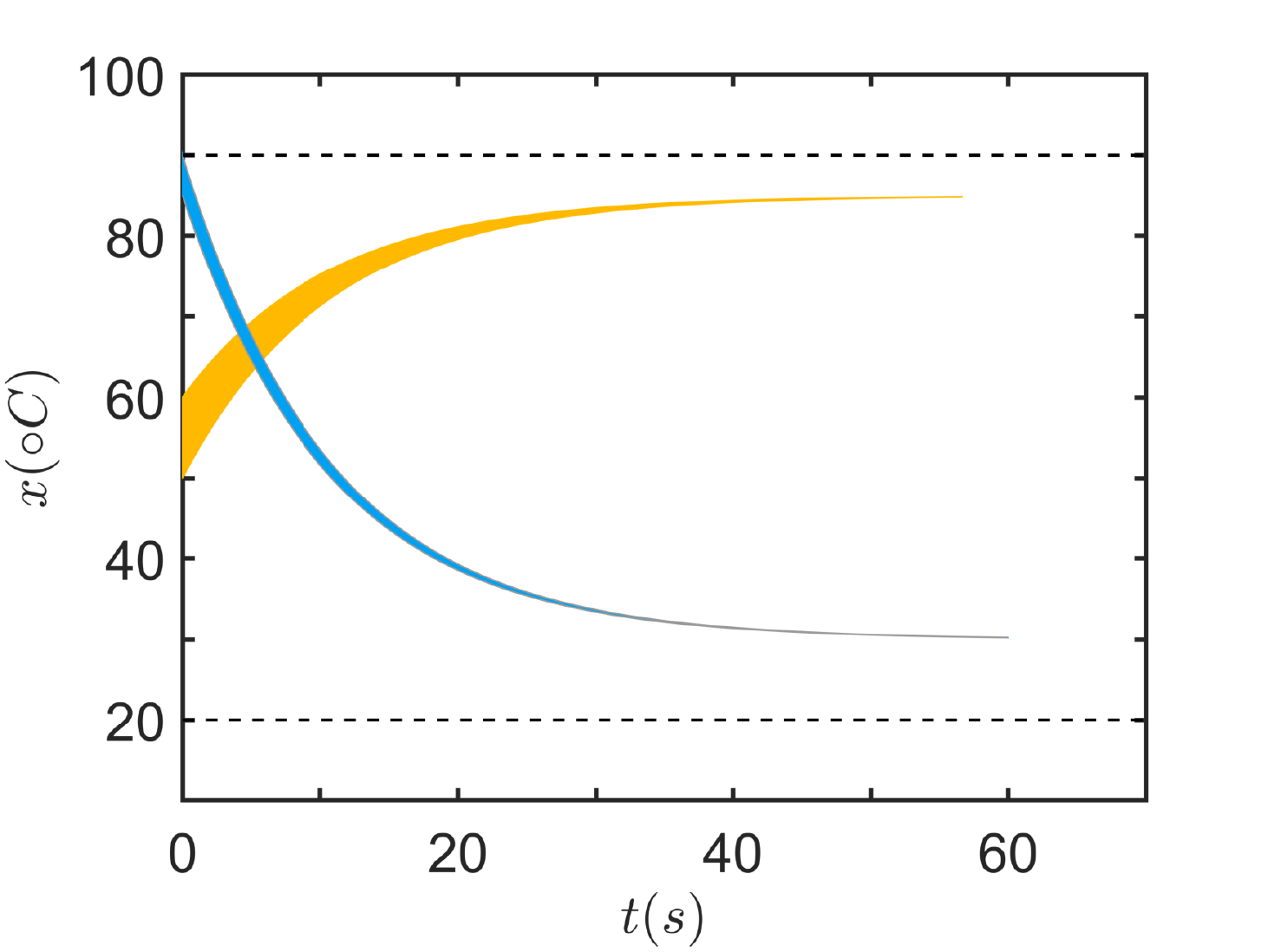}
    \caption{ The over-approximate reachable sets for two modes of the heating system. Black dashed lines denote the safety set.}  
     \label{fig:heatingreach}
 \end{figure}

Algorithm \ref{alg:switch_controller} presents a procedure to compute a global invariant 
repeatedly until the safety requirement can be guaranteed by a computed global invariant 
(when $\textit{flag}$ holds, line 2-16), then a switching controller solving Problem 1 can be  defined by the global invariant (line 17-23). In each iteration, for each mode (line 4-8), 
we compute a new mode invariant (line 5), a new differential invariant that can guarantee 
the safety requirement (line 6) by invoking Algorithm \ref{alg:Invariant} (line \ref{line:In}), and a new initial condition satisfying the safety requirement (line 7); for each discrete transition (line 9-12), we compute a new guard condition without considering the discrete delay (line 10), and then a 
new guard condition considering the discrete delay by calling Algorithm \ref{alg:backreachset} (line \ref{line:G});
then we test whether a global invariant that can guarantee the safety requirement is achieved (line 13-15). 
\oomit{
outlines an iterative synthesis framework for solving the switching controller problem of delay hybrid systems. The set $K_n$  represents the initial set using to  construct the invariants $I_n$  of dHA $\mathcal{H}$, initially, $K_0=\Xi$ and $I_0=\emptyset$ (line \ref{line:initial}).
In the $n$-th iteration, for each mode $q\in Q$, we generate its  differential invariant $I_n(q)$ by the algorithm \ref{alg:Invariant} (line \ref{line:In}). 
Based on the differential invariant, we refine the initial state set $\Xi(q)$, invariant $I_n(q)$  by intersecting with safe requirement or new invariant (lines \ref{line:xi}-\ref{line:I_refine}). 
By Definition \ref{def:inductive_invariant}, for the edge $e=(q,q')$,  a guard condition without jump delays $\widetilde{G}_n(e)$ is obtained by guaranteeing the reset functions are contained in $I_n(q')$ (line \ref{line:g}).
Furthermore, in order to deal with the delays in discrete transitions,
\emph{delay} guard condition $G^*_n(e)$ is calculated by Algorithm \ref{alg:backreachset} (line \ref{line:G}). 
Once a fixed point is found (line \ref{line:fixed}), the iteration ends up. The algorithm assigns the new results to a new dHA $\mathcal{H^*}$ and returns it (line \ref{line:return}). Otherwise, the algorithm goes into next  iteration. 
After updating $K_{n+1}$ with new guard functions from previous iteration (line \ref{line:kn}),  continue the process as above.}

The soundness of our approach is guaranteed by the following theorem. 
\begin{theorem}[Soundness] \label{thm:framework_sound}
Given a hybrid automaton $\mathcal{H}=(Q, X, U,I, \Xi , F , E, D, G, R )$ and its safety property $\mathcal{S}$, 
%if its switching controllers exist, i.e., $G^*(e)\neq \emptyset$, $\forall e\in E$,  
a dHA $\mathcal{H}^*=(Q, X, U^*, I^*, \Xi^*, F, E,D, G^*, R )$ constructed by Algorithm \ref{alg:switch_controller}  fulfills the three requirements (r1)-r(3) in Problem \ref{prob:synthesis}.
\end{theorem}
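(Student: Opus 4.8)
The plan is to verify the three requirements (r1)--(r3) of Problem~\ref{prob:synthesis} separately, treating (r1) as the core and reducing it to the claim that the set $I^*=\cup_{q\in Q} I_n(q)$ returned at the fixed point is a \emph{global invariant} in the sense of Definition~\ref{def:inductive_invariant}. The skeleton is: first establish the refinement (r2) directly from the assignments in Algorithm~\ref{alg:switch_controller}; then prove that termination of the loop (the test $I_n==I_{n-1}$) certifies conditions (c1) and (c2) of a global invariant; next lift the global-invariant property to safety of every hybrid execution by an induction over the transition sequence; and finally argue preservation of non-blocking (r3).

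For (r2) I would read the inclusions off the code. The assignment $\Xi(q)\leftarrow\Xi(q)\cap\mathcal{S}_q$ gives $\Xi^*\subseteq\Xi\cap\mathcal{S}$; each $I_n(q)$ is produced by \textbf{DInvariant} with argument $\mathcal{S}_q$, so by Theorem~\ref{thm:invariant_sound} it is a \emph{safe} differential invariant, whence $I^*(q)\subseteq\mathcal{S}_q$, and $I^*\subseteq I$ because the forward reachable set underlying \textbf{DInvariant} is confined to the domain $I(q)$; moreover $U^*$ is defined at line~\ref{line:u} as the restriction of $U$ to functionals whose range lies in $I^*$, so $U^*\subseteq U$. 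The guard clause of (r2), namely $\forall\,\mathbf{x}(t)\in G^*(e):\ \mathbf{x}(t+D(e))\in G(e)\cap I^*(q)$, follows from Definition~\ref{def:backreachset} together with the acceptance test of \textbf{BackReach} (Algorithm~\ref{alg:backreachset}): every state admitted into $G^*(e)$ flows into $\widetilde{G}_n(e)$ after exactly $D(e)$ time units while staying inside $I_n(q)$, and $\widetilde{G}_n(e)\subseteq G(e)\cap I_n(q)$ by the definition of $\widetilde{G}_n$ at line~\ref{line:g}.

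The heart is (r1). I would first prove the key lemma that, at the fixed point, (c1) holds because each $I_n(q)$ is output by \textbf{DInvariant} on the (now stabilized) initial set $K_n(q)$ and is therefore a differential invariant of $(\Xi(q),\bm{f}_q,I(q))$ by Theorem~\ref{thm:invariant_sound}, while (c2) holds because the fixed-point equation forces $K_n(q')$ to contain every reset image $R(e,\mathbf{x}_{t'}^{\bm\phi}(\cdot))$ arising from $G^*(e)$ (line~\ref{line:kn}), so that the differential invariant $I_n(q')$ built from $K_n(q')$ necessarily contains $\bm\phi'(\theta)$ for all $\theta\in[t'-r_k^{q'},t']$. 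With the global invariant in hand, I would run an induction over an execution $\pi$ of $\mathcal{H}^*$: the base case uses $\Xi^*\subseteq\mathcal{S}$; a continuous-evolution step keeps the state in $I^*(q)\subseteq\mathcal{S}_q$ by the differential-invariant property; and a discrete step with delay $D(e)$ is handled by noting that throughout $[t,t+D(e)]$ the system is still governed by $\bm{f}_q$ from a point of $G^*(e)\subseteq I_n(q)$, hence remains safe by the backward-reachability guarantee, and that after the reset it lands in $I^*(q')$ by (c2). This yields $\mathrm{R}_{\mathcal{H}^*}(t)\subseteq\cup_q I^*(q)\subseteq\mathcal{S}$ for all $t$.

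Finally, for (r3) I would use the hypothesis that $\mathcal{H}$ is non-blocking \emph{in} $\mathcal{S}$: from any initial state there is an infinite execution of $\mathcal{H}$ all of whose states lie in $\mathcal{S}$. The claim is that this witness execution survives the strengthening, since at every continuous phase its states lie in $\mathcal{S}_q$ and hence (being reachable and safe) in $I^*(q)$, and at every jump the pre-jump state satisfies the strengthened guard $G^*(e)$. I expect this last point to be the main obstacle: $G^*(e)$ is an \emph{under}-approximation of the backward reachable set, so the argument must reconcile the conservativeness of \textbf{BackReach} with the preservation of a safe infinite run. I anticipate it needs the fixed-point characterization of $G^*$ (rather than a single pass) to show that the states used by the safe execution of $\mathcal{H}$ are exactly those retained by $G^*$, which is where the delicate interplay between liveness and the synthesized guards has to be settled.
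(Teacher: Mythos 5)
Your proposal follows essentially the same route as the paper's proof: establish that the returned $I^*$ is a safe global invariant (conditions (c1) and (c2) of Definition~\ref{def:inductive_invariant}) by reading the fixed-point structure off lines~\ref{line:kn}, \ref{line:In}, \ref{line:g} and \ref{line:G} of Algorithm~\ref{alg:switch_controller}, and then check (r1)--(r3) against the assignments in the algorithm. Your treatment of (r1) and (r2) matches the paper's (and is somewhat more explicit, e.g.\ the induction over the transition sequence of an execution, which the paper leaves implicit). On the point you flag as the main obstacle --- reconciling the under-approximation computed by \textbf{BackReach} with preservation of non-blocking in (r3) --- be aware that the paper does not resolve it either: its entire argument for (r3) is the one-sentence assertion that ``$I^*$ contains all safe trajectories of $\mathcal{H}$,'' with no discussion of whether a state used by a safe infinite run of $\mathcal{H}$ necessarily survives into $G^*(e)$ after discretization, refinement thresholds $\bm{\rho}_{th}$, and the restriction to $\widetilde{G}(e)\uplus\mathbf{d}$. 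So your instinct that this step needs a genuine argument is sound; you have not missed a trick that the paper supplies.
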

\begin{proof}
We first prove that $I^*$ is a safe global invariant of $\mathcal{H}^*$ if  Algorithm \ref{alg:switch_controller} terminates and returns 
$\mathcal{H}^*=(Q, X, U^*, I^*, \Xi^* , F , E,$ $D, G^*, R )$, i.e., the conditions (c1) and (c2) in Definition \ref{def:inductive_invariant} with restriction of safety requirement $\mathcal{S}$ hold.
From line \ref{line:In} in Algorithm \ref{alg:switch_controller}, Definition \ref{def:diff_invariant} and the soundness of Algorithm \ref{alg:Invariant}, we have $I^*(q)$ is a safe differential invariant of $(\Xi^*(q),f_q,I(q)$, then (c1) holds.
Let $e=(q,q')\in E$, and $\bm{\xi}_{\bm{\phi}}^\mathbf{w}(t)\in G^*(e)$. 
From line \ref{line:kn}, \ref{line:In} in Algorithm \ref{alg:switch_controller}, we have
\begin{align*}
\left\{  \bm{\phi} \,\middle |\,
\begin{array}{cc}
     \exists e=(q', q)\in E, \exists t>0, \exists \theta \in [-r_k^q,0].\ \\
     \bm{\phi}=R(e, \mathbf{x}_{t}^{\bm{\phi}}(\cdot)) \wedge  \mathbf{x}_{t}^{\bm{\phi}}(\theta)\in \widetilde{G}(e)
\end{array}
\right\} \subseteq I^*(q).
\end{align*}
From line \ref{line:G}, it follows  
$$G^*_n(e) =  \textbf{BackReach}(\widetilde{G}_n(e),D(e), I_n(q), \bm{\rho},\tau),$$
which implies (c2) holds. %Thus, $I^*$ is a global invariant of $\mathcal{H}^*$.
Now, we prove  that (r1), (r2) and (r3) in Problem \ref{prob:synthesis} are satisfied.
Since each $I_n(q)$ is calculated by Algorithm \ref{alg:Invariant}, which can guarantee
 $I_n(q)$ is safe, thus $\mathcal{H}^*$ is safe, i.e., (r1) holds.
In Algorithm \ref{alg:switch_controller}, line \ref{line:xi} makes $\Xi^* \subseteq \Xi \cap \mathcal{S}$, line \ref{line:In} makes $I^* \subseteq I$.
From line \ref{line:u}, and $I^* \subseteq I$, it follows  $U^* \subseteq U$.
For any $e\in E$, as there exists $\theta \in [-r^q_{k},0]$ such that $\mathbf{x}^{\bm{\phi}}_{t}(\theta)\in G^*(e)$, 
hence $\mathbf{x}^{\bm{\phi}}_{t+D(e)}(\theta)\in \widetilde{G}(e)$.
From line \ref{line:g} and \ref{line:G}, it follows  $\mathbf{x}^{\bm{\phi}}_{t+D(e)}(\theta) \in G(e)\cap I^*(q)$. Thus, (r2) holds.
Clearly, $I^*$ contains all safe trajectories of $\mathcal{H}$, so if $\mathcal{H}$ is non-blocking with respect to  the safe requirement  $\mathcal{S}$, then $\mathcal{H}^*$ is also non-blocking, i.e., (r3) holds. 
%Above all, we have prove this theorem.
\end{proof}

\begin{example} \label{ex:heating}
We continue to consider the heating system example. 
  %and try to solve it by the approach introduced above. 
Let $K_1=0.25$, $K_2=0.15$,  $h=32$,  $w_1=0.5$, and $w_2=3$ for the dHA of the heating system in Example \ref{exm:heating_automaton}. 
For mode $q_1$, $M_{q_1}=-0.1$ is trivially a Metzler matrix. 
Applying Theorem \ref{thm:reduceT}, we have $T^*_{q_1} = 56.567$s. The same procedure applies  to mode $q_2$, we have $T^*_{q_2}= 60.043$s. By Algorithm \ref{alg:switch_controller},  we obtain differential invariants $I^*(q_1) = \{x \mid 30 \leq x\leq 84.91\}$ and  $I^*(q_2) = \{x \mid 30.2056 \leq x\leq 90\}$. %The switching controllers 
Also, strengthened guarded conditions on $e_1$ and $e_2$ can be easily computed as $G^*(e_1) =\{x \mid 30 \leq x\leq 84.30\}$ and $G^*(e_2) =\{x \mid 34.5 \leq x\leq 90\}$. 
The over-approximation of the reachable sets from the initial sets in the 
two modes respectively  are displayed in Figure \ref{fig:heatingreach}. %The approach in \cite{feng2019taming} does not work in this example as the DDEs here are not homogeneous.
\end{example}

% \begin{figure}[h]
%     \centering
%     \includegraphics[width=5cm]{figures/a.eps}
%     \caption{The identified rightmost roots $\mu$ for the homogeneous DDE corresponding to mode $q_1$.  The upper bound of the real part  is surely greater than zero.}  
%     \label{fig:root}
%  \end{figure}

\begin{table*}[!htbp]
%\parbox{.45\linewidth}{
\centering
\begin{minipage}[b]{0.4\textwidth}
 \begin{tabular}{|c|c|c|c|c|c |c|}
    \hline
      Mode   & $\epsilon$ &$\zeta$ & $\beta$ & $\eta$ & $\gamma$ & $\delta$ \\ 
     \hline
      $q_1$  & 0.001  &  $\begin{bmatrix}
            1  \\ 
            1  
        \end{bmatrix}$     & 1    & 12.58     &5.1642  & 12.58 \\
             \hline
              $q_2$ &  0.001  &  $\begin{bmatrix}
            1  \\ 
            1  
        \end{bmatrix}$   &1   &  24.66   &4.2270   &24.66   \\
             \hline
        \end{tabular}
\captionsetup{justification=centering}
  \caption{The value of parameters in Section~\ref{case:1}}
    \label{tab:lowpass}
%}
\end{minipage}
 \hspace{-4mm}
 \begin{minipage}[b]{0.5\textwidth}
%\parbox{.45\linewidth}{
    \begin{tabular}{|c|c|c|c|c|c |c|c|c|c|}
    \hline
      Mode   & $\epsilon$ &$\zeta$ & $\beta$ & $\eta$ & $\gamma$ & $\delta$   & $g_{max}$  & $\mathcal{G} $ & $\iota$   \\ 
     \hline
      $q_1$  & $10^{-4}$ &    $\begin{bmatrix}
            1  \\ 
            1  
        \end{bmatrix}$     & 1   & 0.8    &0.626  & 0.8 & 0.008    & 0.078   & 0.2  \\
             \hline
              $q_2$ &  $10^{-4}$   & $\begin{bmatrix}
            1  \\ 
          1  
        \end{bmatrix}$   &1   &  1.85  &0.88  & 1.85 & 0.0046   & 0.0746  &0.2 \\
             \hline
        \end{tabular}
 \captionsetup{justification=centering}
  \caption{The value of parameters in Section~\ref{case:2}} % the predator-prey populations system}
    \label{tab:prepre}%}
 \end{minipage}
 \end{table*}

% \begin{table*}[t]
% \centering
% \begin{tabular}{|c|c|c|c|c|c|c|c|c|c|}
% \hline
%  System & Mode   & $\zeta$ & $\beta$ & $\eta$ & $\gamma$ & $\delta$   & $g_{max}$  & $\mathcal{G} $ & $\iota$   \\ \hline
% \multirow{2}[1]{*}{Low-pass filter} & $q_1$ & $\begin{bmatrix}
%             1  \\ 
%             1  
%         \end{bmatrix}$     & 1    & 12.58     &5.16416  & 12.58 & $-$ & $-$ &$-$  \\
% & $q_2$ & $\begin{bmatrix}
%             1  \\ 
%             1  
%         \end{bmatrix}$   &1   &  24.66   &4.22703   &24.66 & $-$ &  $-$ &$-$ \\ \hline
% \multirow{2}[1]{*}{Predator-prey populations} & $q_1$  & $\begin{bmatrix}
%             1  \\ 
%             1  
%         \end{bmatrix}$     & 1   & 1    &1  & 1 &0.008    & 0.78   & 0.2 \\
% & $q_2$  &$\begin{bmatrix}
%             1  \\ 
%           1  
%         \end{bmatrix}$   &1   &  2  &2  &2  & 0.008   & 0.78  &0.2  \\ \hline
% \end{tabular}
% \caption{The value of parameters for the case study systems}
% \label{tab:value}
% \end{table*}
\section{Experimental Results}\label{sec:expri}
 We implement our algorithms \footnote{Available at  \url{https://github.com/YunjunBai/Inv\_DHA}.} in Matlab, based upon the interval data-structure in CORA \cite{Althoff2016a}.  %and will 
%evaluate our approach by two examples to illustrate its effectiveness and efficiency in this section. 
%For the nonlinear system example, the implementation is build on  the linearization technique as in Section \ref{subsec:nonlinear}. 
We adopt  the discretization  parameters  from \cite{Althoff2016a} and \cite{Ghafli2020} for the two examples, respectively. 
All experiments are  performed on an Intel(R) Core(TM) i5-8265U CPU
 (1.60GHz) with 8GB RAM. 

 \begin{figure*}[!htbp]
    \centering
    \subfigure[]{
        \label{fig:filter_a}
    \includegraphics[width=4.5cm, height=3cm]{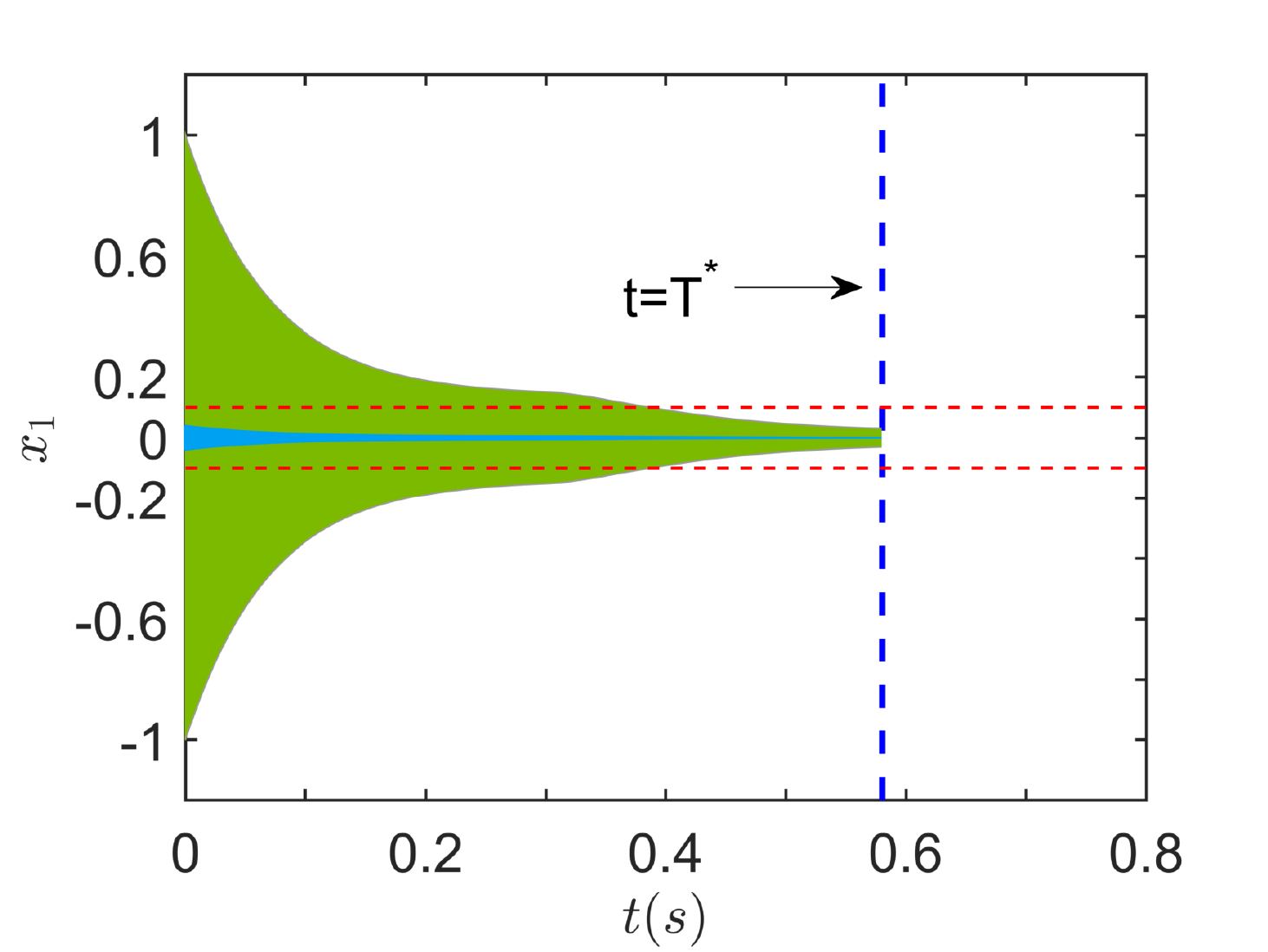}
    }
   \hspace{-7mm}
    \subfigure[]{
        \label{fig:filter_b}
    \includegraphics[width=4.5cm,height=3cm]{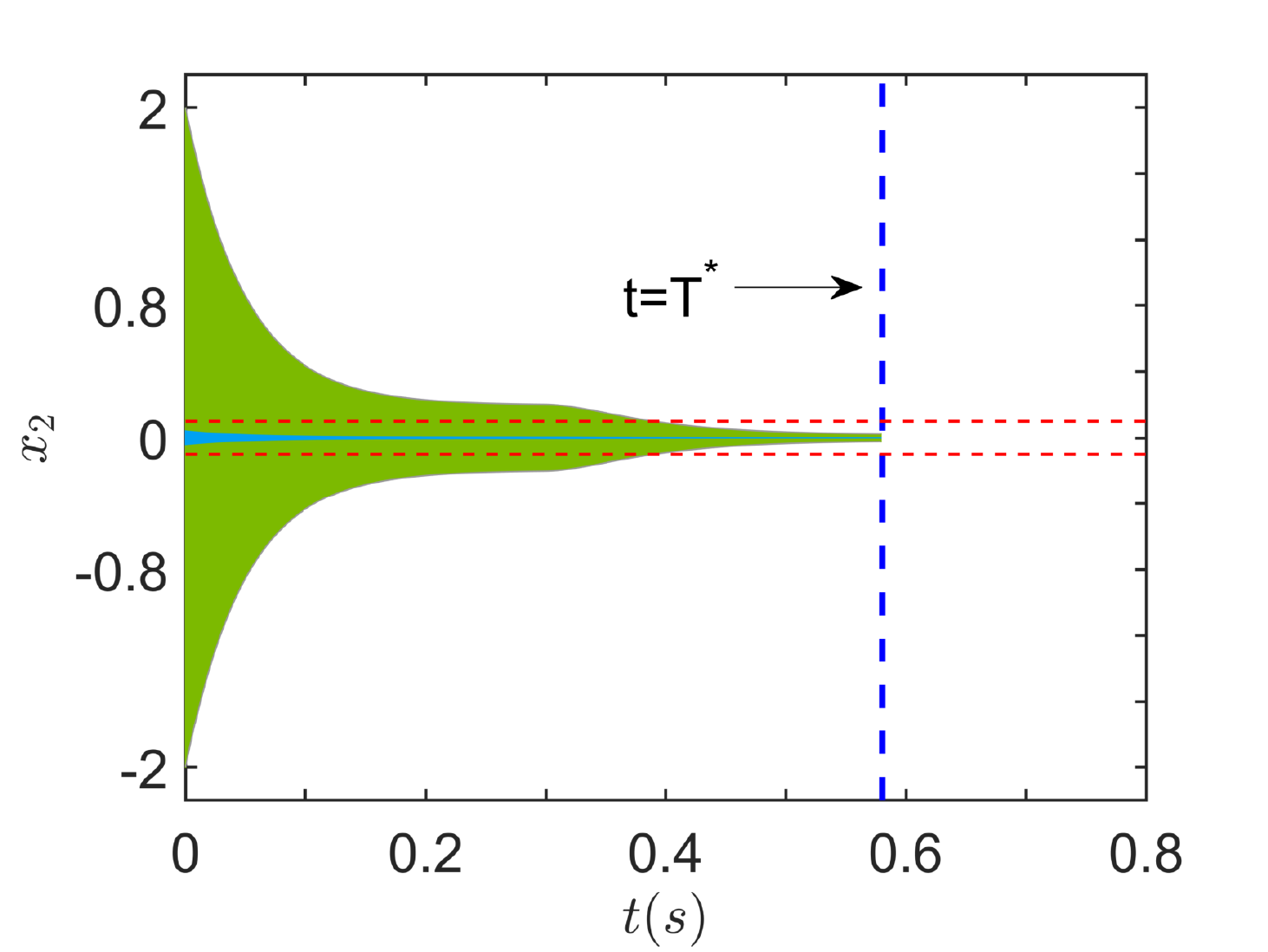}
    }
   \hspace{-7mm}
    \subfigure[ ]{
        \label{fig:filter_c}
    \includegraphics[width=4.5cm, height=3cm]{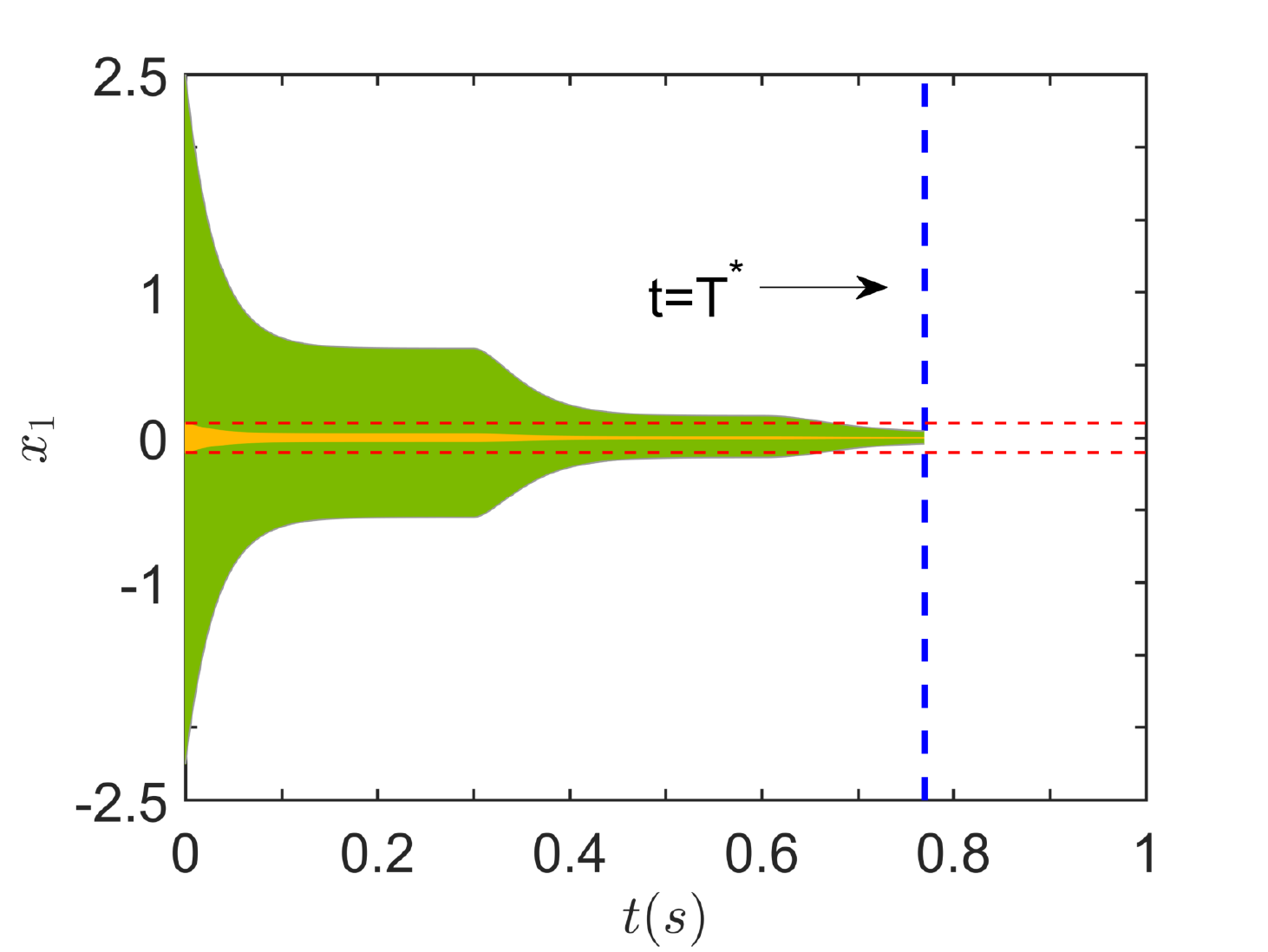}
    }
    \hspace{-7mm}
    \subfigure[]{
        \label{fig:filter_d}
    \includegraphics[width=4.5cm,height=3cm]{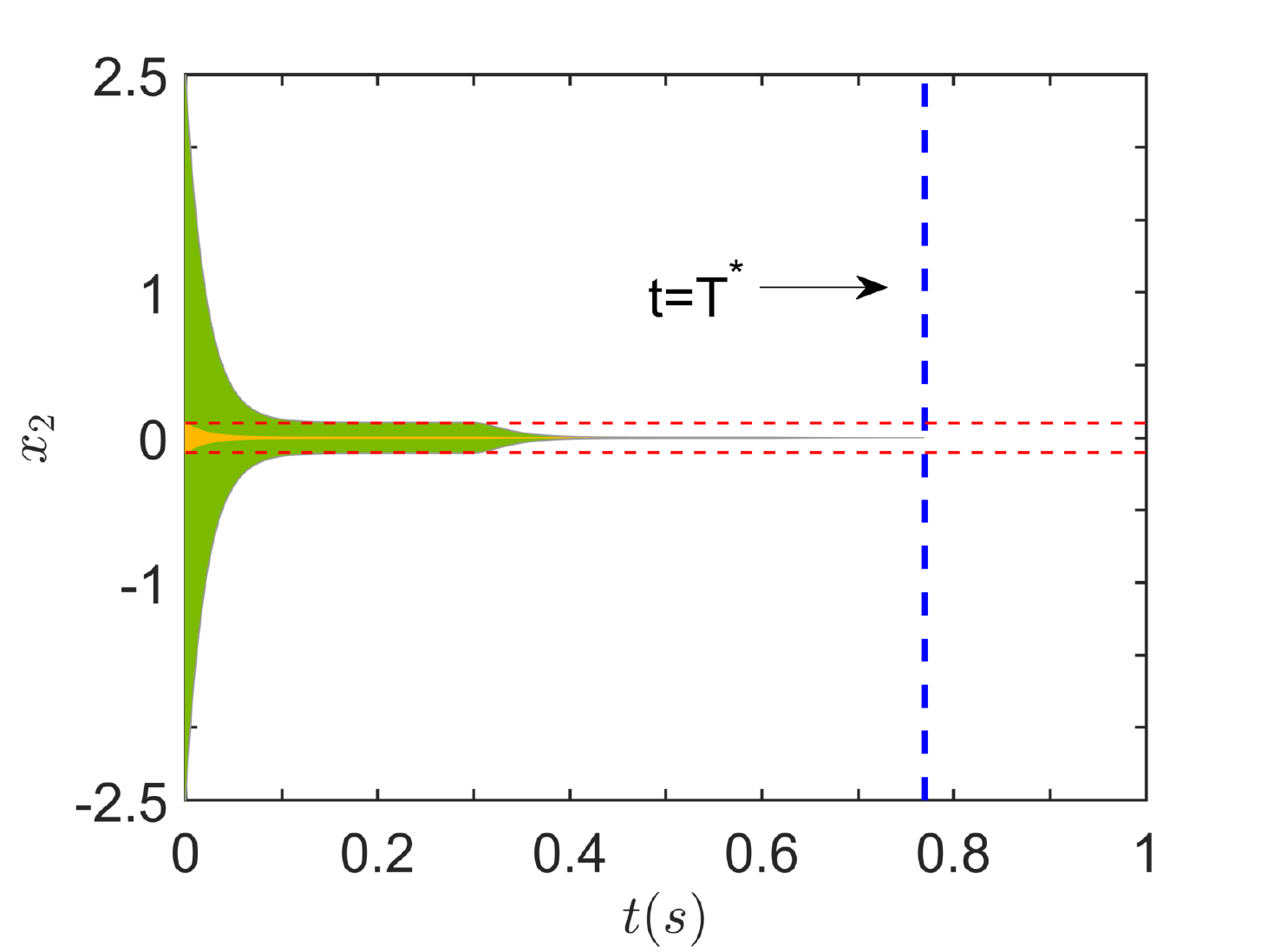}
    }
  
    \caption{In the low-pass filter system, the over-approximation of the reachable set of mode  $q_1$
    is shown in (a)\&(b), and the one of mode $q_2$ is shown in (c)\&(d). 
    All trajectories, marked with blue for mode $q_1$ (yellow for mode $q_2$), starting from the states contained in the first ball $\mathfrak{B}(\dfrac{w_{max}}{ \delta})$, are always enclosed in the second ball $\mathfrak{B}(\dfrac{w_{max}}{ \eta})$ denoted by  two red dashed lines.}
 \end{figure*}
 
 \begin{figure}[ht!]
     \centering
     \subfigure[$e_1$]{
        \label{fig:low1back}
    \includegraphics[width=4cm, height=3cm]{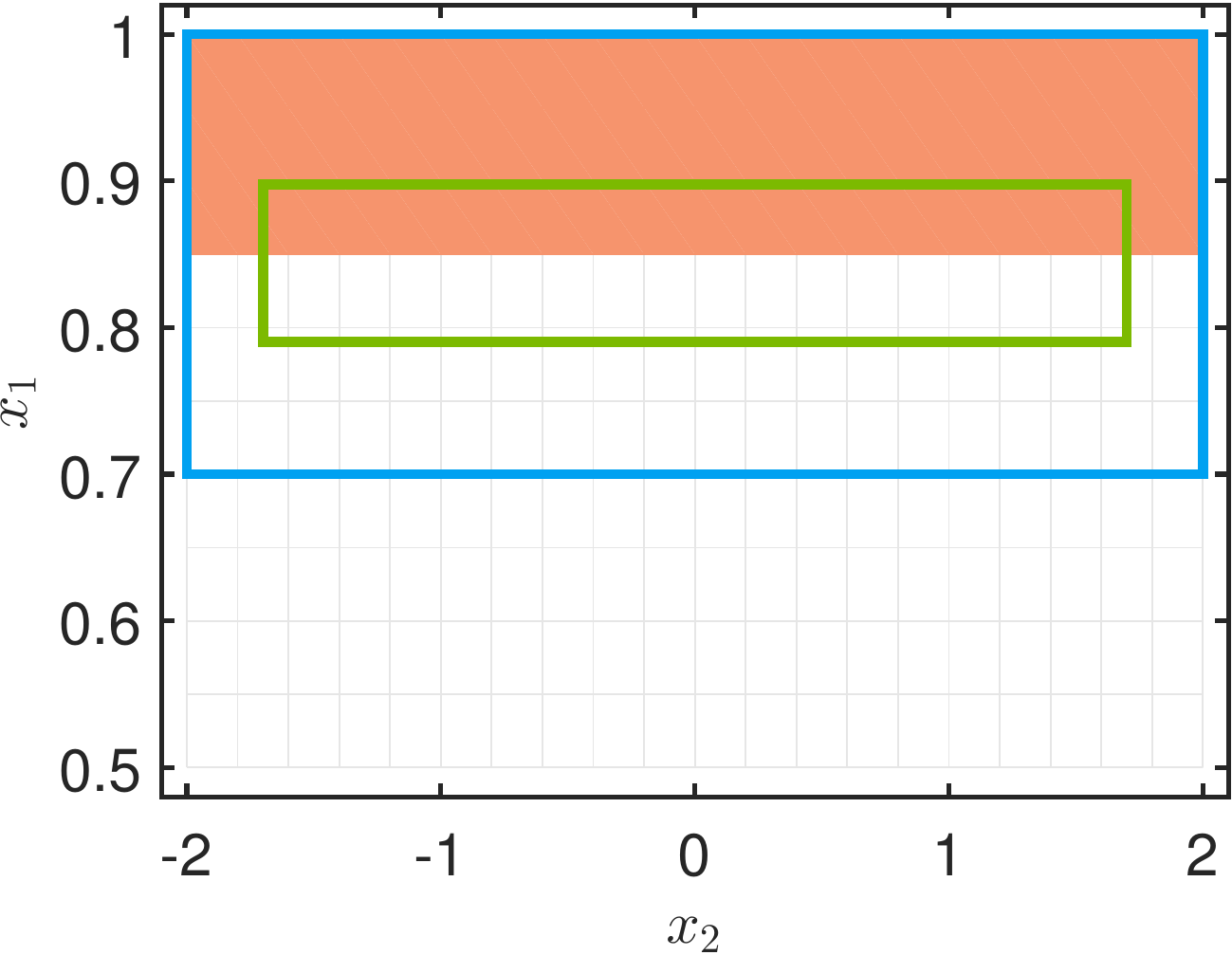}
    }
     \hspace{-3mm}
    \subfigure[$e_2$]{
        \label{fig:low2back}
    \includegraphics[width=4cm, height=3cm]{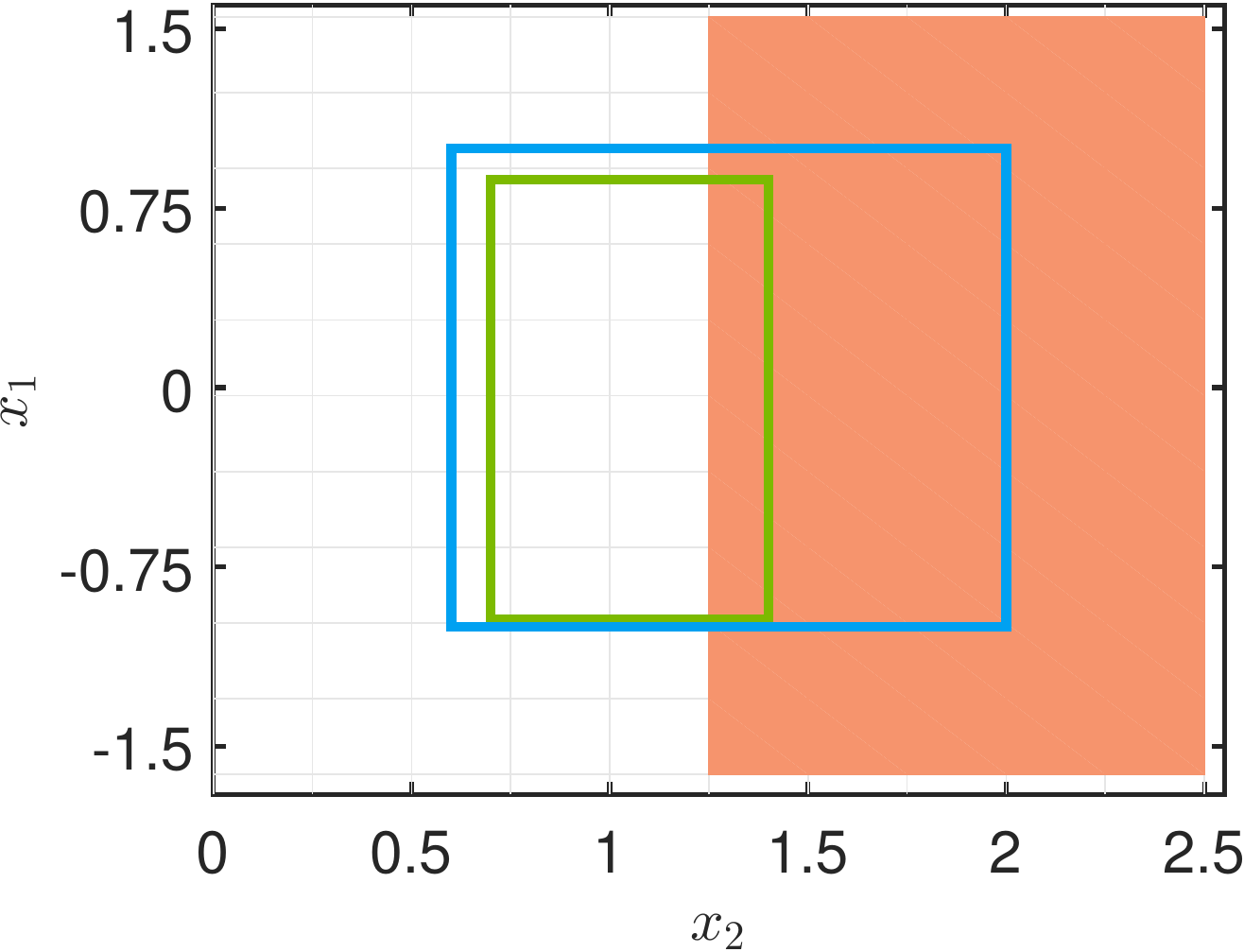}
    }
     \caption{The synthesized switching controller on the edge $e_1$ and $e_2$ of the low-pass filter system.  $\widetilde{G}$ is indicated by the blue box, and  $G^*$ is indicated by the red region. The green box stands for the forward reachable set in $0.01$s from  $G^*(e_1)$ ($0.02$s from $G^*(e_2)$). }
     \label{fig:low-pass_back}
 \end{figure}
 
\subsection{Low-pass Filter System}\label{case:1}

We first consider a low-pass filter system with delays, adapted from  CORA \cite{Althoff2016a}. 
It includes two first order low-pass filters $q_1$ and $q_2$,  represented by 
\begin{equation*}
q_1: \left\{
    \begin{array}{lr} 
        \left\{
        \begin{array}{lr}
   \dot{x}_1(t)= -14.58 x_1(t) + 2 x_1(t-0.1) + 0.5\sin(t)  \\
   \dot{x}_2(t) = -20.05 x_2(t) + 2 x_2(t-0.1) + 0.5\sin(t) 
    \end{array}
    \right. & \\
    \Xi(q_1)= [-1, 1]\times [-2, 2] & \\
    I(q_1)=\mathbb{R}^2, & 
    \end{array}
\right.  
\end{equation*}
\begin{equation*}
    q_2: \left\{
        \begin{array}{lr} 
            \left\{
            \begin{array}{lr}
        \dot{x}_1(t)= -32.66 x_1(t) + 8 x_1(t-0.1) + 0.5\sin(t)  \\
        \dot{x}_2(t)= -47.25 x_2(t) + 8 x_2(t-0.1) + 0.5\sin(t) 
            \end{array}
        \right. & \\
        
        \Xi(q_2)= [-2.25, 2.5]\times [-2.5, 2.5] & \\
        I(q_2)=\mathbb{R}^2. & 
        \end{array}
    \right.
\end{equation*}
There are two discrete transitions $e_1=(q_1, q_2)$ and $e_2=(q_2, q_1)$  between $q_1$ and $q_2$, and the  corresponding 
guard conditions  are $G(e_1)=\{(x_1, x_2)\in \mathbb{R}^2 \mid x_1 \geq 0.7 \}$,  $ G(e_2)=\{(x_1, x_2)\in \mathbb{R}^2 \mid x_2 \geq 0.6 \}$. Reset functions are identity mappings.
Moreover, both discrete transitions are taken with delays $D(e_1)=0.02$ and $D(e_2)=0.02$, respectively.
The safety requirement is  $\mathcal{S}=\{ (x_1, x_2)\in \mathbb{R}^2 \mid -2.7 \leq x_1 \leq 2.7 \wedge -2.6 \leq x_2 \leq 2.6\}$.

\begin{figure*}
    \centering
    \subfigure[]{
        \label{fig:precator1}
    \includegraphics[width=4.5cm, height=3cm]{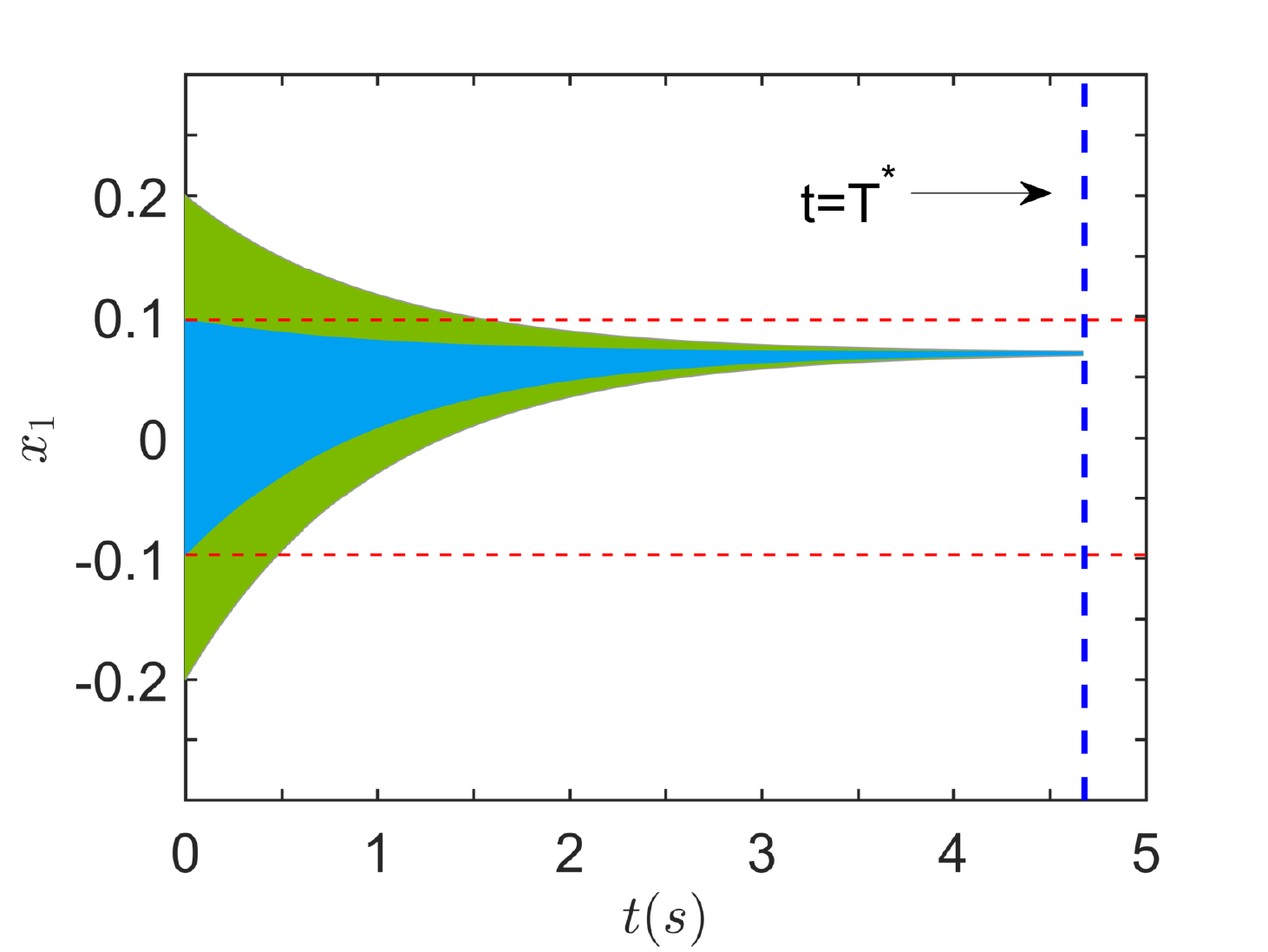}
    }
    \hspace{-7mm}
    \subfigure[]{
        \label{fig:precator2}
    \includegraphics[width=4.5cm,height=3cm]{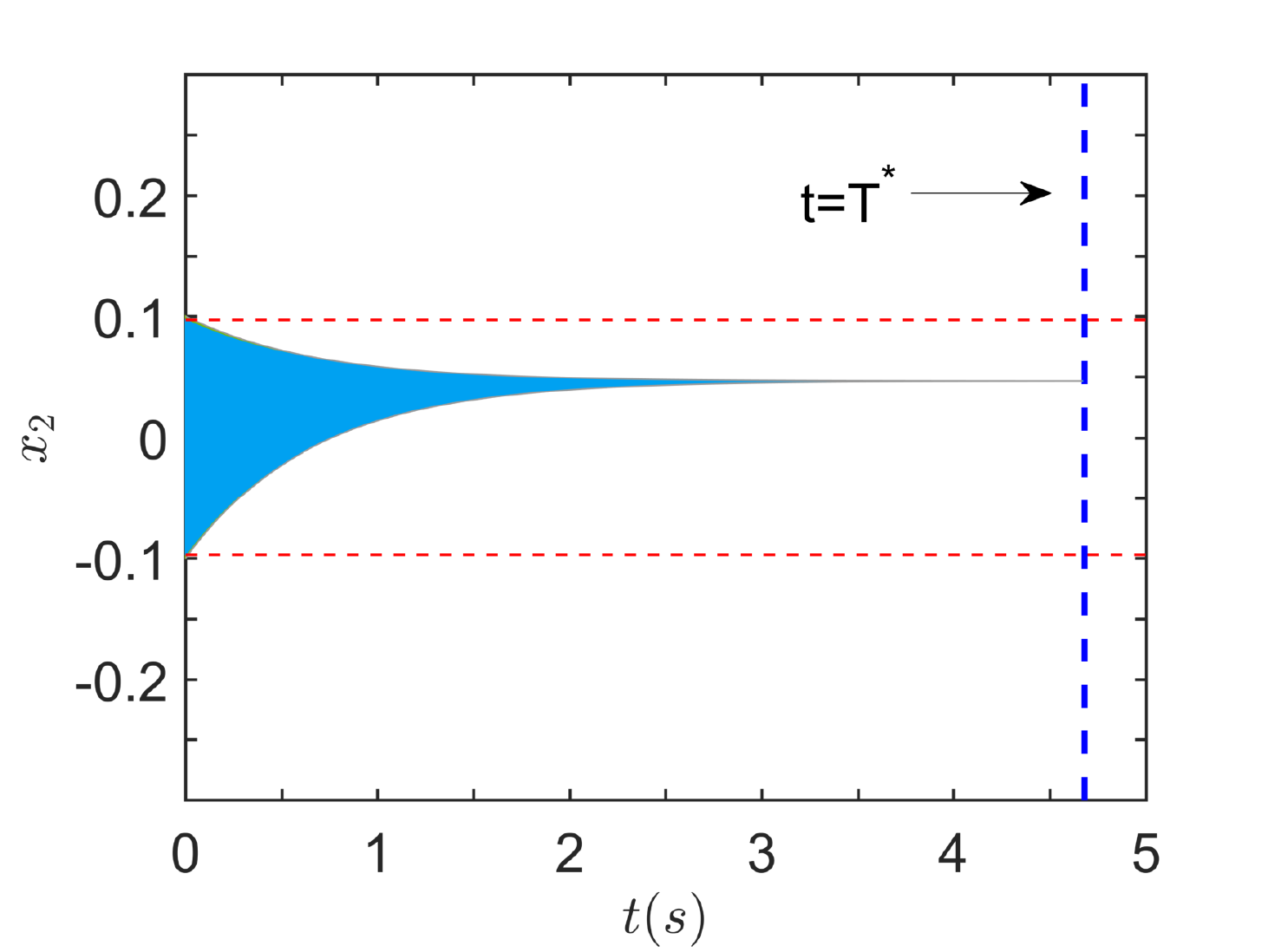}
    }
    \hspace{-7mm}
    \subfigure[]{
        \label{fig:precator3}
    \includegraphics[width=4.5cm, height=3cm]{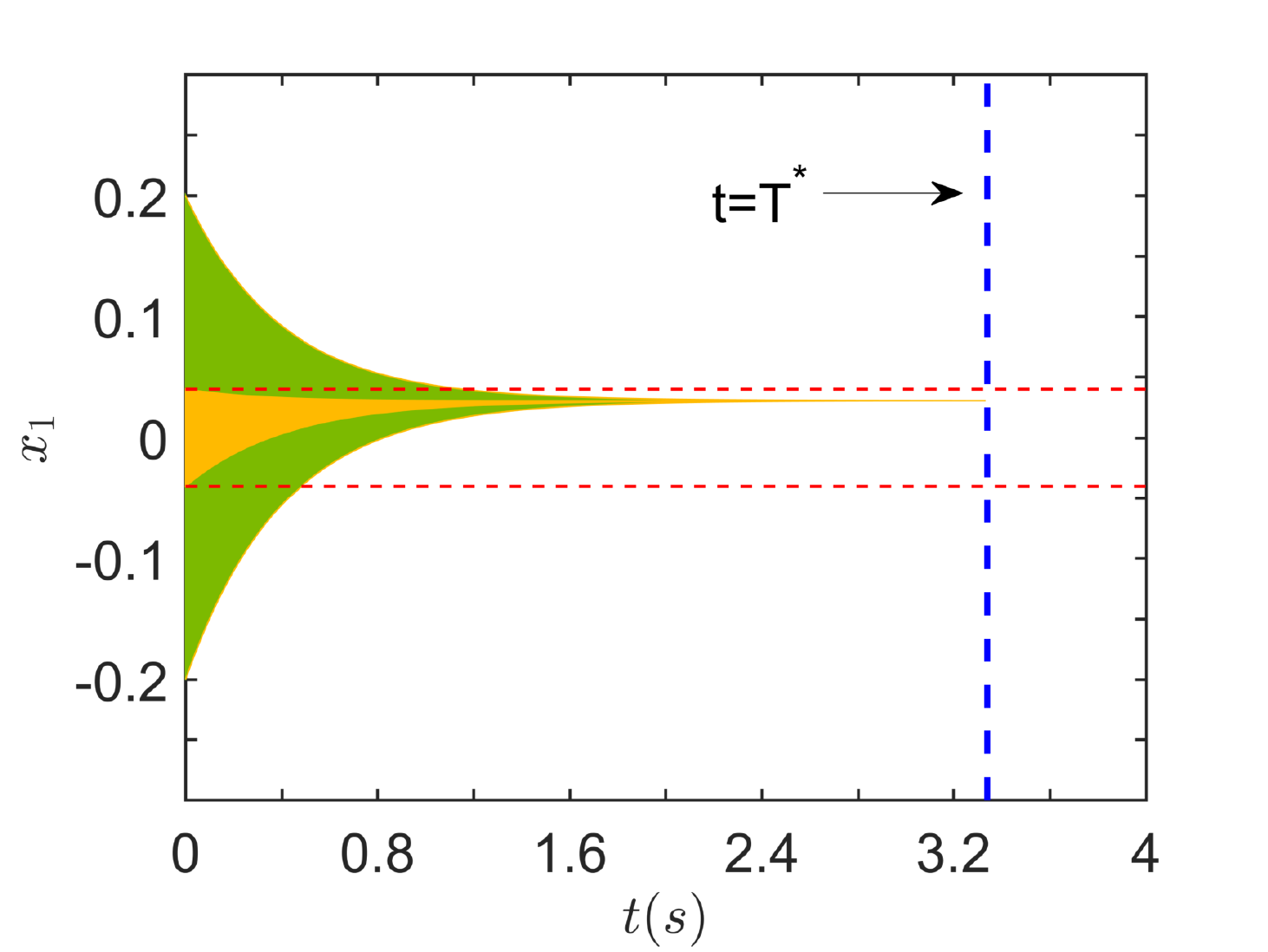}
    }
    \hspace{-7mm}
    \subfigure[]{
        \label{fig:precator4}
    \includegraphics[width=4.5cm,height=3cm]{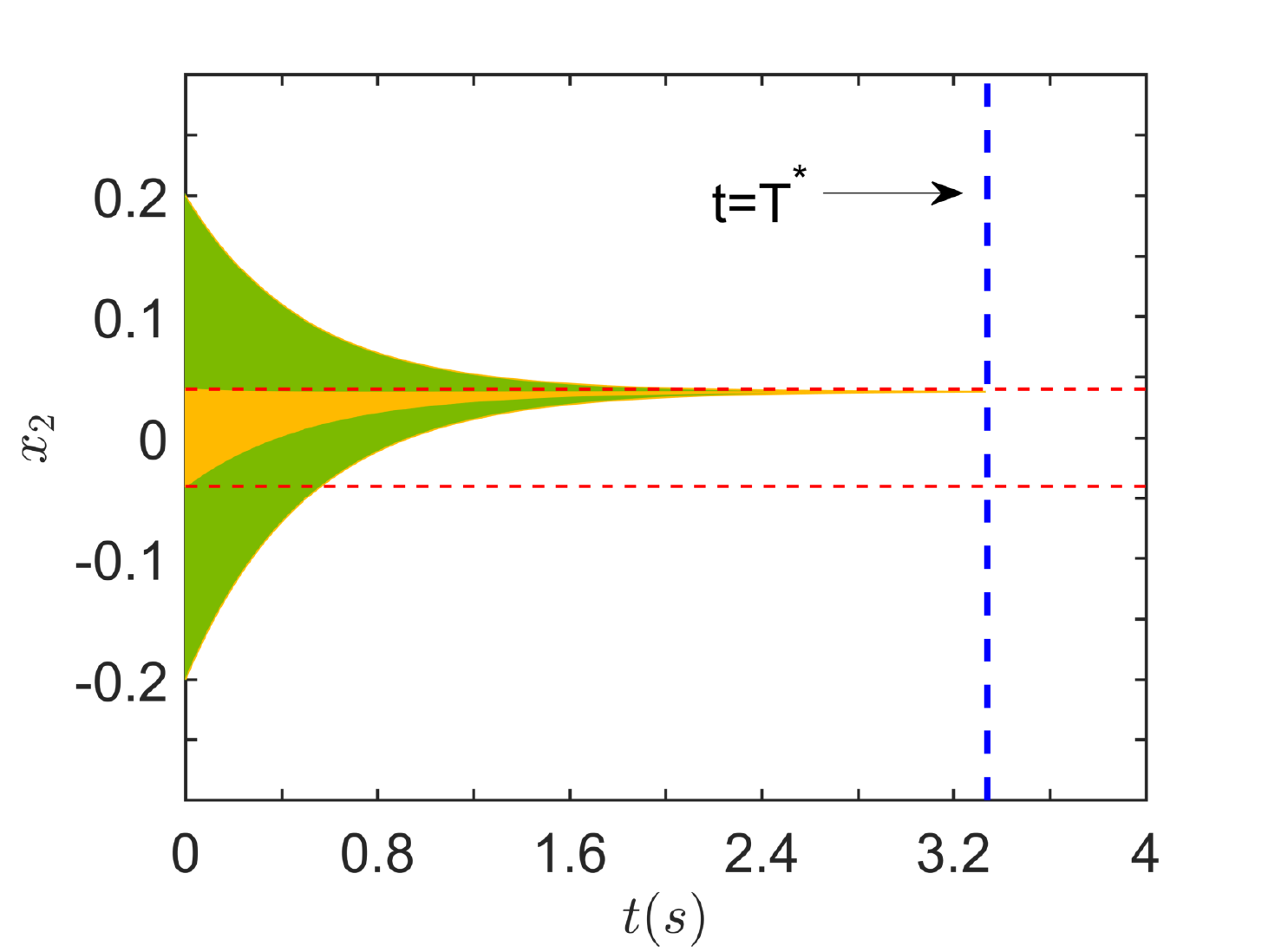} \vspace*{-2mm} 
    }
    \caption{ 
    In the predator-prey populations system, the over-approximation of the reachable set of mode  $q_1$
    is shown in (a)\&(b), and the one of mode $q_2$ is shown in  (c)\&(d). 
    All trajectories, marked with blue for mode $q_1$ (yellow for mode $q_2$), starting from the states contained in the first ball $\mathfrak{B}( \dfrac{\mathcal{G}}{ \delta})$,  are always enclosed in the second ball $\mathfrak{B}(\dfrac{\mathcal{G}}{ \eta})$ denoted by  two red dashed lines.}  
 \end{figure*}
 \begin{figure}
     \centering
     \subfigure[$e_1$]{
        \label{fig:precator1back}
    \includegraphics[width=4cm, height=3cm]{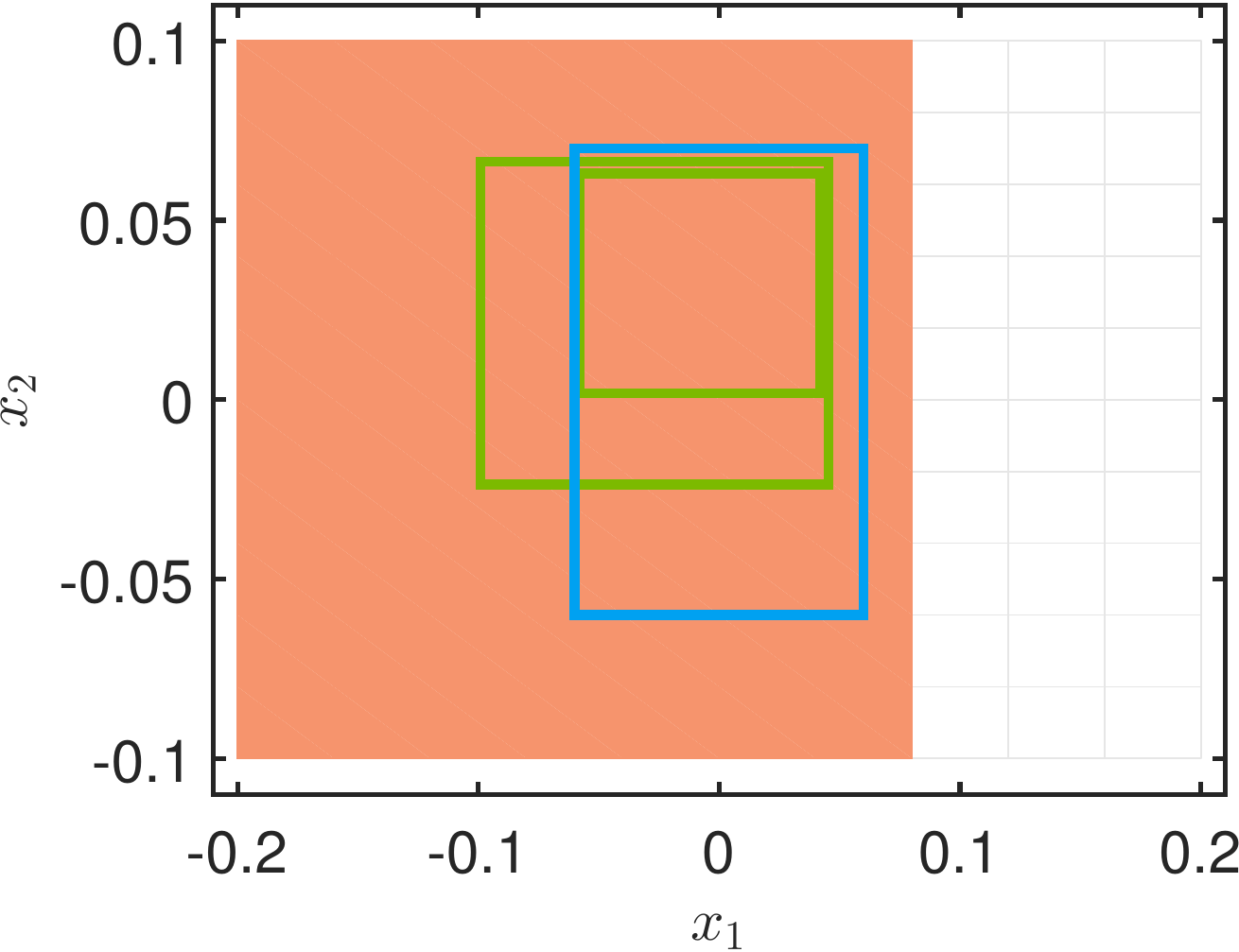}
    }
     \hspace{-3mm}
    \subfigure[$e_2$]{
        \label{fig:precator2back}
    \includegraphics[width=4cm, height=3cm]{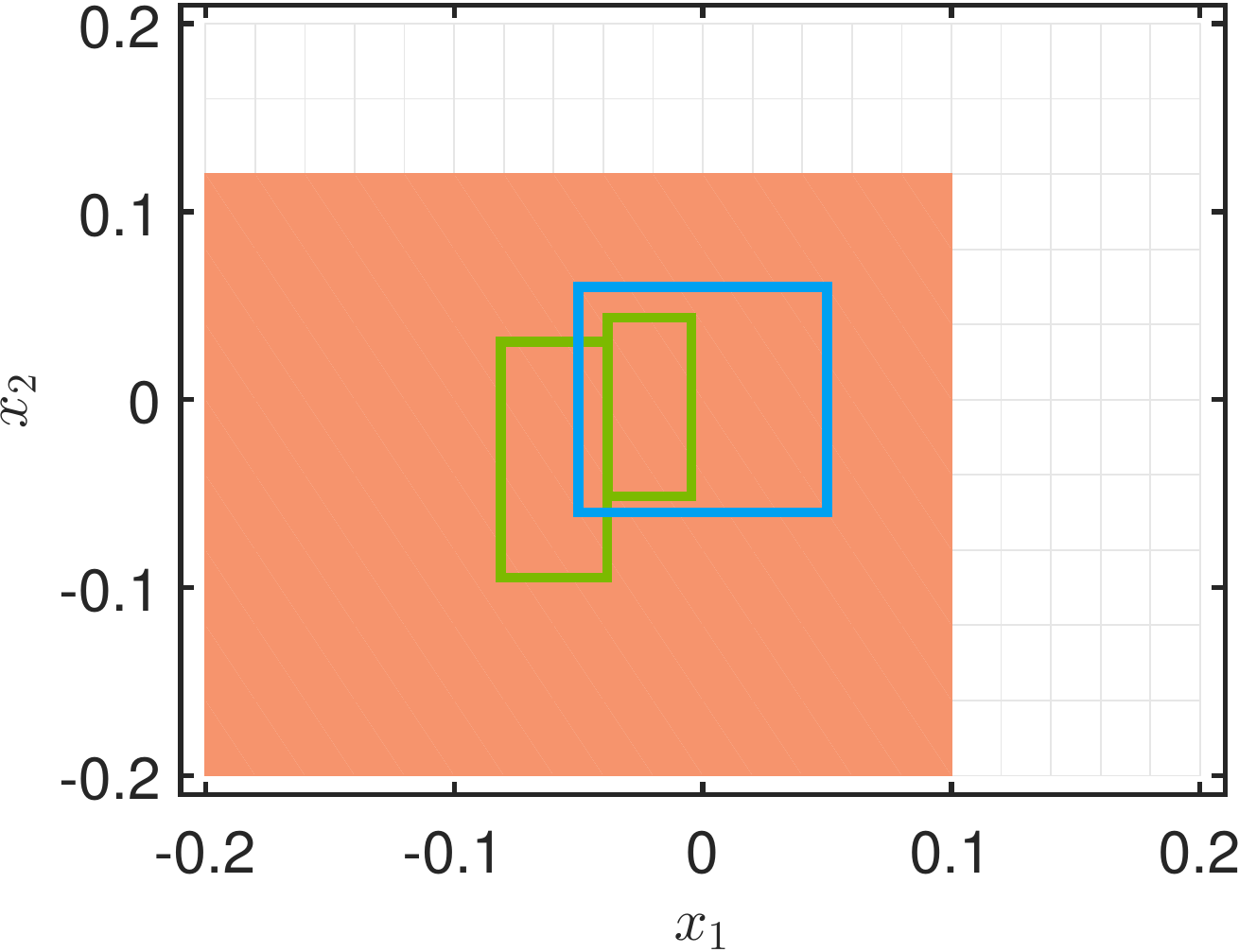} \vspace*{-2mm} 
    }
     \caption{The synthesized switching controllers on the edge $e_1$ and $e_2$ of the predator-prey population system.  $\widetilde{G}$ is indicated by the blue box, and  $G^*$  by the red region. The green boxes stand for the over-approximation of the reachable sets in $0.5$s and $0.8$s   from  $G^*(e_1)$ (in $0.3$s and $0.5$s from $G^*(e_2)$), respectively.}
     \label{fig:predator_back}
 \end{figure}
%We apply our algorithm for two modes. 
For mode $q_1$,
$M_{q_1}=\begin{bmatrix}
    -12.58 & 0 \\
    0 & -18.05
\end{bmatrix}$ is obviously a Metzler  matrix satisfying the two properties listed in Proposition~\ref{pro:Metzler}. By Theorem \ref{thm:reduceT}, 
the differential invariant synthesis problem is reduced to a $T^*_{q_1}$-differential invariant synthesis problem, 
where $T^*_{q_1}=0.5782s$ is computed with the parameters listed in Table \ref{tab:lowpass}.
Similarly, for mode $q_2$, 
$M_{q_2}=\begin{bmatrix}
    -4.66 & 0 \\
    0 & -39.25
\end{bmatrix}$ is also a Metzler  matrix satisfying the two properties listed in Proposition~\ref{pro:Metzler}.
 $T^*_{q_2}=0.7605s$ is computed with the  parameters listed in Table \ref{tab:lowpass}.
% : $\zeta_{q_1}= \begin{bmatrix}
%     0.1  \\ 
%     0.1  
% \end{bmatrix}$,  $\beta_{q_1}= 10$, $\eta_{q_1}=1.258$, $\gamma_{q_1}=5.16416$, $\delta_{q_1}=12.58$.
%The invariant for mode $q_1$ is $I^*(q_1)=[-1, 1]\times [-2, 2]$.
 The computed over-approximation of the reachable set within $T^*_{q_1}$ for mode $q_1$ using our approach 
 is given in Fig. \ref{fig:filter_a} and \ref{fig:filter_b}. 
% : $\zeta_{q_2}= \begin{bmatrix}
%     0.1  \\ 
%     0.1  
% \end{bmatrix}$,  $\beta_{q_2}= 10$, $\eta_{q_2}=0.466$, $\gamma_{q_2}=1.19851$, $\delta_{q_2}=4.66$.
%The invariant for mode $q_2$ is $I^*(q_2)=[-2.25, 2.5]\times [-2.5, 2.5]$.
The over-approximation of the reachable set in $T^*_{q_2}$ for mode $q_2$  is shown 
in Figure \ref{fig:filter_c} and \ref{fig:filter_d} with our approach.  Clearly,  the delay dynamical system in this mode  satisfies the ball convergence property.  
%We get the invariants  $I^*(q_1)=\{(x_1,x_2)\in \mathbb{R}^2 \mid -1\leq x_1 \leq 1 \wedge -2 \leq x_2 \leq 2 \}$ and $I^*(q_2)= \{(x_1,x_2)\in \mathbb{R}^2 \mid -2.25\leq x_1\leq 2.5 \wedge -2.5\leq x_2\leq 2.5 \}$.
The guard conditions without discrete delays are 
$\widetilde{G}(e_1)=\{(x_1, x_2)\in \mathbb{R}^2 \mid  0.7\leq x_1\leq 1 \wedge -2\leq x_2 \leq 2\}$ and $\widetilde{G}(e_2)=\{(x_1, x_2)\in \mathbb{R}^2 \mid  -1 \leq x_1 \leq 1 \wedge 0.6 \leq x_2 \leq 2\}$. Finally,  applying Algorithm \ref{alg:backreachset}, 
the strengthened guard conditions $G^*(e_1)$ and  $G^*(e_2)$, that can guarantee the safety, 
are computed as showed in Fig. \ref{fig:low-pass_back}.
%The guard conditionhas similar results. 

\subsection{Predator-prey Populations} \label{case:2}

We consider a  nonlinear predator-prey population dynamics under seasonal succession: a hybrid Lotka–Volterra competition model with delays adapted from \cite{Yanqing2017}. Two modes for two seasons are modelled as follows:
%\begin{align*}
\begin{equation*}
q_1:\left\{
    \begin{array}{lr} 
        \left \{
        \begin{array}{lr}
     \dot{x}_1(t)  = - x_1(t)(1-\frac{x_1(t)}{100}) + 0.2 d_1 +w_{11}(t) \\
     \dot{x}_2(t)  =-1.5 x_2(t)(1-\frac{x_2(t)}{100} )+ 0.1 d_2 +w_{12}(t)
    \end{array}
    \right. & \\
    \Xi(q_1)= [-0.2, 0.2]\times [-0.1, 0.1] & \\
    I(q_1)=\mathbb{R}^2. & 
    \end{array}
\right.\\ 
%\vspace{3mm}
\end{equation*}
\begin{equation*}
q_2: \left\{
        \begin{array}{lr} 
             \left\{
            \begin{array}{lr}
       \dot{x}_1(t)  = -2.5 x_1(t) + 0.2 x_1(t-0.01)(1+ x_2(t)) + w_{21}(t) \\
       \dot{x}_2(t)  = -2 x_2(t) + 0.15 x_2(t-0.01)(1+ x_2(t)) + w_{22}(t)
            \end{array}
        \right. & \\
        \Xi(q_2)= [-0.2, 0.2]\times [-0.2, 0.2] & \\
        I(q_2)=\mathbb{R}^2.& 
        \end{array}
    \right.
\end{equation*}
%\end{align*}
where $q_1$ and $q_2$ represent two seasons, $d_1=x_1(t-0.1)( 1+x_1(t))$, $d_2=x_2(t-0.1)( 1+x_2(t))$, 
$x_1$ is the number of prey  (for example, rabbits),
$x_2$ is the number of some predator  (for example, foxes),
$w_{ij}(t)= 0.07 \cos{2t}$ ($i, j = 1,2$) denote the perturbations. 
The real coefficients describe the interaction of the two species, the intrinsic growth rate and the environment capacity of the population in season $i$, respectively. 
There are two discrete transitions $e_1=(q_1, q_2)$ and $e_2=(q_2, q_1)$  between mode $q_1$ and mode $q_2$, and their corresponding 
guard conditions initially  are $G(e_1)=\{(x_1, x_2)\in \mathbb{R}^2 \mid -0.06\leq x_1\leq 0.06 \wedge -0.06 \leq x_2 \leq 0.07 \}$,  $ G(e_2)=\{(x_1, x_2)\in \mathbb{R}^2 \mid -0.05\leq x_1\leq 0.05 \wedge -0.06 \leq x_2 \leq 0.06 \}$.
Reset functions are identity mappings.
Moreover,  both discrete transitions are taken with delays $D(e_1)=1$ and $D(e_2)=0.55$, respectively.
The safety requirement is $\mathcal{S}= \{ (x_1, x_2)\in \mathbb{R}^2 \mid -0.20 \leq x_1 \leq 0.21 \wedge -0.21 \leq x_2 \leq 0.22 \}$.

By linearizing mode $q_1$, we have:
\begin{equation*}
\label{eq:pre_m1}
        \left\{
        \begin{array}{lr}
   \dot{x}_1(t) = - x_1(t)+ 0.2 x_1(x-0.1) +w_{11}(t) \\
   \dot{x}_2(t) = -1.5 x_2(t) +0.1 x_2(x-0.1)  +w_{12}(t)
    \end{array}
    \right..
\end{equation*}
Clearly, $M_{q_1}= \begin{bmatrix}
    -0.8 & 0 \\
    0 & -1.4
\end{bmatrix}$ is  a Metzler  matrix satisfying the two properties listed in Proposition~\ref{pro:Metzler}.
By Theorems \ref{thm:expenential_nonlinear} and \ref{thm:reduceT_nonlinear}, the differential invariant synthesis problem  for 
mode $q_1$ is reduce to a $T^*_{q_1}$-differential invariant synthesis problem, where  
$T^*_{q_1}= 4.6825$s is computed using our approach with the parameters listed in Table \ref{tab:prepre}.

Here it is noteworthy that $\iota = 0.2$, covering the entire initial set. 
Similarly, for mode $q_2$, the linearization of its dynamics is :
\begin{equation*}
               \left\{
            \begin{array}{lr}
       \dot{x}_1(t)= -2.5 x_1(t) +0.2 x_1(t-0.01)+ w_{21}(t) \\
       \dot{x}_2(t)= -2 x_2(t) + 0.15 x_2(t-0.01)+ w_{22}(t)
            \end{array}
        \right..
\end{equation*}
Clearly, $M_{q_2}= \begin{bmatrix}
    -2.3 & 0 \\
    0 & -1.85
\end{bmatrix}$ is also a Metzler  matrix satisfying the two properties listed in Proposition~\ref{pro:Metzler}. With the parameters listed in Table \ref{tab:prepre}, a bounded time  $T^*_{q_2} = 3.3326$s is computed.
%The invariant for mode $q_1$ is $I^*(q_1)=[-0.2, 0.2]\times [-0.1, 0.1]$.
The computed over-approximation of the reachable set within $t\geq T^*_{q_1}$ for mode $q_1$ is 
showed in Fig. \ref{fig:precator1}\&\ref{fig:precator2}. 
  %with our approach. 
%The invariant for mode $q_2$ is $I^*(q_2)=[-0.2, 0.2]\times [-0.2, 0.2]$.
And the computed over-approximation of the reachable set within $t\geq T^*_{q_2}$ for mode $q_2$ are shown in Fig. \ref{fig:precator3}\&\ref{fig:precator4} using our approach. 
%The invariants are obtained as $I^*(q_1)=[-0.2, 0.2]\times [-0.1, 0.1]$ and $I^*(q_2)=[-0.2, 0.2]\times [-0.2, 0.2]$.
The  guard conditions without discrete delays are computed  as  
$\widetilde{G}(e_1)=\{(x_1, x_2)\in \mathbb{R}^2 \mid  -0.06\leq x_1\leq 0.06 \wedge -0.06 \leq x_2 \leq 0.07\}$ and $\widetilde{G}(e_2)=\{(x_1, x_2)\in \mathbb{R}^2 \mid  -0.05\leq x_1\leq 0.05 \wedge -0.06 \leq x_2 \leq 0.06 \}$. Finally,  applying Algorithm \ref{alg:backreachset}, 
the strengthened guard conditions  $G^*(e_1)$ and  $G^*(e_2)$, which can guarantee 
the safety requirement, are obtained as shown in Fig. \ref{fig:predator_back}.
%The guard condition has similar results. 
\oomit{
\begin{table}[t]
    \centering
    \begin{tabular}{|c|c|c|c|c|c |c|}
    \hline
      Mode   & $\epsilon$ &$\zeta$ & $\beta$ & $\eta$ & $\gamma$ & $\delta$ \\ 
     \hline
      $q_1$  & 0.001  &  $\begin{bmatrix}
            1  \\ 
            1  
        \end{bmatrix}$     & 1    & 12.58     &5.16416  & 12.58 \\
             \hline
              $q_2$ &  0.001  &  $\begin{bmatrix}
            1  \\ 
            1  
        \end{bmatrix}$   &1   &  24.66   &4.22703   &24.66   \\
             \hline
        \end{tabular}
\captionsetup{justification=centering}
  \caption{The value of parameters  for the low-pass filter system}
    \label{tab:lowpass}
\end{table}
}

\section{Conclusion}\label{sec:conclu}
We introduced the notion of delay hybrid automata (dHA) in order to 
 model continuous 
delays and discrete delays in cyber-physical systems uniformly. Based on dHA, 
we proposed an approach on how to automatically synthesize a switching controller 
for a delay hybrid system with perturbations against 
a given safety requirement. To the end, we presented a new approach for over-approximating a nonlinear DDE with 
perturbation using ball-convergence analysis based on Metzler matrix. 
Two case studies were provided to indicate the effectiveness and efficiency of the proposed approach. 

For future work, it deserves to investigate how to synthesize a switching controller 
for a dHA against much richer properties defined e.g. by 
signal temporal logic \cite{STL} or metric temporal logic \cite{MTL}. In addition, it is interesting to consider 
 our method to deal with more general forms of DDEs. 
 Besides, it is a challenge how to guarantee the completeness of our approach, which essentially corresponds to a long-standing problem  on how to compute reachable sets of hybrid systems in the infinite time horizon.

\section*{Acknowledgements}
We thank Prof. Martin Fr\"{a}nzle, Dr. Mingshuai Chen and Mr. Shenghua Feng for fruitful discussions on this topic, and also thank the anonymous referees for their constructive comments and criticisms that improve this paper very much. 

The first, third and sixth authors are partly funded by NSFC-61625206 and NSFC-61732001, 
the second author is partly funded by NSFC-61902284, the fourth author is partly funded by NSFC-61732001, and the fifth author is partly funded by NSFC-61872341, NSFC-61836005 and the CAS Pioneer Hundred Talents Program.

\oomit{
\begin{table}
    \centering
    \begin{tabular}{|c|c|c|c|c|c |c|c|c|c|}
    \hline
      Mode   & $\epsilon$ &$\zeta$ & $\beta$ & $\eta$ & $\gamma$ & $\delta$   & $g_{max}$  & $\mathcal{G} $ & $\iota$   \\ 
     \hline
      $q_1$  & $10^{-4}$ &    $\begin{bmatrix}
            1  \\ 
            1  
        \end{bmatrix}$     & 1   & 0.8    &0.625  & 0.8 & 0.008    & 0.078   & 0.2  \\
             \hline
              $q_2$ &  $10^{-4}$   & $\begin{bmatrix}
            1  \\ 
          1  
        \end{bmatrix}$   &1   &  1.85  &0.88  & 1.85 & 0.005   & 0.075  &0.2 \\
             \hline
        \end{tabular}
 \captionsetup{justification=centering}
  \caption{The value of parameters\\ for the predator-prey populations system}
    \label{tab:prepre}
\end{table}}

\bibliographystyle{ACM-Reference-Format}
\bibliography{reportbib}

\end{document}